\documentclass{article}

\usepackage[preprint]{neurips_2026} 

\usepackage[utf8]{inputenc} 
\usepackage[T1]{fontenc}    
\usepackage{hyperref}       
\usepackage{footnotehyper}
\usepackage{url}            
\usepackage{booktabs}       
\usepackage{amsfonts}       
\usepackage{nicefrac}       
\usepackage{xcolor}         

\usepackage{dsfont} 
\usepackage{multicol}
\usepackage{multirow}   
\usepackage{makecell}   
\usepackage{array}      
\usepackage{lscape} 
\usepackage{enumitem} 
\usepackage{placeins} 
\usepackage{array} 
\usepackage{setspace} 
\usepackage{rotating}

\usepackage{tikz}
\usetikzlibrary{positioning, trees}
\newdimen\nodeDist
\usetikzlibrary{shapes.geometric, arrows}
\definecolor{gold}{rgb}{0.85,0.65,0.13}

\usepackage{algorithm, algpseudocode}

\usepackage{amsthm}
\usepackage{amsmath}
\usepackage{amssymb}
\usepackage{relsize} 
\usepackage{bm}
\usepackage{mathtools} 

\usepackage{ee}

\usepackage[acronym]{glossaries} 
\glsdisablehyper 

\newacronym{2sls}{2SLS}{Two-Stage Least Squares}
\newacronym{bart}{BART}{Bayesian Additive Regression Trees}
\newacronym{bcf}{BCF}{Bayesian Causal Forest}
\newacronym{bcf-iv}{BCF-IV}{Bayesian Instrumental Variable Causal Forest}
\newacronym{cace}{CACE}{Complier Average Causal Effect}
\newacronym{cart}{CART}{Classification and Regression Trees}
\newacronym{ccace}{cCACE}{conditional Complier Average Causal Effect}
\newacronym{citt}{cITT}{conditional Intention-To-Treat}
\newacronym{DR}{DR}{Detection Rate}
\newacronym{FDR}{FDR}{False Detection Rate}
\newacronym{itt}{ITT}{Intention-to-Treat}
\newacronym{iv}{IV}{Instrumental Variables}
\newacronym{sbcf}{SBCF}{Shrinkage Bayesian Causal Forest}
\newacronym{sbcf-iv}{SBCF-IV}{Shrinkage Bayesian Causal Forest with Instrumental Variable}
\newacronym{grf}{GRF}{generalized random forest}
\newacronym{grf-iv}{GRF-IV}{GRF-based Instrumental Variable Causal Forest}
\newacronym{SUTVA}{SUTVA}{Stable Unit Treatment Value Assumption}
\newacronym{DGP}{DGP}{data-generating process}
\newacronym{IRA}{IRA}{Individual Retirement Account}

\theoremstyle{plain}
\newtheorem{cor}{Corollary}[section]
\newtheorem{prop}{Proposition}[section]

\newtheorem{thm}{Theorem}[section]

\theoremstyle{definition}

\newtheorem{defn}{Definition}[section]
\newtheorem{assump}{Assumption}[section]

\hypersetup{colorlinks = true,
            citecolor  = blue,
            linkcolor  = blue,
            urlcolor   = blue
            }

\newenvironment{notes}
  {\begin{minipage}{\linewidth}\smallskip\footnotesize\emph{Notes:}}
  {\end{minipage}}

\newenvironment{keywords}
{\bgroup\leftskip 20pt\rightskip 20pt \small\noindent{\bfseries
Keywords:} \ignorespaces}%
{\par\egroup\vskip 0.25ex}
\newenvironment{jelclass}
{\bgroup\leftskip 20pt\rightskip 20pt \small\noindent{\bfseries
JEL classification:} \ignorespaces}%
{\par\egroup\vskip 0.25ex}

\title{Shrinkage Bayesian Causal Forest with Instrumental Variable}

\author{%
  Lennard Ma{\ss}mann\thanks{Ruhr Graduate School in Economics (RGS Econ), Research Academy Ruhr, Universitätsstr. 150, 44801 Bochum, Germany} \\
  University of Duisburg-Essen \\
  Faculty of Business Administration and Economics\\
  Universitätsstraße 12, 45117 Essen, Germany \\
  \texttt{lennard.massmann@uni-due.de} \\
  \And
  Jens Klenke \\
  University of Duisburg-Essen \\
  Faculty of Business Administration and Economics\\
  Universitätsstraße 12, 45117 Essen, Germany \\
  \texttt{jens.klenke@vwl.uni-due.de} \\
}

\begin{document}

\maketitle

\begin{abstract}
Discovering interpretable subgroups whose complier effects deviate from the average is a central goal of instrumental variable analysis under imperfect compliance, yet existing tree-based methods degrade when most covariates are irrelevant to the effect.
We propose Shrinkage Bayesian Causal Forest with Instrumental Variable (SBCF-IV) for discovering and estimating subgroups with heterogeneous Complier Average Causal Effects (CACE) in sparse high-dimensional settings. 
SBCF-IV places a sparsity-inducing Dirichlet prior on the splitting probabilities of the Bayesian Additive Regression Trees that estimate the conditional intention-to-treat and the complier share, concentrating posterior mass on the few covariates that moderate the complier effect and thereby regularizing effect estimation.
The posterior split frequencies additionally enter a downstream CART as variable-level costs that steer the partition toward relevant moderators, providing an interpretable division of the covariate space.
Monte Carlo experiments show that, as the share of irrelevant covariates grows, SBCF-IV recovers the true partition more reliably than its non-sparse predecessor BCF-IV at the tree and unit level, and retains nominal coverage where BCF-IV's intervals deteriorate. We apply the method to the Oregon Health Insurance Experiment and the 401(k) eligibility data.
\end{abstract}

\begin{keywords}
    Heterogeneous treatment effects, instrumental variables, Bayesian shrinkage, subgroup discovery
\end{keywords}

\begin{jelclass}
    C11, C14, C21, C26
\end{jelclass}


\section{Introduction}\label{ch:intro}

The heterogeneous treatment effect (HTE) literature has expanded along two complementary dimensions: estimating the conditional average treatment effect (CATE) and discovering interpretable subgroups whose effects deviate from the population average \citep{lipkovich_tutorial_2017, kunzel_metalearners_2019, dwivedi_stable_2020}. For CATE estimation, nonparametric machine learning methods have become standard, including the causal forest \citep{athey_generalized_2019}, Bayesian Additive Regression Trees (BART) \citep{hill_bayesian_2011}, \gls{bcf}  \citep{hahn_bayesian_2020, caron_shrinkage_2022}, and doubly robust meta-learners \citep{kennedy_towards_2023, semenova_debiased_2021}. In parallel, a growing strand of work has focused on the data-driven discovery of interpretable subgroups, with decision-tree-based methods \citep{athey_recursive_2016, bargagli_stoffi_causal_2020, bargagli-stoffi_heterogeneous_2022, lee_discovering_2021} being particularly prominent due to their interpretability. Ensemble-based extensions such as the causal rule ensemble \citep{bargagli-stoffi_causal_2024} and causal distillation trees \citep{huang_distilling_2025} address the shortcomings of unstable single-tree methods by aggregating decision rules across many trees, yielding more stable and expressive subgroup representations when estimating CATE. 

A limitation of recent work on HTE is that the identification for CATE rests on the assumption of regular assignment mechanisms, which is rarely defensible in observational studies with unobserved heterogeneity in treatment uptake. When a valid instrument is available, a relevant target estimand becomes the \gls{cace}, or Local Average Treatment Effect (LATE), nonparametrically identified for the subpopulation of compliers under the standard \gls{iv} assumptions \citep{imbens_identification_1994, angrist_identification_1996}. The interpretation of complier effects has been the subject of ongoing debate: because compliance status is a counterfactual quantity that is never directly observed, critics have argued that the CACE pertains to an unidentified subgroup and is therefore of limited policy relevance \citep{deaton_instruments_2010, swanson_think_2014}. In contrast, the LATE is often the most one can learn nonparametrically in the presence of unmeasured confounding without imposing restrictive effect homogeneity assumptions, and it remains informative about the underlying causal structure. Crucially, when covariates explain a substantial share of the variation in compliance, complier effects effectively coincide with conditional effects in identifiable subgroups, and the concerns about an unknown target population largely dissolve \citep{kennedy_sharp_2020}. This observation provides direct motivation for studying the \gls{ccace}. The \gls{ccace} is the \gls{cace} as a function of observed characteristics. Characterizing how complier effects vary along interpretable covariate profiles transforms the CACE from a property of an unobserved subgroup into a set of policy-relevant statements about identifiable populations.
Several methods have been developed to recover heterogeneous \gls{iv} effects, each targeting a distinct inferential object. Forest-based estimators such as the instrumental variable forest of \citet{wang_instrumental_2022} and the instrumental forest within the \gls{grf} framework \citep{athey_generalized_2019} target a unit-level conditional \gls{iv} function, delivering pointwise-consistent estimates with valid asymptotic inference but no explicit partition of the covariate space. The drivers of heterogeneity are recovered post hoc through variable importance scores or best linear projections. Partition-based procedures like the \gls{iv} tree of \citet{wang_instrumental_2022} and the matching procedure of \citet{johnson_detecting_2022} return an interpretable partition over which subgroup-level effects can be read off directly, paired with closed-testing inference.

We propose \gls{sbcf-iv}, a generalization of \gls{bcf-iv} tailored to settings in which the share of covariates that drive effect heterogeneity is small relative to $P$. The \gls{bcf-iv} algorithm of \citet{bargagli-stoffi_heterogeneous_2022} combines a \gls{bart}-based \citep{chipman_bart_2010} sum-of-trees estimator for the \gls{citt} with a shallow \gls{cart} post-processing step \citep{breiman_classification_1984}, yielding interpretable subgroup-level estimates of \gls{ccace} within a two-step procedure based on stratification and \gls{iv} estimation. By doing so, it reconciles the predictive accuracy of ensemble methods \citep{athey_generalized_2019, hartford_deep_2017} with the interpretability of single-tree approaches \citep{athey_recursive_2016, bargagli_stoffi_causal_2020, johnson_detecting_2022}. However, \gls{bcf-iv} inherits \gls{bart}'s uniform split-variable prior and therefore performs no targeted feature selection: when the covariate vector contains many irrelevant variables, as is typical in modern administrative or biomedical data, both the ensemble and the downstream \gls{cart} can spread splits across spurious moderators, weakening the discovered subgroups.
Our contribution is twofold. First, we diagnose the failure mode of \gls{bcf-iv} in high dimensions. \gls{bcf-iv} increasingly splits on spurious moderator variables as the share of irrelevant covariates grows, which degrades subgroup recovery and effect estimation.
We address this by estimating the \gls{citt} with the \gls{sbcf} of \citet{caron_shrinkage_2022}, whose sparsity-inducing Dirichlet prior, in the spirit of SoftBART \citep{linero_bayesian_2018, linero_bayesian_2018-1}, concentrates posterior mass on the few covariates that moderate the effect. The resulting estimation gains in our setting are attributable to this prior.
Second, we feed the posterior split frequencies into the subgroup-discovery \gls{cart} as variable-level costs, a component that steers the partition toward ensemble-relevant covariates and is most beneficial when the ensemble is not already sparse. Throughout, our scope follows that of \citet{bargagli-stoffi_heterogeneous_2022}: a binary randomized instrument, a binary treatment, and standard \gls{iv} identification assumptions.
Section~\ref{ch:sim_study} accordingly benchmarks \gls{sbcf-iv} against \gls{bcf-iv} as the direct ancestor and natural comparator under this setting, with a supplementary comparison to the
instrumental forest of \citet{athey_generalized_2019} and a cost-weighting ablation in Appendix~\ref{append:further_precision_results}.

The paper proceeds as follows. Section~\ref{ch:PO_irreg} sets up the potential outcomes framework and identifying assumptions for the \gls{ccace} under an irregular assignment mechanism. Section~\ref{ch:BCF_IV} introduces \gls{sbcf-iv}. Section~\ref{ch:sim_study} reports Monte Carlo evidence on its performance relative to \gls{bcf-iv} in high-dimensional settings based on tree-level and unit-level performance criteria. Section~\ref{ch:emp_appl} applies \gls{sbcf-iv} to two empirical studies: the Oregon Health Insurance Experiment (OHIE) \citep{finkelstein_oregon_2012, johnson_detecting_2022} and the 401(k) retirement plans dataset \citep{poterba_401k_1992, poterba_401k_1995, chernozhukov_doubledebiased_2018}. Section~\ref{ch:conclusion} concludes.

\section{Potential outcomes and irregular assignment}
\label{ch:PO_irreg}

We follow \citet{bargagli_stoffi_causal_2020} and \citet{bargagli-stoffi_heterogeneous_2022} and adopt Rubin's causal model, working within the irregular assignment framework of \citet{imbens_causal_2015}. For $N$ units indexed $i = 1, \dots, N$, let $Y_i \in \mathbb{R}$ denote the observed outcome, $Z_i \in \{0,1\}$ a binary instrument (assignment), $W_i \in \{0,1\}$ the actual treatment received, and $X_i \in \mathbb{R}^P$ the $i$-th row of an $N \times P$ matrix $X$ of pre-treatment covariates. Each unit is endowed with potential outcomes $Y_i(Z_i=z, W_i=w)$ and potential treatments $W_i(z)$ for $z, w \in \{0,1\}$, related to observed quantities by the consistency relations $Y_i = Y_i(Z_i, W_i)$ and $W_i = W_i(Z_i)$. The instrument $Z_i$ is unconfounded but the receipt $W_i$ may be confounded. This is the canonical \gls{iv} setting, and our inferential target is the \gls{ccace} in Definition \ref{defn:cCACE}, based on the latent subpopulation of compliers.

\begin{defn}[Compliance subgroups]
\label{defn:compliance}
Each unit belongs to one of four latent compliance subgroups, defined by the joint values of its potential treatments:
\[
   G_i =
   \begin{cases}
      C,  & W_i(0) = 0,\ W_i(1) = 1 \quad \text{(compliers)} \\
      D,  & W_i(0) = 1,\ W_i(1) = 0 \quad \text{(defiers)} \\
      AT, & W_i(0) = 1,\ W_i(1) = 1 \quad \text{(always-takers)} \\
      NT, & W_i(0) = 0,\ W_i(1) = 0 \quad \text{(never-takers)},
   \end{cases}
\]
with conditional subgroup proportions $\pi_G(x) = \Pr(G_i = G \mid X_i = x)$ for $G \in \{C, D, AT, NT\}$.
\end{defn}

\begin{defn}[Conditional CACE]
\label{defn:cCACE}
The conditional Complier Average Causal Effect is the estimand
\[
   \tau^{\text{CACE}}(x) \;\coloneqq\; \mathbb{E}\bigl[Y_i(1, W_i(1)) - Y_i(0, W_i(0)) \,\bigm|\, G_i = C,\, X_i = x\bigr],
\]
which, under the exclusion restriction in Assumption \ref{assump:identification_irregular}(d), reduces to $\mathbb{E}[Y_i(1) - Y_i(0) \mid G_i = C, X_i = x]$.
\end{defn}

Definition~\ref{defn:cCACE} fixes the target as a property of the latent complier subpopulation and it is not, by itself, an object computable from the observed distribution of $(Y_i, W_i, Z_i, X_i)$. Identification proceeds through the conditional \gls{itt} effect and the conditional complier share, both of which admit clean expressions in terms of observed conditional means.

\begin{defn}[Conditional ITT and complier share]
\label{defn:cITT}
The conditional \gls{itt} effect and the conditional complier share are
\begin{align*}
   \text{ITT}_Y(x) &\;\coloneqq\; \mathbb{E}[Y_i \mid Z_i = 1, X_i = x] - \mathbb{E}[Y_i \mid Z_i = 0, X_i = x], \\
   \pi_C(x) &\;\coloneqq\; \Pr(G_i = C \mid X_i = x).
\end{align*}
By a mixture argument over the compliance subgroups of Definition~\ref{defn:compliance},
\begin{align*}
   \text{ITT}_Y(x) \;=\; \pi_C(x)\,\text{ITT}_{Y,C}(x) + \pi_D(x)\,\text{ITT}_{Y,D}(x) + \pi_{AT}(x)\,\text{ITT}_{Y,AT}(x) + \pi_{NT}(x)\,\text{ITT}_{Y,NT}(x),
\end{align*}
where $\text{ITT}_{Y,G}(x)$ denotes the conditional ITT among units of compliance types $G$ in Definition \ref{defn:compliance}.
\end{defn}

Definition~\ref{defn:cITT} makes explicit that $\text{ITT}_Y(x)$ is a covariate-weighted mixture and isolating $\tau^{\text{CACE}}(x) = \text{ITT}_{Y,C}(x)$ requires assumptions that neutralize the contributions of defiers, always-takers, and never-takers. The classical \citet{angrist_identification_1996} conditions for \gls{iv} estimation deliver point identification.

\begin{assump}[IV identification under irregular assignment] 
\label{assump:identification_irregular} \leavevmode
\begin{enumerate}[label=(\alph*)]
   \item \textit{\gls{SUTVA} / consistency:} $Y_i = Y_i(Z_i, W_i)$ and $W_i = W_i(Z_i)$, with no interference between units and no hidden treatment variants.
   \item \textit{Relevance:} $\pi_C(x) > 0$ almost surely.
   \item \textit{Unconfounded instrument:} $Z_i \perp\!\!\!\perp \bigl(\{Y_i(z, w)\}_{z,w \in \{0,1\}^2},\, W_i(0),\, W_i(1)\bigr) \,\bigm|\, X_i$.
   \item \textit{Exclusion restriction:} $Y_i(z, w) = Y_i(w)$ for all $z, w \in \{0,1\}$.
   \item \textit{Monotonicity:} $W_i(1) \geq W_i(0)$.
\end{enumerate}
\end{assump}

Assumption~\ref{assump:identification_irregular}(a) ensures consistency and rules out interference and hidden treatment variants. Relevance (b), together with monotonicity (e), guarantees $\pi_C(x) > 0$ almost surely so that the identification ratio below is well-defined. Monotonicity also rules out defiers ($\pi_D(x) = 0$), which must be defended on substantive grounds or enforced through a one-sided non-compliance design. The exclusion restriction (d) confines the effect of $Z_i$ on $Y_i$ to the channel through $W_i$. Under Assumption~\ref{assump:identification_irregular}, the complier estimand of Definition~\ref{defn:cCACE} is identified from the observed distribution.

\begin{prop}[Identification of $\tau^{\text{CACE}}(x)$]
\label{prop:identification}
Under Assumption~\ref{assump:identification_irregular},
\[
   \tau^{\text{CACE}}(x) \;=\; \frac{\text{ITT}_Y(x)}{\pi_C(x)} \;=\; \frac{\mathbb{E}[Y_i \mid Z_i = 1, X_i = x] - \mathbb{E}[Y_i \mid Z_i = 0, X_i = x]}{\mathbb{E}[W_i \mid Z_i = 1, X_i = x] - \mathbb{E}[W_i \mid Z_i = 0, X_i = x]}.
\]
The proof, adapting \citet{angrist_identification_1996} to the conditional target as in \citet{bargagli-stoffi_heterogeneous_2022}, is given in Appendix~\ref{append:proof_cCACE}.
\end{prop}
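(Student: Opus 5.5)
The plan is the standard Angrist--Imbens--Rubin argument, carried out conditionally on $X_i = x$ throughout. I will compute the numerator and the denominator of the ratio in the statement separately, each as a mixture over the compliance types of Definition~\ref{defn:compliance}. The common device is to rewrite observed conditional means as potential-outcome means, using consistency (a) and the unconfounded instrument (c).

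\emph{Denominator.} By consistency, $\mathbb{E}[W_i \mid Z_i=z, X_i=x] = \mathbb{E}[W_i(z) \mid Z_i = z, X_i = x]$. By (c), this equals $\mathbb{E}[W_i(z)\mid X_i=x]$. The difference in $z$ is therefore $\mathbb{E}[W_i(1)-W_i(0)\mid X_i=x]$. The variable $W_i(1)-W_i(0)$ equals $1$ for compliers, $-1$ for defiers and $0$ for always- and never-takers, so the difference is $\pi_C(x)-\pi_D(x)$. Monotonicity (e) gives $\pi_D(x)=0$, so the denominator equals $\pi_C(x)$. Relevance (b) makes it strictly positive, so the ratio is well defined.

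\emph{Numerator.} By (a) and (d), $Y_i = Y_i(W_i(Z_i))$. Conditioning on $Z_i=z$ and applying (c), which makes $Z_i$ independent of the joint vector of potential outcomes and potential treatments, gives $\mathbb{E}[Y_i\mid Z_i=z,X_i=x]=\mathbb{E}[Y_i(W_i(z))\mid X_i=x]$. The joint independence matters here, because $Y_i(W_i(z))$ is a function of both. Hence $\text{ITT}_Y(x)=\mathbb{E}[Y_i(W_i(1))-Y_i(W_i(0))\mid X_i=x]$. I will decompose this by the law of total expectation over $G_i$, which reproduces the mixture in Definition~\ref{defn:cITT}. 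The types contribute as follows:
\begin{itemize}
\item for always-takers and never-takers, $W_i(1)=W_i(0)$, so the integrand is identically zero;
\item defiers have probability zero under (e);
\item for compliers, the integrand is $Y_i(1)-Y_i(0)$.
\end{itemize}
Therefore $\text{ITT}_Y(x)=\pi_C(x)\,\mathbb{E}[Y_i(1)-Y_i(0)\mid G_i=C,X_i=x]=\pi_C(x)\,\tau^{\text{CACE}}(x)$. The last equality uses the reduced form in Definition~\ref{defn:cCACE} under (d). Dividing by $\pi_C(x)>0$ gives both equalities in the statement.

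The main obstacle is the rigor of the numerator step. Two points need care. First, the exclusion restriction has to be applied before the unconfoundedness step, so that $Y_i(Z_i,W_i(Z_i))$ becomes a function of $(Z_i,\text{potential quantities})$ to which (c) applies. Second, conditioning on $G_i=C$ requires $\pi_C(x)>0$, and the zero-probability types need a convention. I plan to handle the latter by writing each type's contribution as $\mathbb{E}[(\cdot)\mathbf{1}\{G_i=G\}\mid X_i=x]$ rather than as a conditional mean. Doing so sidesteps division by $\pi_G(x)$ for the null types and keeps the argument valid almost surely in $x$. I will also assume the required moments are finite.
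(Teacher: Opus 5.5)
Your proposal is correct and follows essentially the same route as the paper's proof. Like the paper, you rewrite the observed conditional means of $Y_i$ and $W_i$ given $Z_i=z$ as potential-outcome means via (a) and (c), invoke (d), and use (e) to reduce both the numerator and the denominator to complier quantities. Your type-by-type decomposition is only a repackaging of the paper's pointwise identity $Y_i(W_i(1))-Y_i(W_i(0))=[Y_i(1)-Y_i(0)][W_i(1)-W_i(0)]$ combined with $W_i(1)-W_i(0)=\mathbf{1}\{G_i=C\}$. One minor correction: the paper applies (c) before (d), and that order is equally valid because (c) already covers $Y_i(z,W_i(z))$ jointly, so your claim that the exclusion restriction must come first is unnecessary, though harmless.
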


Given a partition $\{\mathbb{X}_j\}_j$ of the covariate space, Proposition~\ref{prop:identification} motivates a sample-moment estimator targeting the subgroup-averaged complier effect
\begin{align}\label{eq:subgroup-cace}
   \tau^{\text{CACE}}_{\mathbb{X}_j} \;\coloneqq\; \mathbb{E}\bigl[Y_i(W=1) - Y_i(W=0) \,\bigm|\, G_i = C,\, X_i \in \mathbb{X}_j\bigr],
\end{align}
obtained by replacing population conditional means in Proposition~\ref{prop:identification} with subgroup sample analogues.

\begin{defn}[Subgroup-wise 2SLS estimator]
\label{defn:cCACE_estimator}
For $X_i =x \in \mathbb{X}_j$, with $N_{z,j} = \sum_{l:\, X_l \in \mathbb{X}_j} \mathbf{1}\{Z_l = z\}$ the count of $\mathbb{X}_j$-units assigned to $Z_l = z \in \{0,1\}$,
\begin{align*}
   \widehat\tau^{\text{CACE}}(x) \coloneqq \widehat\tau^{\,\text{2SLS}}_{\mathbb{X}_j} \;=\; \frac{\dfrac{1}{N_{1,j}} \sum_{l:\, X_l \in \mathbb{X}_j} Y_l Z_l \;-\; \dfrac{1}{N_{0,j}} \sum_{l:\, X_l \in \mathbb{X}_j} Y_l (1 - Z_l)}{\dfrac{1}{N_{1,j}} \sum_{l:\, X_l \in \mathbb{X}_j} W_l Z_l \;-\; \dfrac{1}{N_{0,j}} \sum_{l:\, X_l \in \mathbb{X}_j} W_l (1 - Z_l)}
\end{align*}
targets $\tau^{\text{CACE}}_{\mathbb{X}_j}$.
\end{defn}

Equivalently, $\widehat\tau^{\,\text{2SLS}}_{\mathbb{X}_j}$ is the Two-Stage Least Squares estimator on the subgroup-restricted simultaneous system
\begin{equation}\label{eq:simultaneous_equations}
   Y_{i, \mathbb{X}_j} \;=\; \kappa_{\mathbb{X}_j} + \tau^{\text{CACE}}_{\mathbb{X}_j}\,W_{i, \mathbb{X}_j} + \varepsilon_{i, \mathbb{X}_j}, \qquad W_{i, \mathbb{X}_j} \;=\; \pi_{0, \mathbb{X}_j} + \pi_{C, \mathbb{X}_j}\,Z_{i, \mathbb{X}_j} + \nu_{i, \mathbb{X}_j},
\end{equation}
with intercepts $\kappa_{\mathbb{X}_j}$ and $\pi_{0, \mathbb{X}_j}$, error terms $\varepsilon_{i, \mathbb{X}_j}$ and $\nu_{i, \mathbb{X}_j}$, while $\mathbb{E}[\varepsilon_{i, \mathbb{X}_j}] = \mathbb{E}[\nu_{i, \mathbb{X}_j}] = 0$ and $\mathbb{E}[Z_{i, \mathbb{X}_j}\,\nu_{i, \mathbb{X}_j}] = 0$. Under Assumption~\ref{assump:identification_irregular} and a sufficient number of i.i.d.\ observations within each $\mathbb{X}_j$, $\widehat\tau^{\,\text{2SLS}}_{\mathbb{X}_j}$ is consistent and asymptotically normal for $\tau^{\text{CACE}}_{\mathbb{X}_j}$, with the reduced form and formal asymptotic results collected in Appendix~\ref{append:theorem_2SLS}.

Definition~\ref{defn:cCACE_estimator} presumes that the partition $\{\mathbb{X}_j\}_j$ is known. The contribution of \gls{bcf-iv} \citep{bargagli-stoffi_heterogeneous_2022} is to discover this partition from the data through (i) an honest split of the sample into disjoint discovery and inference subsets $\mathcal{I}_{\text{disc}}$ and $\mathcal{I}_{\text{inf}}$; (ii) interpretable discovery of heterogeneity on $\mathcal{I}_{\text{disc}}$; and (iii) inference for $\tau^{\text{CACE}}_{\mathbb{X}_j}$ on $\mathcal{I}_{\text{inf}}$. Our \gls{sbcf-iv} algorithm in Algorithm~\ref{alg:sbcf-iv} inherits this three-step structure of \citet{bargagli-stoffi_heterogeneous_2022} with adaptations for a high-dimensional covariate setup described in the next section.

\section{Shrinkage Bayesian Causal Forest with Instrumental Variable}
\label{ch:BCF_IV}

We propose \gls{sbcf-iv}, an extension of \gls{bcf-iv} \citep{bargagli-stoffi_heterogeneous_2022} to settings with many irrelevant covariates. The overall structure of honest sample splitting, data-driven discovery of heterogeneous subgroups on $\mathcal{I}_{\text{disc}}$, and \gls{2sls} inference on $\mathcal{I}_{\text{inf}}$ is preserved. Our adaptations concern the discovery step where we replace BCF with the sparsity-inducing Shrinkage BCF (SBCF) of \citet{caron_shrinkage_2022} and feed its posterior variable-selection frequencies into the subgroup-finding tree as variable-level costs. Algorithm~\ref{alg:sbcf-iv} summarizes the full procedure.

Working on $\mathcal{I}_{\text{disc}}$, we separately model the numerator and denominator of the identification ratio in Proposition~\ref{prop:identification}. Following \citet{hahn_bayesian_2020}, we adopt the semi-parametric specification%
\footnote{Background on CART, BART, BCF, and SBCF is collected in Appendix~\ref{append:bart_bcf_cart}.}
\begin{equation}\label{eq:cond_exp_Y}
   \mathbb{E}[Y_i \mid Z_i = z, X_i = x] \;=\; \mu\bigl(e(x), x\bigr) \;+\; \text{ITT}_Y(x)\,z,
\end{equation}
where $e(x) = \Pr(Z_i = 1 \mid X_i = x)$ is the instrument's propensity score, included as a covariate in the control function $\mu(e(x), x)$ to mitigate regularization-induced confounding and targeted selection. Independent BART priors are placed on $\mu(e(x), x)$ and $\text{ITT}_Y(x)$, with depth-penalty parameters $(\eta, \beta) = (0.25, 3)$ on $\text{ITT}_Y(x)$ favoring shallow trees and hence simpler heterogeneity patterns. The compliance component is modeled analogously via
\begin{equation}\label{eq:cond_exp_W}
   \mathbb{E}[W_i \mid Z_i = z, X_i = x] \;=\; \delta(z, x),
\end{equation}
with a BART probit prior on $\delta(z, x)$ \citep{hill_bayesian_2011}. Combining estimates yields a pointwise complier-share estimate $\widehat\pi_C^{\,\text{SBCF}}(x) = \widehat\delta(1, x) - \widehat\delta(0, x)$ and a pointwise complier-effect estimate $\widehat\tau^{\,\text{SBCF}}(x) = \widehat{\text{ITT}}_Y^{\,\text{SBCF}}(x) / \widehat\pi_C^{\,\text{SBCF}}(x)$. We emphasize that $\widehat\tau^{\,\text{SBCF}}(x)$ is a posterior estimate on $\mathcal{I}_{\text{disc}}$ used solely as a heterogeneity signal for the tree-fitting step, while we conduct final inference for the subgroup target $\tau^{\text{CACE}}_{\mathbb{X}_j}$ with $\widehat\tau^{\,\text{2SLS}}_{\mathbb{X}_j}$ from Definition \ref{defn:cCACE_estimator} on $\mathcal{I}_{\text{inf}}$.

\gls{sbcf-iv} departs from \gls{bcf-iv} in the prior on the split-variable selection probabilities $s = (s_1, \ldots, s_P)$. BCF uses a uniform $s_j = 1/P$. Instead, SBCF imposes a sparsity-inducing Dirichlet prior,
\begin{equation}\label{eq:sbcf_prior}
   s \sim \text{Dirichlet}\!\left(\tfrac{\alpha}{P}, \ldots, \tfrac{\alpha}{P}\right), \qquad \frac{\alpha}{\alpha + \rho} \sim \text{Beta}(a, b),
\end{equation}
with defaults $(a, b, \rho) = (0.5, 1, P)$ \citep{caron_shrinkage_2022}. Small $\alpha$ concentrates mass on few covariates, and the hyperprior on $\alpha/(\alpha + \rho)$ lets the data determine the degree of sparsity, with the preference for sparsity strengthening as $P$ grows. We place separate Dirichlet priors on the split probabilities $s_\mu$ and $s_{\text{ITT}_Y}$ of the two components of \eqref{eq:cond_exp_Y}, using $\rho_\mu = P + 1$ (to accommodate the propensity score as an extra covariate in $\mu$) and $\rho_{\text{ITT}_Y} = P/2$ (to further concentrate mass on few active splits in the treatment effect component). The same sparsity-inducing prior in \eqref{eq:sbcf_prior} is applied in the SoftBART probit model used for $\delta(z, x)$ in \eqref{eq:cond_exp_W}, with $\rho_\delta = P+1$. Appendix~\ref{append:sbcf_priors} provides more information on the full sparsity-inducing prior specifications. We fit a shallow CART \citep{breiman_classification_1984} to the pointwise estimates $\widehat\tau^{\,\text{SBCF}}(X_i)$ to recover an interpretable partition $\{\mathbb{X}_j\}_j$. The posterior split frequencies $\widehat{s}_{\text{ITT}_Y}$ indicate which covariates drive heterogeneity and we pass these split frequencies to \texttt{rpart}  \citep{rpart} through the \texttt{cost} argument, setting variable-level costs to
\begin{equation}\label{eq:cost}
   c_{\text{psp}} \;=\; \frac{\max\{\widehat{s}_{\text{ITT}_Y}\}}{\widehat{s}_{\text{ITT}_Y}}.
\end{equation}
Because \texttt{rpart} divides split-improvement by the candidate variable's cost, \eqref{eq:cost} upweights covariates with a higher posterior inclusion probability at each split, in a manner analogous to variable-importance weighting in random forests \citep{breiman_random_2001}.
The cost vector $c_{\text{psp}}$ is defined via $\widehat s_{\text{ITT}_Y}$, the posterior
split frequencies of the cITT component, mirroring the ITT-anchored discovery step of \gls{bcf-iv} \citep{bargagli-stoffi_heterogeneous_2022}. Posterior split frequencies can be exactly zero for covariates the ensemble never selects given finitely many draws. Therefore, we floor $\widehat s_{\text{ITT}_Y}$ at a small $\varepsilon_c > 0$ before inversion in \eqref{eq:cost}, capping the cost of a never-selected covariate at $\max\{\widehat s_{\text{ITT}_Y}\}/\varepsilon_c$. Equation \eqref{eq:cost} is thus always well-defined and assigns such covariates a large but finite penalty.
The choice of $\widehat s_{\text{ITT}_Y}$ for the cost vector is justified by a coherence property of the two-step procedure: within a leaf, the discovery average and the \gls{2sls} estimand differ only by the within-leaf covariance between the complier share and the complier effect, a leaf-level instance of the compliance-weighting of instrumental variable estimands \citep{angrist_twostage_1995, abadie_semiparametric_2003, frolich_nonparametric_2007}. 
When within-leaf variation of $\pi_C(x)$ is small, heterogeneity in $\tau^{\text{CACE}}(x)$ is inherited almost entirely from $\text{ITT}_Y(x)$ and $\widehat s_{\text{ITT}_Y}$ is the natural cost weight (see Corollary~\ref{cor:coherence} in Appendix~\ref{append:proof_cCACE}). Aggregating $\widehat s_{\text{ITT}_Y}$ with the compliance component's split frequencies, for regimes with substantial denominator-driven heterogeneity, is an extension left to future work.

Inference on the discovered partition follows \gls{bcf-iv} without modification. For each node $\mathbb{X}_j$ of the tree learned on $\mathcal{I}_{\text{disc}}$, we compute $\widehat\tau^{\,\text{2SLS}}_{\mathbb{X}_j}$ on $\mathcal{I}_{\text{inf}}$ via the simultaneous system in \eqref{eq:simultaneous_equations}. Consistency and asymptotic normality under subgroup-level moment conditions are collected in Appendix~\ref{append:theorem_2SLS}. To guard against spurious heterogeneity, nodes flagged by a first-stage $F$-test for weak instruments are discarded, and $p$-values across leaves are adjusted for the familywise error rate using Holm-corrections for adjusted $p$-values \citep{holm_simple_1979, bargagli-stoffi_heterogeneous_2022}.

\begin{algorithm}[t]
\caption{Shrinkage Bayesian Causal Forest with Instrumental Variable (SBCF-IV)}\label{alg:sbcf-iv}
\begin{algorithmic}[1]
    \Require $N$ units $\{(X_i, Z_i, W_i, Y_i)\}_{i=1}^N$.
    \Ensure Tree-structured partition of the covariate space with node-level CACE estimates.
    \Statex
    \Statex \textbf{Step 1: Honest splitting.}
    \State Randomly partition the sample into $\mathcal{I}_{\text{disc}}$ and $\mathcal{I}_{\text{inf}}$ (defaults to half-size splits).
    \Statex
    \Statex \textbf{Step 2: Discovery} (on $\mathcal{I}_{\text{disc}}$).
    \State Estimate $\widehat{\text{ITT}}_Y^{\,\text{SBCF}}(x)$ via SBCF under the sparsity prior \eqref{eq:sbcf_prior}; save posterior split frequencies $\widehat{s}_{\text{ITT}_Y}$.
    \State Estimate $\widehat\pi_C^{\,\text{SBCF}}(x)$ via a SoftBART probit with similar sparsity prior.
    \State Form the pointwise heterogeneity signal $\widehat\tau^{\,\text{SBCF}}(x) = \widehat{\text{ITT}}_Y^{\,\text{SBCF}}(x) / \widehat\pi_C^{\,\text{SBCF}}(x)$.
    \State Fit a shallow CART to $\{(\widehat\tau^{\,\text{SBCF}}(x), X_i)\}$ with variable-level costs $c_{\text{psp}}$ as in \eqref{eq:cost}; the resulting partition is $\{\mathbb{X}_j\}_j$.
    \Statex
    \Statex \textbf{Step 3: Inference} (on $\mathcal{I}_{\text{inf}}$).
    \State For every node $\mathbb{X}_j$, compute $\widehat\tau^{\,\text{2SLS}}_{\mathbb{X}_j}$ via \eqref{eq:simultaneous_equations}, targeting $\tau^{\text{CACE}}_{\mathbb{X}_j}$.
    \State Run first-stage weak-instrument tests and adjust leaf-level $p$-values for the familywise error rate.
    \Statex
    \State \Return The pruned tree with node-level CACE estimates and adjusted inference.
\end{algorithmic}
\end{algorithm}

\section{Simulation study}
\label{ch:sim_study}

We combine the design of \citet{bargagli-stoffi_heterogeneous_2022} with the high-dimensional setup of \citet{caron_shrinkage_2022} to assess \gls{sbcf-iv} and \gls{bcf-iv} in settings with many irrelevant covariates. Appendix~\ref{append:sim_design} presents a detailed description of the complete simulation design. For each of $N = 1{,}000$ units, we generate a binary instrument $Z_i \sim \text{Bernoulli}(0.5)$, a covariate vector $X_i \in \mathbb{R}^P$ with $P \in \{10, 50, 100\}$ (half binary, half continuous), and potential outcomes and treatments according to
\begin{align}\label{eq:dgp}
\begin{split}
   W_i(0) = 0, \quad W_i(1) &\sim \text{Bernoulli}(\pi_{\text{comp}} = 0.75), \\
   Y_i(0) = \mu(X_i) + \epsilon_i, \quad Y_i(1) &= Y_i(0) + W_i(1)\,\tau^{\text{CACE}}(X_i),
\end{split}
\end{align}
with observed values $W_i = Z_i W_i(1)$ and $Y_i$ defined analogously under \gls{SUTVA}. The Gaussian error term reads $\epsilon_i \sim \mathcal{N}(0,1)$. Three features of \eqref{eq:dgp} are central to the estimand of interest. First, $W_i(0) = 0$ imposes one-sided non-compliance, ruling out defiers and always-takers by design and making monotonicity (Assumption~\ref{assump:identification_irregular}(e)) hold exactly. Second, we choose the compliance rate $\pi_{\text{comp}} = 0.75$ to control the strength of the instrument. Third, heterogeneity in the conditional CACE is confined to the first two binary covariates,
\begin{equation}\label{eq:cace_truth}
   \tau^{\text{CACE}}(X_i) =
   \begin{cases}
      \phantom{-}k, & X_i \in l_1 = \{X_{i,1} = 0,\, X_{i,2} = 0\}, \\
      -k,           & X_i \in l_2 = \{X_{i,1} = 1,\, X_{i,2} = 1\}, \\
      \phantom{-}0, & \text{otherwise},
   \end{cases}
\end{equation}
with effect size $k \in \{0, 0.2, 0.4, \dots, 2\}$.\footnote{We use $l_1, l_2$ to denote the true heterogeneity subgroups in the \gls{DGP}, distinct from the discovered subgroups $\{\mathbb{X}_j\}_j$ produced by the CART step of \gls{sbcf-iv}. A successful run recovers $l_1$ and $l_2$ as leaves of the tree, i.e., $\mathbb{X}_j = l_j$ for $j \in \{1, 2\}$ up to labeling.} The remaining covariates, including all continuous ones, are irrelevant for $\tau^{\text{CACE}}(x)$, so the true sparsity level in the treatment effect component grows with $P$. This isolates the setting \gls{sbcf-iv} is designed for: relevance concentrated on two covariates while $P - 2$ noise variables compete for splits. The control function $\mu(X_i)$ is adopted from \citet{caron_shrinkage_2022} and depends only on three continuous covariates through a nonlinear combination of sine, quadratic, and absolute-value terms. Its explicit form is given in Appendix~\ref{append:sim_design}.

We evaluate \gls{sbcf-iv} along three complementary dimensions: tree-level subgroup detection, unit-level classification, and unit-level estimation precision with uncertainty quantification. Tree-level recovery is captured by the \gls{DR} and \gls{FDR}, which record whether the true heterogeneity regions $l_1, l_2$ are recovered. Unit-level classification is assessed on $\mathcal{I}_{\text{inf}}$ via Recall, Precision, False Positive Rate (FPR), and $F$-score, translating structural detection into how reliably individual observations are sorted into statistically significant leaves. Estimation precision is then quantified by the per-unit bias, MSE, and $95\%$ coverage of the resulting conditional CACE estimates. Formal definitions and further discussions are deferred to Appendix~\ref{append:sim_metrics}.

%
Figure~\ref{fig:dr_fdr} shows that \gls{sbcf-iv} dominates \gls{bcf-iv} on both tree-level criteria across all three covariate dimensions. For detection, \gls{sbcf-iv}'s \gls{DR} rises steeply once $k \geq 0.8$ and saturates near one for $k \geq 1.4$, essentially uniformly in $P$: adding irrelevant covariates does not meaningfully degrade the algorithm's ability to recover $l_1$ and $l_2$ as leaves. \gls{bcf-iv}, by contrast, only begins to detect the true subgroups for $k \geq 1$ in the low-dimensional case ($P = 10$), and its \gls{DR} flattens near 0.3 at $k = 2$. For $P \in \{50, 100\}$, \gls{bcf-iv}'s \gls{DR} stays close to zero throughout the grid, indicating that the uniform split-variable prior fails to concentrate on the two binary covariates that drive heterogeneity once they are buried among many noise variables. The ordering on \gls{FDR} is equally clear. \gls{sbcf-iv} maintains an \gls{FDR} near zero across the entire $(k, P)$ grid, so its detection gains do not come at the cost of spurious discoveries. \gls{bcf-iv}, by contrast, exhibits an \gls{FDR} that rises with $k$ and is largest at $P=10$, reaching roughly $0.75$ at $k = 2$ with $P = 10$ and remaining above $0.25$ for larger $P$. Most of \gls{bcf-iv}'s rare "discoveries" in high-dimensional settings are false positives. The two panels indicate that the sparsity-inducing prior improves both detection and false-discovery control, recovering true heterogeneity where \gls{bcf-iv} misses it while controlling spurious leaves where \gls{bcf-iv} generates them.
\begin{figure}[h]
   \centering
   \caption{Tree-level subgroup detection across effect size $k$ and covariate dimension $P$.} 
    \label{fig:dr_fdr}
   \resizebox{0.9\linewidth}{!}{%
     \input{fig_rule_tpr_fpr_uncorr_co.0.75_n_1000}%
   }
\begin{notes}
    The top row reports the Detection Rate (DR), defined in \eqref{eq:DR} as the average share of true heterogeneity subgroups $l_1, l_2$ recovered as leaves of the discovered tree. The bottom row reports the False Detection Rate (FDR), defined in \eqref{eq:FDR} as the share of replications in which at least one spurious leaf is flagged as significant at level $\alpha = 0.05$. Results are averaged over $M=500$ Monte Carlo replications with $N = 1{,}000$, and compare \gls{sbcf-iv} (orange) with \gls{bcf-iv} (gray). Full results in Table \ref{tab:rule} of Appendix \ref{append:full_precision_results}.
\end{notes}
\end{figure}

Figure~\ref{fig:precision.Fscore} confirms that \gls{sbcf-iv}'s tree-level advantages in Figure~\ref{fig:dr_fdr} translate directly into sharper unit-level sorting. Here, we report Precision and the $F$-score. The full set of classification metrics is reported in Figure \ref{fig:full_classification} of Appendix~\ref{append:supmat_simstudy}. Precision follows the same pattern across both algorithms: for $k \geq 1$, \gls{sbcf-iv}'s Precision rises steeply and saturates near one by $k = 1.4$ across all three values of $P$, meaning that nearly every unit assigned to a significant leaf is in a truly heterogeneous subgroup. \gls{bcf-iv} reaches roughly $\text{Precision} = 0.4$ at $k = 2$ under $P = 10$ and stays close to zero for $P \in \{50, 100\}$. In the high-dimensional regime, nearly all of \gls{bcf-iv}'s unit-level positive classifications are misclassifications, which is a consequence of \gls{FDR} results in Figure \ref{fig:dr_fdr}. The $F$-score mirrors this ranking and is the more informative single summary since it penalizes both missed heterogeneity and spurious flags. \gls{sbcf-iv}'s $F$-score converges to one for $k \geq 1.4$ uniformly in $P$, indicating that the algorithm both identifies the correct units and avoids misclassification. \gls{bcf-iv}'s $F$-score remains below $0.4$ throughout the $(k, P)$ grid and flatlines near zero once $P \geq 50$. 

\begin{figure}[h]
   \centering
   \caption{Unit-level classification performance across effect size $k$ and covariate dimension $P$.}
    \label{fig:precision.Fscore}
   \resizebox{0.9\linewidth}{!}{%
   \input{fig_precision_f_score_ind_uncorr_co.0.75_n_1000}
   }
    \begin{notes}
        The top row reports Precision and the bottom row the $F$-score, both defined in \eqref{eq:class_metrics} and computed over all units in $\mathcal{I}_{\text{inf}}$ with significance evaluated at $\alpha = 0.05$ using Holm-adjusted $p$-values. Results are averaged over $M$ Monte Carlo replications with $N = 1{,}000$, and compare \gls{sbcf-iv} (orange) with \gls{bcf-iv} (gray). We report full results in Table \ref{tab:full_classification} of Appendix \ref{append:full_precision_results}.
    \end{notes}
\end{figure}
%

Table~\ref{tab:short_precision_uncertainty} reports estimation precision and uncertainty quantification for $\widehat\tau^{\text{CACE}}(x)$ and $k \in \{0, 1, 2\}$, while qualitatively similar results for the full $k$-range are deferred to Appendix~\ref{append:full_precision_results}. We summarize three emerging patterns. 
First, point estimation accuracy diverges with effect size. For $k \geq 1$, \gls{sbcf-iv} consistently delivers lower MSE and absolute bias than \gls{bcf-iv}, with the gap widening as $k$ grows: \gls{sbcf-iv}'s MSE is about a quarter of BCF-IV's at $(k, P) = (2, 10)$ and about a tenth at $(k, P) = (2, 100)$.
\gls{bcf-iv}'s MSE grows with $P$, while \gls{sbcf-iv}'s remains flat, confirming that the dimension-invariance seen in Figures~\ref{fig:dr_fdr} and~\ref{fig:precision.Fscore} extends to point estimation accuracy. 
Second, \gls{bcf-iv}'s absolute bias grows from $0.29$ at $(k, P) = (0, 10)$ to $1.5$ at $(k, P) = (2, 100)$ because its uniform split-variable prior dilutes the estimated effect across noise covariates. \gls{sbcf-iv}'s absolute bias stays below $0.5$ throughout the grid. 
Third, uncertainty quantification degrades with increasing $k$ and $P$ for \gls{bcf-iv}. \gls{sbcf-iv} maintains coverage near the nominal $0.95$ level across the entire $(k, P)$ grid, so its confidence intervals remain well-calibrated even when the true effect is large or the covariate space is high-dimensional. \gls{bcf-iv}'s coverage, by contrast, drops with $k$. Taken together, the three simulation dimensions indicate that the sparsity prior improves discovery, classification, point estimation, and inference simultaneously, with \gls{sbcf-iv} delivering near-nominal coverage at comparable interval width.
\begin{table}[H]
\centering
\caption{Estimation precision and uncertainty quantification for $\widehat{\tau}^{\text{CACE}}(x)$.}
\label{tab:short_precision_uncertainty}
\centering
\resizebox{\ifdim\width>\linewidth\linewidth\else\width\fi}{!}{
\begin{tabular}[t]{ll>{\raggedleft\arraybackslash}p{0.25cm}rrrrr>{\raggedleft\arraybackslash}p{0.25cm}rrrrr}
\toprule
\multicolumn{1}{c}{} & \multicolumn{1}{c}{} & \multicolumn{1}{c}{} & \multicolumn{5}{c}{\textbf{BCF-IV}} & \multicolumn{1}{c}{} & \multicolumn{5}{c}{\textbf{SBCF-IV}} \\
\cmidrule(l{3pt}r{3pt}){4-8} \cmidrule(l{3pt}r{3pt}){10-14}
$P$ & $k$ &  & \multicolumn{1}{c}{MSE} & \multicolumn{1}{c}{Bias} & \multicolumn{1}{c}{Abs. Bias} & \multicolumn{1}{c}{Coverage} & \multicolumn{1}{c}{Width} &  & \multicolumn{1}{c}{MSE} & \multicolumn{1}{c}{Bias} & \multicolumn{1}{c}{Abs. Bias} & \multicolumn{1}{c}{Coverage} & \multicolumn{1}{c}{Width}\\
\midrule
 & 0 &  & \makecell{0.188 (0.163)} & \makecell{-0.004 (0.225)} & \makecell{0.286 (0.140)} & \makecell{0.948 (0.176)} & \makecell{1.434 (0.219)} &  & \makecell{0.191 (0.161)} & \makecell{-0.004 (0.222)} & \makecell{0.314 (0.131)} & \makecell{0.957 (0.131)} & \makecell{1.600 (0.176)}\\





 & 1 &  & \makecell{0.473 (0.333)} & \makecell{0.000 (0.286)} & \makecell{0.569 (0.240)} & \makecell{0.695 (0.337)} & \makecell{1.600 (0.210)} &  & \makecell{0.298 (0.327)} & \makecell{-0.002 (0.308)} & \makecell{0.426 (0.256)} & \makecell{0.899 (0.235)} & \makecell{1.794 (0.160)}\\





\multirow{-3}{*}{\raggedright\arraybackslash 10} & 2 &  & \makecell{0.931 (0.729)} & \makecell{0.014 (0.446)} & \makecell{0.811 (0.354)} & \makecell{0.478 (0.335)} & \makecell{1.697 (0.174)} &  & \makecell{0.214 (0.211)} & \makecell{0.012 (0.337)} & \makecell{0.368 (0.199)} & \makecell{0.953 (0.149)} & \makecell{1.840 (0.141)}\\
\cmidrule{1-14}
 & 0 &  & \makecell{0.205 (0.244)} & \makecell{-0.003 (0.223)} & \makecell{0.273 (0.145)} & \makecell{0.953 (0.178)} & \makecell{1.399 (0.244)} &  & \makecell{0.214 (0.151)} & \makecell{-0.003 (0.222)} & \makecell{0.358 (0.140)} & \makecell{0.960 (0.111)} & \makecell{1.769 (0.143)}\\





 & 1 &  & \makecell{0.785 (0.472)} & \makecell{0.012 (0.241)} & \makecell{0.744 (0.288)} & \makecell{0.454 (0.404)} & \makecell{1.495 (0.232)} &  & \makecell{0.213 (0.232)} & \makecell{-0.003 (0.309)} & \makecell{0.362 (0.203)} & \makecell{0.949 (0.151)} & \makecell{1.830 (0.140)}\\





\multirow{-3}{*}{\raggedright\arraybackslash 50} & 2 &  & \makecell{2.088 (1.446)} & \makecell{-0.008 (0.299)} & \makecell{1.291 (0.508)} & \makecell{0.202 (0.256)} & \makecell{1.593 (0.255)} &  & \makecell{0.204 (0.203)} & \makecell{0.002 (0.320)} & \makecell{0.359 (0.192)} & \makecell{0.957 (0.145)} & \makecell{1.836 (0.138)}\\
\cmidrule{1-14}
 & 0 &  & \makecell{0.196 (0.234)} & \makecell{-0.020 (0.226)} & \makecell{0.253 (0.143)} & \makecell{0.957 (0.180)} & \makecell{1.325 (0.215)} &  & \makecell{0.217 (0.162)} & \makecell{-0.021 (0.226)} & \makecell{0.360 (0.143)} & \makecell{0.951 (0.111)} & \makecell{1.785 (0.136)}\\





 & 1 &  & \makecell{0.954 (0.418)} & \makecell{0.019 (0.220)} & \makecell{0.872 (0.251)} & \makecell{0.273 (0.351)} & \makecell{1.417 (0.243)} &  & \makecell{0.230 (0.215)} & \makecell{0.005 (0.319)} & \makecell{0.387 (0.199)} & \makecell{0.950 (0.149)} & \makecell{1.826 (0.139)}\\





\multirow{-3}{*}{\raggedright\arraybackslash 100} & 2 &  & \makecell{2.551 (1.522)} & \makecell{0.013 (0.257)} & \makecell{1.454 (0.526)} & \makecell{0.157 (0.222)} & \makecell{1.535 (0.235)} &  & \makecell{0.230 (0.238)} & \makecell{0.022 (0.340)} & \makecell{0.376 (0.212)} & \makecell{0.942 (0.164)} & \makecell{1.835 (0.146)}\\
\bottomrule
\end{tabular}}
\begin{notes}
    MSE, Bias, Absolute Bias, as well as Coverage and Width of 95\% confidence intervals. All values are averaged over $M = 500$ Monte Carlo replications with $N = 1{,}000$ and standard deviations across Monte Carlo replications are reported in parentheses. Results pool subgroups $l_1$ and $l_2$ and compare BCF-IV (left) with SBCF-IV (right). We focus on $k \in \{ 0, 1, 2\}$ and $P \in \{10, 50, 100\}$ for visibility. Table \ref{tab:full_precision_results} of Appendix~\ref{append:full_precision_results} reports full results.
\end{notes}
\end{table}

To check whether the improvements above are specific to the comparison of \gls{sbcf-iv} with \gls{bcf-iv} or to the cost weighting of \eqref{eq:cost}, Appendix~\ref{append:further_precision_results} reports an analysis that considers the instrumental forest of the \gls{grf} framework as an additional benchmark method \citep{athey_generalized_2019}. We further cross each method that estimates the heterogeneity signal (\gls{bcf}, \gls{sbcf}, \gls{grf}) in line four of Algorithm \ref{alg:sbcf-iv} with the split-frequency cost of \eqref{eq:cost}. Table \ref{tab:competitors} provides an overview of all six methods as variants of Algorithm \ref{alg:sbcf-iv}. The simulation shows that the sparsity-inducing prior, not the cost weighting, is the major source of the improvement. The two \gls{sbcf-iv} variants (with and without cost weighting) are nearly indistinguishable across the $(k,P)$ grid, consistent with Corollary~\ref{cor:coherence}. In contrast, \gls{grf-iv} degrades with $P$ similary to \gls{bcf-iv}. 
The cost weighting helps the non-sparse methods that are based on either \gls{grf} or \gls{bcf}, for which it partially restores detection and coverage while improving precision of the final estimates but without matching the performance of \gls{sbcf-iv}. We present full tables and figures in Appendix~\ref{append:further_precision_results}.


\section{Empirical application}
\label{ch:emp_appl}

We apply \gls{sbcf-iv} to two empirical datasets, the Oregon Health Insurance Experiment (OHIE) \citep{finkelstein_oregon_2012, johnson_detecting_2022} and the 401(k) dataset \citep{poterba_401k_1992, chernozhukov_doubledebiased_2018, belloni_program_2017}. The OHIE was a randomized controlled trial conducted in 2008 to assess the effects of expanding Medicaid coverage on health outcomes, financial security, and healthcare utilization. The US state Oregon used a lottery system to allocate a limited number of Medicaid spots to uninsured, low-income adults \citep{finkelstein_oregon_2012, johnson_detecting_2022}. This created a natural experiment, allowing researchers to compare those who received Medicaid to those who did not. Following \citet{johnson_detecting_2022}, we apply \gls{sbcf-iv} to the OHIE data. The outcome is the number of days in the past month on which poor physical or mental health did not impair the respondent's usual activities (a survey count between 0 and 30, with higher values indicating better health). \gls{sbcf-iv} identifies a single complier subgroup with a positive, statistically significant Medicaid effect: English-preferring individuals aged between 38 and 59 (effect: $2.263$; adjusted $p$-value: $0.0945$), in contrast to the two subgroups reported by \citet{johnson_detecting_2022}. Estimated subgroup compliance rates range from $19\%$ to $32\%$, in line with \citet{johnson_detecting_2022} and exhibiting only modest variation relative to the heterogeneity in conditional \gls{itt} effects. Both analyses thus locate the dominant source of complier-effect heterogeneity in the variation of the \gls{ccace} numerator rather than in compliance probabilities. Appendix \ref{sec:OHIE} re-investigates the empirical analysis presented in \citet{johnson_detecting_2022} in more detail.

On the 401(k) data, \gls{sbcf-iv} partitions eligible households into seven subgroups, of which two possess adjusted $p$-values at the 10\% level: the bulk subgroup with income below \$68{,}810 (89\% of the inference sample, $\widehat{\tau}^{\text{CACE}}(x) = \$17{,}818$) and a small upper-middle-income subgroup with household income between $\$92{,}690$ and $\$110{,}400$ with $\widehat{\tau}^{\text{CACE}}(x) = \$53{,}422$. 
The remaining five leaves show adjusted $p$-values without statistical significance. Four carry inference-sample shares at or below 1\% and the fifth below 7\%, which we read as small-sample artifacts.
The discovered splits run almost entirely along income, aligning with the lifecycle and earnings-gradient emphasis of the classical 401(k) saving literature \citep{poterba_401k_1992, engen_effects_2000}. As in the OHIE application, complier shares vary far less than the conditional CACE across leaves. The high-mass leaves cluster near 
 a complier share of 0.70, so the heterogeneity is driven by variation in the conditional \gls{itt} numerator rather than in the complier denominator. This mirrors the simulation evidence in Section~\ref{ch:sim_study} that \gls{sbcf-iv} recovers \gls{itt}-driven partitions reliably. Both statistically significant leaf estimates exceed the overall cross-fitted
\gls{cace} of $\$9{,}000$--$\$13{,}000$ reported by \citet{chernozhukov_doubledebiased_2018}, the bulk leaf modestly (by $\$5{,}000$--$\$9{,}000$) and the upper-middle-income leaf substantially. We present this as a descriptive comparison only: the gap
is consistent with several explanations (genuine
income-based heterogeneity in the \gls{cace}, post-selection effects from
reporting leaves after screening, and the
upward bias from unobserved saver heterogeneity flagged by
\citet{engen_illusory_1996}) which the present design cannot separate. We provide an extended discussion of this second empirical application in Appendix~\ref{sec:401k}.

\section{Conclusion}
\label{ch:conclusion}

This paper introduces \gls{sbcf-iv}, an extension of \gls{bcf-iv} \citep{bargagli-stoffi_heterogeneous_2022} that combines Bayesian shrinkage with subgroup-level estimation of the conditional CACE under imperfect compliance in high-dimensional covariate settings. \gls{sbcf-iv} preserves the interpretive structure of \gls{bcf-iv} with an explicit partition over compliers and subgroup-level \gls{cace} while introducing regularization needed to make that structure stable in high dimensions \citep{caron_shrinkage_2022}. The simulation study in Section \ref{ch:sim_study} demonstrates for varying effect sizes $k$ that, as the covariate dimension $P$ grows, \gls{sbcf-iv} preserves subgroup discovery rates, precision, and nominal coverage. In comparison, \gls{bcf-iv} exhibits a degradation along most criteria, leading to a loss of inferential reliability. Applied to two empirical applications, \gls{sbcf-iv} recovers interpretable partitions of compliers whose \gls{ccace} estimates differ across subgroups, illustrating its practical value beyond the simulated regime.

Several limitations warrant emphasis. First, we establish no selection- or partition-consistency guarantee for the discovery step. The sparsity prior's variable-selection benefit is that of \citet{caron_shrinkage_2022}, which we demonstrate empirically in our \gls{ccace} setting. The discovery--inference coherence we invoke is a population-target alignment rather than a finite-sample recovery result. Second, the current formulation is restricted to binary instruments and binary treatments, and the honest sample split between $\mathcal{I}_{\text{disc}}$ and $\mathcal{I}_{\text{inf}}$ reduces the effective sample size that may hinder subgroup discovery in smaller empirical studies. Extensions to continuous instruments and multi-valued treatments combined with cross-fitting are natural directions for future work.

\begin{ack}
We thank Christoph Hanck and four anonymous referees for valuable comments that improved the paper. The paper has been presented at the Statistical Week 2025 and the ICSDS 2025. We thank all participants for insightful discussions that enhanced the focus of the paper. The authors acknowledge partial financial support from TRR 391 Spatio-temporal Statistics for the Transition of Energy and Transport (520388526) by the Deutsche Forschungsgemeinschaft (DFG, German Research Foundation) and from the Rhine-Ruhr Center for Scientific Data Literacy (DKZ.2R) by the German Federal Ministry of Education and Research (BMBF). We acknowledge the use of Anthropic's Claude Opus 4.8 for proofreading and editorial assistance in preparing this manuscript. All content was checked and verified by the authors, who remain responsible for any errors. Replication code for simulations and both empirical applications is available at \url{https://github.com/jens-klenke/SBART-IV}.
\end{ack}

\bibliographystyle{agsm}
\bibliography{BART_IV_paper}


\appendix

\section{Identification of the conditional CACE}
\label{append:proof_cCACE}

This appendix proves Proposition~\ref{prop:identification}: under Assumption~\ref{assump:identification_irregular}, the complier estimand $\tau^{\text{CACE}}(x)$ of Definition~\ref{defn:cCACE} equals the Wald ratio $\text{ITT}_Y(x)/\pi_C(x)$ and is therefore identified from the observed distribution of $(Y_i, W_i, Z_i, X_i)$. The argument adapts \citet{angrist_identification_1996} to the conditional target as in \citet{bargagli-stoffi_heterogeneous_2022}.\footnote{Within this appendix, $\mu_z(x)$ denotes an observable conditional mean and is distinct from the BART prognostic control function $\mu(\cdot)$ of the main text.}

For brevity, contrast this irregular assignment setting with the regular assignment mechanism, where unconfoundedness $W_i \perp\!\!\!\perp \bigl(Y_i(0), Y_i(1)\bigr) \mid X_i$ and overlap $0 < \Pr(W_i = 1 \mid X_i = x) < 1$ identify the conditional average treatment effect $\tau(x) = \mathbb{E}[Y_i \mid W_i = 1, X_i = x] - \mathbb{E}[Y_i \mid W_i = 0, X_i = x]$ directly from observed conditional means \citep{imbens_causal_2015}. Under irregular assignment, $W_i$ is potentially confounded and direct identification of $\tau(x)$ fails. Identification of the complier-restricted target $\tau^{\text{CACE}}(x)$ uses the IV approach below.

\begin{proof}
Fix $x$ in the support of $X$ and define the observable nuisance functions
\begin{align*}
   \mu_z(x) &\coloneqq \mathbb{E}[Y_i \mid Z_i = z, X_i = x], \\
   \delta_z(x) &\coloneqq \mathbb{E}[W_i \mid Z_i = z, X_i = x],\\
   \qquad z &\in \{0,1\}.
\end{align*}
Both are identified from the observed distribution. By Definition~\ref{defn:cITT}, $\text{ITT}_Y(x) = \mu_1(x) - \mu_0(x)$, and we will show that $\pi_C(x) = \delta_1(x) - \delta_0(x)$ under monotonicity. The remaining task is to establish
\begin{equation}\label{eq:target}
   \mu_1(x) - \mu_0(x) \;=\; \tau^{\text{CACE}}(x)\,\bigl(\delta_1(x) - \delta_0(x)\bigr).
\end{equation}

Let us express $\mu_z(x)$ in potential outcomes. By Assumption~\ref{assump:identification_irregular}(a) (consistency) and (c) (unconfounded instrument),
\[
   \mu_z(x) \;=\; \mathbb{E}[Y_i(z, W_i(z)) \mid Z_i = z, X_i = x] \;=\; \mathbb{E}[Y_i(z, W_i(z)) \mid X_i = x].
\]
Hence
\begin{align}
   \mu_1(x) - \mu_0(x) 
   &= \mathbb{E}\bigl[Y_i(1, W_i(1)) - Y_i(0, W_i(0)) \bigm| X_i = x\bigr] \notag \\
   &\overset{\text{(d)}}{=} \mathbb{E}\bigl[Y_i(W_i(1)) - Y_i(W_i(0)) \bigm| X_i = x\bigr], \label{eq:step1}
\end{align}
where the second equality invokes the exclusion restriction. Since $W_i(z) \in \{0,1\}$, a case analysis over the four values of $(W_i(0), W_i(1))$ yields
\begin{equation}\label{eq:algebraic}
   Y_i(W_i(1)) - Y_i(W_i(0)) \;=\; \bigl[Y_i(1) - Y_i(0)\bigr]\bigl[W_i(1) - W_i(0)\bigr],
\end{equation}
which requires no identifying assumption beyond binarity of $W_i$. Substituting \eqref{eq:algebraic} into \eqref{eq:step1},
\begin{equation}\label{eq:step2}
   \mu_1(x) - \mu_0(x) \;=\; \mathbb{E}\bigl[\{Y_i(1) - Y_i(0)\}\{W_i(1) - W_i(0)\} \bigm| X_i = x\bigr].
\end{equation}

By Assumption~\ref{assump:identification_irregular}(e), $W_i(1) - W_i(0) \in \{0, 1\}$, so $W_i(1) - W_i(0) = \mathbf{1}\{G_i = C\}$. Combining with \eqref{eq:step2} and iterating the expectation,
\begin{equation}\label{eq:step3}
   \mu_1(x) - \mu_0(x) 
   \;=\; \mathbb{E}\bigl[Y_i(1) - Y_i(0) \bigm| G_i = C,\, X_i = x\bigr] \cdot \pi_C(x)
   \;=\; \tau^{\text{CACE}}(x) \cdot \pi_C(x),
\end{equation}
where the second equality uses Definition~\ref{defn:cCACE} together with (d) (so that $Y_i(w)$ is unambiguous on the complier subpopulation).
By consistency and unconfoundedness in Assumption~\ref{assump:identification_irregular}, $\delta_z(x) = \mathbb{E}[W_i(z) \mid X_i = x]$. Under monotonicity in Assumption~\ref{assump:identification_irregular}(e),
\begin{equation}\label{eq:pi_C}
   \delta_1(x) - \delta_0(x) \;=\; \mathbb{E}[W_i(1) - W_i(0) \mid X_i = x] \;=\; \Pr(G_i = C \mid X_i = x) \;=\; \pi_C(x),
\end{equation}
and Assumption~\ref{assump:identification_irregular}(b) ensures $\pi_C(x) > 0$ almost surely, so the ratio in \eqref{eq:target} is well-defined.
Combining \eqref{eq:step3} and \eqref{eq:pi_C} yields \eqref{eq:target}, hence
\begin{align}
   \tau^{\text{CACE}}(x) \;=\; \frac{\mu_1(x) - \mu_0(x)}{\delta_1(x) - \delta_0(x)} \;=\; \frac{\text{ITT}_Y(x)}{\pi_C(x)},
\end{align}
which is identified from the observed distribution of $(Y_i, W_i, Z_i, X_i)$.
\end{proof}

We further note that the cost anchor in \eqref{eq:cost} rests on a simple relationship between the two within-leaf summaries our procedure in Algorithm \ref{alg:sbcf-iv} uses. The shallow CART fitted to $\widehat\tau^{\,\text{SBCF}}(x)$ in Algorithm \ref{alg:sbcf-iv} assigns leaf $\mathbb{X}_j$ the covariate-averaged effect $A_j := \mathbb{E}[\tau^{\text{CACE}}(X_i)\mid X_i\in\mathbb{X}_j]$, an average of pointwise ratios. Subgroup 2SLS targets $\tau^{\text{CACE}}_{\mathbb{X}_j}$ of \eqref{eq:subgroup-cace}, the complier-averaged (ratio-of-averages) effect.

\begin{cor}[Discovery--inference coherence]\label{cor:coherence}
Let $\{\mathbb{X}_j\}_j$ partition the covariate space and fix a leaf $\mathbb{X}_j$ with
$\mathbb{E}_j[\pi_C(X_i)] := \mathbb{E}[\pi_C(X_i)\mid X_i\in\mathbb{X}_j] > 0$. Let
$A_j := \mathbb{E}[\tau^{\text{CACE}}(X_i)\mid X_i\in\mathbb{X}_j]$ be the covariate-averaged complier effect targeted by the discovery step in Algorithm \ref{alg:sbcf-iv} (the leaf mean of the pointwise signal
$\widehat\tau^{\,\text{SBCF}}(x)$). Let $\tau^{\text{CACE}}_{\mathbb{X}_j}$ of
\eqref{eq:subgroup-cace} be the complier-averaged effect targeted by subgroup 2SLS. Under
Assumption~\ref{assump:identification_irregular},
\[
  A_j - \tau^{\text{CACE}}_{\mathbb{X}_j}
  = -\,\frac{\operatorname{Cov}_j\!\big(\pi_C(X_i),\,\tau^{\text{CACE}}(X_i)\big)}{\mathbb{E}_j[\pi_C(X_i)]}.
\]
In particular, $A_j = \tau^{\text{CACE}}_{\mathbb{X}_j}$ whenever, within the leaf, compliance and complier effect are uncorrelated (i.e., if either is conditionally constant).
\end{cor}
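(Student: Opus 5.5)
The plan is to write $\tau^{\text{CACE}}_{\mathbb{X}_j}$ as a compliance-weighted average of the pointwise effect $\tau^{\text{CACE}}(x)$, and then compare it with the unweighted leaf average $A_j$. Throughout, $\mathbb{E}_j[\cdot]$ and $\operatorname{Cov}_j$ denote expectation and covariance with respect to the conditional law of $X_i$ given $X_i\in\mathbb{X}_j$.

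\emph{Step 1.} Rewrite the leaf target \eqref{eq:subgroup-cace} as a ratio. By the definition of conditional expectation and the exclusion restriction (so that $Y_i(w)$ is unambiguous),
\[
\tau^{\text{CACE}}_{\mathbb{X}_j}=\frac{\mathbb{E}\bigl[\{Y_i(1)-Y_i(0)\}\mathbf{1}\{G_i=C\}\bigm| X_i\in\mathbb{X}_j\bigr]}{\Pr(G_i=C\mid X_i\in\mathbb{X}_j)}.
\]

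\emph{Step 2.} Iterate expectations over $X_i$ inside the leaf. For the denominator this gives $\Pr(G_i=C\mid X_i\in\mathbb{X}_j)=\mathbb{E}_j[\pi_C(X_i)]$, which is positive by hypothesis. For the numerator, conditioning first on $X_i=x$ gives $\mathbb{E}[\{Y_i(1)-Y_i(0)\}\mathbf{1}\{G_i=C\}\mid X_i=x]=\tau^{\text{CACE}}(x)\,\pi_C(x)$. This is exactly the step \eqref{eq:step3} in the proof of Proposition~\ref{prop:identification}, and it is where monotonicity is used, since it gives $W_i(1)-W_i(0)=\mathbf{1}\{G_i=C\}$. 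Hence
\[
\tau^{\text{CACE}}_{\mathbb{X}_j}=\frac{\mathbb{E}_j[\pi_C(X_i)\,\tau^{\text{CACE}}(X_i)]}{\mathbb{E}_j[\pi_C(X_i)]}.
\]
Points with $\pi_C(x)=0$ make no contribution to the numerator. Relevance (Assumption~\ref{assump:identification_irregular}(b)) guarantees that $\tau^{\text{CACE}}(x)$ is well defined almost everywhere, so $A_j$ makes sense.

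\emph{Step 3.} Apply the covariance identity $\mathbb{E}_j[\pi_C\tau]=\mathbb{E}_j[\pi_C]\,\mathbb{E}_j[\tau]+\operatorname{Cov}_j(\pi_C,\tau)$. Dividing by $\mathbb{E}_j[\pi_C]$ gives
\[
\tau^{\text{CACE}}_{\mathbb{X}_j}=A_j+\frac{\operatorname{Cov}_j(\pi_C,\tau^{\text{CACE}})}{\mathbb{E}_j[\pi_C]},
\]
which rearranges to the stated identity. The ``in particular'' clause follows immediately: if either function is constant on the leaf, the covariance is zero, so $A_j=\tau^{\text{CACE}}_{\mathbb{X}_j}$.

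\emph{Main obstacle.} The delicate point is integrability. We need $\mathbb{E}_j|\tau^{\text{CACE}}(X_i)|<\infty$ so that $A_j$ and the covariance are finite; $\tau^{\text{CACE}}(x)$ can blow up as $\pi_C(x)\to0$, so this does not come for free. I would state this as an implicit regularity condition, namely finite second moments of the potential outcomes together with integrability of $\tau^{\text{CACE}}$ on the leaf. Under that condition the rest is routine algebra.
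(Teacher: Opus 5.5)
Your proof is correct and is essentially the paper's argument: both reduce the claim to $\tau^{\text{CACE}}_{\mathbb{X}_j}=\mathbb{E}_j[\pi_C(X_i)\,\tau^{\text{CACE}}(X_i)]/\mathbb{E}_j[\pi_C(X_i)]$ and then apply the covariance identity. The only difference is that the paper reaches the numerator through $\text{ITT}_Y(x)=\pi_C(x)\,\tau^{\text{CACE}}(x)$ from Proposition~\ref{prop:identification} and takes the leaf-level ratio for granted, whereas your tower-property Steps~1--2 actually derive it.

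Two small remarks. First, in Step~2 the identity $\mathbb{E}[\{Y_i(1)-Y_i(0)\}\mathbf{1}\{G_i=C\}\mid X_i=x]=\pi_C(x)\,\tau^{\text{CACE}}(x)$ is just the definition of conditional expectation, so monotonicity plays no role in your route. Second, your integrability caveat is a genuine implicit assumption of the statement, but $\mathbb{E}_j|\tau^{\text{CACE}}(X_i)|<\infty$ already suffices (because $0\le\pi_C\le 1$), so no second-moment condition is needed.
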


\begin{proof}
By Proposition~\ref{prop:identification}, $\text{ITT}_Y(x)=\pi_C(x)\,\tau^{\text{CACE}}(x)$.
Taking the expectation operator $\mathbb{E}_j$ and applying $\mathbb{E}_j[\pi_C(X_i)\,\tau^{\text{CACE}}(X_i)]
= \mathbb{E}_j[\pi_C(X_i)]\,\mathbb{E}_j[\tau^{\text{CACE}}(X_i)] + \operatorname{Cov}_j(\pi_C(X_i),\tau^{\text{CACE}}(X_i))$ yields
\begin{align*}
\tau^{\text{CACE}}_{\mathbb{X}_j} &= \mathbb{E}_j[\text{ITT}_Y(X_i)]/\mathbb{E}_j[\pi_C(X_i)] \\
&= A_j + \operatorname{Cov}_j(\pi_C(X_i),\tau^{\text{CACE}}(X_i))/\mathbb{E}_j[\pi_C(X_i)],
\end{align*}
with $\mathbb{E}_j[\pi_C(X_i)]>0$ by Assumption~\ref{assump:identification_irregular}(b).
\end{proof}

Corollary~\ref{cor:coherence} is the leaf-level form of the fact that an aggregated Wald estimand is a compliance-weighted average of covariate-specific complier effects \citep{angrist_twostage_1995, abadie_semiparametric_2003, frolich_nonparametric_2007}.
It concerns the alignment of population targets and does not establish selection or partition consistency of the discovery step, which we do not claim.

\section{Conditional 2SLS: reduced form and asymptotic properties}
\label{append:theorem_2SLS}

This appendix collects the reduced form of the subgroup-restricted simultaneous system \eqref{eq:simultaneous_equations} and the asymptotic properties of the 2SLS estimator $\widehat\tau^{\,\text{2SLS}}_{\mathbb{X}_j}$ defined in Definition~\ref{defn:cCACE_estimator}. The results restate the unconditional 2SLS theory of \citet{wooldridge_2010} with identification of the unconditional CACE following \citet{angrist_identification_1996} and \citet{imbens_causal_2015}, transferred to the conditional target $\tau^{\text{CACE}}_{\mathbb{X}_j}$ \citep{bargagli-stoffi_heterogeneous_2022}.

Substituting the first-stage equation of \eqref{eq:simultaneous_equations} into the outcome equation gives the reduced form
\begin{align*}
   Y_{i, \mathbb{X}_j} 
   &= \bigl( \kappa_{\mathbb{X}_j} + \tau^{\text{CACE}}_{\mathbb{X}_j}\,\pi_{0, \mathbb{X}_j} \bigr) 
    + \underbrace{\bigl( \tau^{\text{CACE}}_{\mathbb{X}_j}\,\pi_{C, \mathbb{X}_j} \bigr)}_{\gamma_{\mathbb{X}_j}} Z_{i, \mathbb{X}_j} 
    + \bigl( \varepsilon_{i, \mathbb{X}_j} + \tau^{\text{CACE}}_{\mathbb{X}_j}\,\nu_{i, \mathbb{X}_j} \bigr).
\end{align*}
Under $\mathbb{E}[\varepsilon_{i,\mathbb{X}_j}] = \mathbb{E}[\nu_{i, \mathbb{X}_j}] = 0$ and $\mathbb{E}[Z_{i,\mathbb{X}_j}\,\nu_{i, \mathbb{X}_j}] = \mathbb{E}[Z_{i, \mathbb{X}_j}\,\varepsilon_{i, \mathbb{X}_j}] = 0$, Ordinary Least Squares (OLS) consistently estimates $\pi_{C, \mathbb{X}_j}$ and $\gamma_{\mathbb{X}_j}$, such that
\begin{equation}\label{eq:2sls_ratio}
   \widehat\tau^{\,\text{2SLS}}_{\mathbb{X}_j} \;=\; \frac{\widehat\gamma_{\mathbb{X}_j}}{\widehat\pi_{C, \mathbb{X}_j}}
\end{equation}
coincides with the moment-based ratio of Definition~\ref{defn:cCACE_estimator} \citep{wooldridge_2010}.
Let $N_{\mathbb{X}_j}$ denote the number of observations in $\mathbb{X}_j$, and consider the subgroup-restricted moment conditions
\begin{enumerate}[label=(A\arabic*)]
   \item \label{A1} $\mathbb{E}[Z_{i, \mathbb{X}_j}^2] \neq 0$;
   \item \label{A2} $\mathbb{E}[Z_{i, \mathbb{X}_j}\,\varepsilon_{i, \mathbb{X}_j}] = 0$;
   \item \label{A3} $\pi_{C, \mathbb{X}_j} \neq 0$;
   \item \label{A4} $\mathbb{E}[Z_{i, \mathbb{X}_j}^2\,\varepsilon_{i, \mathbb{X}_j}^2] < \infty$.
\end{enumerate}
Conditions~\ref{A1}--\ref{A3} are the subgroup analogs of the standard IV moment restrictions and are implied within $\mathbb{X}_j$ by Assumption~\ref{assump:identification_irregular} together with the exogeneity of $Z_i$.

\begin{thm}[Consistency]\label{thm:consistency_2sls}
   Under \ref{A1}--\ref{A3}, $\widehat\tau^{\,\text{2SLS}}_{\mathbb{X}_j} - \tau^{\text{CACE}}_{\mathbb{X}_j} \overset{p}{\longrightarrow} 0$ as $N_{\mathbb{X}_j} \to \infty$.
\end{thm}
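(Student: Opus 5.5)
The plan is to use the ratio representation \eqref{eq:2sls_ratio}, $\widehat\tau^{\,\text{2SLS}}_{\mathbb{X}_j} = \widehat\gamma_{\mathbb{X}_j}/\widehat\pi_{C,\mathbb{X}_j}$, and prove the result in three steps.

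\emph{Step 1: consistency of the two OLS slopes.} Within $\mathbb{X}_j$, with i.i.d.\ observations, the first stage in \eqref{eq:simultaneous_equations} and the reduced form displayed above are linear regressions on $(1, Z_{i,\mathbb{X}_j})$. Writing each slope as a sample-covariance ratio, for instance
\[
\widehat\pi_{C,\mathbb{X}_j}
= \pi_{C,\mathbb{X}_j}
+ \frac{N_{\mathbb{X}_j}^{-1}\sum (Z_i-\bar Z)\nu_i}{N_{\mathbb{X}_j}^{-1}\sum (Z_i-\bar Z)^2},
\]
the weak law of large numbers sends the numerator to $\operatorname{Cov}(Z,\nu)=0$. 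The denominator tends to $\operatorname{Var}(Z_{i,\mathbb{X}_j})>0$. For binary $Z$, this variance is positive because both $N_{1,j}$ and $N_{0,j}$ are nonzero in the limit. The same argument applies to $\gamma_{\mathbb{X}_j}$, with composite error $\varepsilon+\tau\nu$ and \ref{A2}. Hence $\widehat\pi_{C,\mathbb{X}_j}\overset{p}{\to}\pi_{C,\mathbb{X}_j}$ and $\widehat\gamma_{\mathbb{X}_j}\overset{p}{\to}\gamma_{\mathbb{X}_j}=\tau^{\text{CACE}}_{\mathbb{X}_j}\pi_{C,\mathbb{X}_j}$. Equivalently, for the Wald form of Definition~\ref{defn:cCACE_estimator}, each of the four subgroup means converges by the WLLN to its conditional expectation.

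\emph{Step 2: continuous mapping.} The map $(g,p)\mapsto g/p$ is continuous at $(\gamma_{\mathbb{X}_j},\pi_{C,\mathbb{X}_j})$ because $\pi_{C,\mathbb{X}_j}\neq 0$ by \ref{A3}. The continuous mapping theorem (or Slutsky) then gives $\widehat\tau^{\,\text{2SLS}}_{\mathbb{X}_j}\overset{p}{\to}\gamma_{\mathbb{X}_j}/\pi_{C,\mathbb{X}_j}=\tau^{\text{CACE}}_{\mathbb{X}_j}$. The event $\widehat\pi_{C,\mathbb{X}_j}=0$ has probability tending to zero, so the ratio is well defined with probability approaching one.

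\emph{Step 3: the main obstacle.} The hard part is interpretive rather than technical. I must check that the structural coefficient in \eqref{eq:simultaneous_equations} really equals the estimand \eqref{eq:subgroup-cace}, that is, that $\gamma_{\mathbb{X}_j}/\pi_{C,\mathbb{X}_j}$ is the complier-averaged effect on the leaf. For this I take population projections within the leaf. The reduced-form slope is $\mathbb{E}_j[\text{ITT}_Y(X)]$, but only when $Z$ is independent of $X$ within the leaf, as in a randomized instrument. The first-stage slope is $\mathbb{E}_j[\pi_C(X)]$. Proposition~\ref{prop:identification}, through $\text{ITT}_Y=\pi_C\tau^{\text{CACE}}$, then gives the ratio $\mathbb{E}_j[\pi_C\tau^{\text{CACE}}]/\mathbb{E}_j[\pi_C]$. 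This is exactly the complier-weighted average $\mathbb{E}[Y(1)-Y(0)\mid G=C,X\in\mathbb{X}_j]$ by iterated expectations, as in the proof of Corollary~\ref{cor:coherence}.

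I would state the independence of $Z$ and $X$ within the leaf explicitly as the sense in which \ref{A2} holds. The alternative is to condition on a fixed partition learned on $\mathcal{I}_{\text{disc}}$: honesty makes the partition independent of the $\mathcal{I}_{\text{inf}}$ data, so it can be treated as fixed and the argument applies conditionally.
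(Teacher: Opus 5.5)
Your Steps 1--2 follow the paper's argument: its proof sketch treats the within-leaf subsample as i.i.d., applies the law of large numbers to the numerator $\widehat\gamma_{\mathbb{X}_j}$ and the denominator $\widehat\pi_{C,\mathbb{X}_j}$, and then applies the continuous mapping theorem to \eqref{eq:2sls_ratio}, with \ref{A3} giving continuity at the limit. Your Step 3 (the limit equals the complier-weighted leaf average \eqref{eq:subgroup-cace} when $Z_i$ is independent of $X_i$ within the leaf and the honestly learned partition is held fixed) is a sound clarification that the paper leaves implicit in its remark that \ref{A1}--\ref{A3} follow from Assumption~\ref{assump:identification_irregular}, though the theorem does not need it once \ref{A2} is imposed; one small fix is that $\operatorname{Var}(Z_{i,\mathbb{X}_j})>0$ should be justified by $0<\Pr(Z_i=1\mid X_i\in\mathbb{X}_j)<1$, not by the sample counts.
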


\begin{thm}[Asymptotic normality]\label{thm:normality_2sls}
   Under \ref{A1}--\ref{A4},
   \[
      \sqrt{N_{\mathbb{X}_j}}\,\bigl(\widehat\tau^{\,\text{2SLS}}_{\mathbb{X}_j} - \tau^{\text{CACE}}_{\mathbb{X}_j}\bigr) 
      \;\overset{d}{\longrightarrow}\; \mathcal{N}\bigl(0,\; N_{\mathbb{X}_j}\cdot\mathrm{avar}(\widehat\tau^{\,\text{2SLS}}_{\mathbb{X}_j})\bigr) \qquad \text{as } N_{\mathbb{X}_j} \to \infty,
   \]
   with $\mathrm{avar}(\widehat\tau^{\,\text{2SLS}}_{\mathbb{X}_j})$ the asymptotic variance of the 2SLS estimator, approximated as in \citet{wooldridge_2010}.
\end{thm}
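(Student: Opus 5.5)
The statement to prove is Theorem~\ref{thm:normality_2sls}. I will treat it as a just-identified IV problem inside the fixed leaf $\mathbb{X}_j$, using the ratio representation \eqref{eq:2sls_ratio} together with a delta-method argument.

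\textbf{Step 1: linearize the estimator.} Write $\tilde Z_i = Z_{i,\mathbb{X}_j} - \bar Z$ and let $W_{i,\mathbb{X}_j}$ be the endogenous regressor. Substituting the outcome equation of \eqref{eq:simultaneous_equations} gives
\[
\widehat\tau^{\,\text{2SLS}}_{\mathbb{X}_j} - \tau^{\text{CACE}}_{\mathbb{X}_j} = \frac{N_{\mathbb{X}_j}^{-1}\sum_i \tilde Z_i\,\varepsilon_{i,\mathbb{X}_j}}{N_{\mathbb{X}_j}^{-1}\sum_i \tilde Z_i\, W_{i,\mathbb{X}_j}}.
\]
The intercept drops out because $\sum_i \tilde Z_i = 0$.

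\textbf{Step 2: the denominator.} By the weak law of large numbers for i.i.d.\ draws within $\mathbb{X}_j$, the denominator converges in probability to $\operatorname{Var}(Z)\,\pi_{C,\mathbb{X}_j}$. By \ref{A1} and \ref{A3} this limit is nonzero, since a binary randomized $Z$ with $\mathbb{E}[Z^2]\neq0$ in the leaf has positive variance. This is the same step that yields Theorem~\ref{thm:consistency_2sls}, which I take as given.

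\textbf{Step 3: the numerator.} Rescale the numerator by $\sqrt{N_{\mathbb{X}_j}}$ and write it as
\[
N_{\mathbb{X}_j}^{-1/2}\sum_i (Z_i-\mathbb{E}Z)\,\varepsilon_i \;-\; \sqrt{N_{\mathbb{X}_j}}\,(\bar Z-\mathbb{E}Z)\,\bar\varepsilon.
\]
The second term is $O_p(1)\cdot o_p(1)$, because $\bar\varepsilon\to0$ using $\mathbb{E}\varepsilon=0$. The first term has mean zero, which follows from \ref{A2} and $\mathbb{E}\varepsilon=0$. Its variance is finite: \ref{A4} bounds the $Z^2\varepsilon^2$ part, and since $Z$ is bounded, $\mathbb{E}[\varepsilon^2]$ is finite on $\{Z=1\}$ and can be treated as finite overall. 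The Lindeberg--Lévy CLT then gives convergence to $\mathcal{N}(0,\mathbb{E}[(Z-\mathbb{E}Z)^2\varepsilon^2])$.

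\textbf{Step 4: combine.} Slutsky's lemma on the ratio gives
\[
\sqrt{N_{\mathbb{X}_j}}\bigl(\widehat\tau^{\,\text{2SLS}}_{\mathbb{X}_j}-\tau^{\text{CACE}}_{\mathbb{X}_j}\bigr)\to\mathcal{N}(0,V), \qquad V=\frac{\mathbb{E}[(Z-\mathbb{E}Z)^2\varepsilon^2]}{\operatorname{Var}(Z)^2\pi_{C,\mathbb{X}_j}^2}.
\]
This $V$ is the heteroskedasticity-robust sandwich form of \citet{wooldridge_2010}. I then identify $V$ with $N_{\mathbb{X}_j}\cdot\mathrm{avar}$, which is how the statement reads the normalized asymptotic variance.

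\textbf{Main obstacle: the target in \ref{A2}.} The delicate point is that \ref{A2} really pins $\tau^{\text{CACE}}_{\mathbb{X}_j}$ as the estimand. I would first define $\tau^{\text{CACE}}_{\mathbb{X}_j}$ as the Wald ratio $\mathbb{E}_j[\text{ITT}_Y]/\mathbb{E}_j[\pi_C]$ from Corollary~\ref{cor:coherence}. I would then define $\varepsilon$ by the outcome equation with the matching $\kappa_{\mathbb{X}_j}$, so that $\operatorname{Cov}(Z,\varepsilon)=0$ holds by construction under Assumption~\ref{assump:identification_irregular}. Here I use that $Z$ is randomized within the leaf, so the leaf-pooled ITT equals $\mathbb{E}_j[\text{ITT}_Y]$.

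A second point needs care. The asymptotics are conditional on the partition learned on $\mathcal{I}_{\text{disc}}$. Honest splitting makes the leaf independent of $\mathcal{I}_{\text{inf}}$, so the i.i.d.\ within-leaf argument applies conditionally.
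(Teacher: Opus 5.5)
Your proof is correct and reaches the same limit as the paper, by a different decomposition.

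\textbf{The two routes.} The paper's sketch works with the two reduced-form OLS slopes. It invokes a joint CLT for $(\widehat\gamma_{\mathbb{X}_j},\widehat\pi_{C,\mathbb{X}_j})$ and then applies the delta method to the map $(a,b)\mapsto a/b$ in \eqref{eq:2sls_ratio}. You instead substitute the structural equation and get the exact identity $\widehat\tau^{\,\text{2SLS}}_{\mathbb{X}_j}-\tau^{\text{CACE}}_{\mathbb{X}_j}=\sum_i\tilde Z_i\varepsilon_i/\sum_i\tilde Z_iW_i$. That leaves you needing only a scalar CLT, a WLLN, and Slutsky: no Taylor expansion and no bivariate covariance matrix.

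\textbf{They give the same variance.} The reduced-form influence functions are $\tilde Z(\varepsilon+\tau\nu)/\operatorname{Var}(Z)$ for $\widehat\gamma_{\mathbb{X}_j}$ and $\tilde Z\nu/\operatorname{Var}(Z)$ for $\widehat\pi_{C,\mathbb{X}_j}$. Applying the delta-method gradient $\pi_{C,\mathbb{X}_j}^{-1}(1,-\tau^{\text{CACE}}_{\mathbb{X}_j})$ cancels the $\nu$ terms and leaves $\tilde Z\varepsilon/(\pi_{C,\mathbb{X}_j}\operatorname{Var}(Z))$. Its second moment is exactly your $V$.

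\textbf{What your route adds.} It is phrased directly in the structural error $\varepsilon$, which is the object that (A2) and (A4) actually constrain. It also derives the robust sandwich form that the paper only cites from Wooldridge. Your closing remarks cover points the sketch leaves implicit. You construct $\kappa_{\mathbb{X}_j}$ and $\varepsilon$ so that (A2) holds by design with $\tau^{\text{CACE}}_{\mathbb{X}_j}$ equal to the pooled Wald ratio of Corollary~\ref{cor:coherence}; this is valid when $Z$ is independent of $X$ within the leaf. You also condition on the partition learned on $\mathcal{I}_{\text{disc}}$, which honest splitting justifies.

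\textbf{Two steps lean on more than the literal hypotheses.}
\begin{enumerate}
\item $\mathbb{E}[Z^2]\neq0$ does not give $\operatorname{Var}(Z)>0$; take $Z\equiv1$. You need $0<\Pr(Z=1\mid X\in\mathbb{X}_j)<1$, and your word ``randomized'' is what carries this.
\item For binary $Z$, (A4) only bounds $\mathbb{E}[\varepsilon^2\mathbf{1}\{Z=1\}]$. Saying $\mathbb{E}[\varepsilon^2]$ ``can be treated as finite overall'' is therefore an extra assumption. Complier $Y(0)$ with infinite variance, together with well-behaved $Y(1)$, satisfies (A4) as written but breaks your CLT step.
\end{enumerate}
Both are repaired by reading (A1) and (A4) with the instrument vector $(1,Z)$, as in Wooldridge's rank and moment conditions. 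The paper's sketch relies on the same reading, so this is an imprecision in the stated hypotheses rather than a defect of your route.
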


\begin{proof}[Proof sketch]
Both results follow from the unconditional 2SLS case applied to the i.i.d.\ subsample $\{(Y_l, W_l, Z_l) : X_l \in \mathbb{X}_j\}$ \citep{wooldridge_2010, bargagli-stoffi_heterogeneous_2022}. Consistency uses the continuous mapping theorem applied to \eqref{eq:2sls_ratio} together with the law of large numbers for the numerator and denominator while normality follows from the delta method combined with a central limit theorem for $(\widehat\gamma_{\mathbb{X}_j}, \widehat\pi_{C, \mathbb{X}_j})$, where \ref{A4} ensures a finite asymptotic variance.
\end{proof}

We note that Assumption~\ref{A3} is the subgroup-level relevance condition which does not need to follow from overall instrument strength. In \citet{bargagli-stoffi_heterogeneous_2022} and \citet{ding_decomposing_2019}, the approximations in Theorems~\ref{thm:consistency_2sls}--\ref{thm:normality_2sls} require a sufficient number of observations within each $\mathbb{X}_j$. Shallower trees produce subgroups with both higher interpretability and higher statistical power \citep{athey_recursive_2016, lee_discovering_2021}, motivating the trimming of small or weak-instrument nodes in \gls{sbcf-iv}.

\section{Background: BART, BCF, SBCF, and CART}
\label{append:bart_bcf_cart}

This appendix recalls the ingredients on which \gls{sbcf-iv} is built. We keep the exposition minimal and refer for full treatments to the cited references.

\paragraph{Classification and Regression Tree (CART).} The CART algorithm \citep{breiman_classification_1984} recursively partitions the covariate space along axis-aligned splits, choosing at each node the split that maximizes within-node homogeneity of the response. The terminal-node piecewise-constant predictor is interpretable but a high-variance single learner.

\paragraph{Bayesian Additive Regression Trees (BART).} BART \citep{chipman_bart_2010} models the conditional mean as a sum of regression trees, and place three regularizing priors: one penalizing tree depth, one shrinking leaf-level predictions toward the response center, and one bounding the error variance away from zero. Jointly, these priors prevent any single tree from dominating the fit and yield coherent posterior uncertainty. At each split, BART draws the candidate splitting variable according to a vector of selection probabilities $s = (s_1, \ldots, s_P)$, which by default is uniform, $s_j = 1/P$.

\paragraph{Bayesian Causal Forest (BCF).} The BCF of \citet{hahn_bayesian_2020} adapts BART to causal inference under a regular assignment mechanism. Two features carry over to our setting: (i)~the conditional mean of the outcome is decomposed into a control function and a treatment effect function, each assigned an independent BART prior; (ii)~the propensity score is included as a covariate in the control function to mitigate regularization-induced confounding and targeted selection. Different depth-penalty parameters across the two components encourage shallower, more interpretable trees for the treatment effect component. Equation~\eqref{eq:cond_exp_Y} is the IV analog of this decomposition, with $Z_i$ and $\text{ITT}_Y(x)$ replacing the treatment indicator and treatment effect, respectively.

\paragraph{Shrinkage Bayesian Causal Forest (SBCF).} SBCF \citep{caron_shrinkage_2022} augments BCF with a sparsity-inducing Dirichlet prior on the split-variable selection probabilities, replacing BART's default uniform $s_j = 1/P$ with $s \sim \mathrm{Dir}(\alpha/P, \ldots, \alpha/P)$ and a hyperprior $\alpha/(\alpha + \rho) \sim \mathrm{Beta}(a, b)$ on the concentration parameter. Small $\alpha$ concentrates posterior mass on a few covariates. The hyperprior lets the data determine how aggressive this concentration should be, with the default $(a, b, \rho) = (0.5, 1, P)$ strengthening the sparsity preference as $P$ grows. The prior acts at each split independently, so covariates with low posterior inclusion probability are rarely chosen throughout the ensemble. \citet{caron_shrinkage_2022} show that SBCF improves CATE estimation over BCF when many covariates are irrelevant and in confounded-data settings, and that the posterior split frequencies $\widehat{s}$ provide a natural variable-importance summary. \gls{sbcf-iv} inherits both the estimation benefit and the interpretability of $\widehat{s}$: the latter is exploited in \eqref{eq:cost} to steer the subgroup-finding CART toward the covariates the ensemble deemed most relevant.

\section{Prior specifications for SBCF-IV}
\label{append:sbcf_priors}

\gls{sbcf-iv} combines the BCF decomposition in \eqref{eq:cond_exp_Y} and \eqref{eq:cond_exp_W} with the sparsity-inducing Dirichlet prior on split-variable selection probabilities proposed by \citet{caron_shrinkage_2022}. The main text states the prior generically in~\eqref{eq:sbcf_prior}. We record the component-specific hyperparameter choices in the following.

Let $s_\bullet = (s_{\bullet,1}, \ldots, s_{\bullet, P_\bullet})$ denote the split-probability vector for each of the three functions $\mu(e(x), x)$, $\text{ITT}_Y(x)$, and $\delta(z, x)$ in \eqref{eq:cond_exp_Y}--\eqref{eq:cond_exp_W}. We place independent priors
\begin{align}
   s_\mu &\sim \mathrm{Dir}\!\left(\tfrac{\alpha_\mu}{P+1}, \ldots, \tfrac{\alpha_\mu}{P+1}\right), 
   & \tfrac{\alpha_\mu}{\alpha_\mu + \rho_\mu} &\sim \mathrm{Beta}(a, b), 
   & \rho_\mu &= P + 1, \label{eq:prior_mu} \\
   s_{\text{ITT}_Y} &\sim \mathrm{Dir}\!\left(\tfrac{\alpha_{\text{ITT}_Y}}{P}, \ldots, \tfrac{\alpha_{\text{ITT}_Y}}{P}\right), 
   & \tfrac{\alpha_{\text{ITT}_Y}}{\alpha_{\text{ITT}_Y} + \rho_{\text{ITT}_Y}} &\sim \mathrm{Beta}(a, b), 
   & \rho_{\text{ITT}_Y} &= P/2, \label{eq:prior_itt} \\
   s_\delta &\sim \mathrm{Dir}\!\left(\tfrac{\alpha_\delta}{P+1}, \ldots, \tfrac{\alpha_\delta}{P+1}\right),
   & \tfrac{\alpha_\delta}{\alpha_\delta + \rho_\delta} &\sim \mathrm{Beta}(a, b),
   & \rho_\delta &= P+1, \label{eq:prior_delta}
\end{align}
with shared defaults $(a, b) = (0.5, 1)$ throughout. Three comments are in order.

\textit{(i) Dimension of $s_\mu$.} The propensity score $e(x)$ enters $\mu$ as an additional covariate, so $P_\mu = P+1$. The extra dimension is reflected both in the Dirichlet parameter and in the scale $\rho_\mu$ \citep{hahn_bayesian_2020}.

\textit{(ii) Targeted shrinkage via $\rho$.} Under $(a,b) = (0.5, 1)$, the hyperprior on $\alpha/(\alpha + \rho)$ concentrates near zero. Since $\alpha$ is increasing in $\rho$ at any fixed value
of the ratio, a smaller $\rho$ pulls $\alpha$ toward zero and strengthens the Dirichlet's concentration on few covariates.
Setting $\rho_{\text{ITT}_Y} = P/2 < P+1 = \rho_\mu$ imposes a sharper sparsity preference on the treatment effect component than on the prognostic function, consistent with the premise that treatment effect heterogeneity is driven by a smaller subset of covariates than the outcome's prognostic effects \citep{hahn_bayesian_2020, caron_shrinkage_2022}. The case $(a, b) = (1, 1)$ with $\alpha \to \infty$ recovers the uniform $s_{\bullet, j} = 1/P_\bullet$ used by \gls{bcf-iv} \citep{bargagli-stoffi_heterogeneous_2022}.

\textit{(iii) Unified treatment of $Y_i$ and $W_i$.} The instrument $z$
enters $\delta$ as an additional splittable covariate, so $P_\delta = P+1$;
the extra dimension is reflected in the Dirichlet parameter and in the
scale $\rho_\delta$, paralleling the treatment of $e(x)$ in $\mu$.
Specification~\eqref{eq:prior_delta} thus applies the same Dirichlet-based
sparsity mechanism to the compliance model, implemented as a SoftBART
probit in the sense of \citet{hill_bayesian_2011}. No separate
justification is needed: the sparsity concern (only a subset of
covariates drives either outcome heterogeneity or differential compliance) is identical in both components, and the Dirichlet prior is the natural vehicle for it in both cases. Because $\delta(z,x)$ is not further
decomposed as in~\eqref{eq:cond_exp_Y}, a single $s_\delta$ suffices
rather than a pair of component-specific priors.

\section{Supplementary materials for the simulation study}
\label{append:supmat_simstudy}

\subsection{Full simulation design}\label{append:sim_design}

This appendix collects the full specification of the \gls{DGP} summarized in Section~\ref{ch:sim_study}. For each unit $i = 1, \ldots, N$ with $N = 1{,}000$, potential outcomes, potential treatments, and covariates are drawn independently as
\begin{align*}
   Z_i &\sim \text{Bernoulli}(0.5), \\
   W_i(1) &\sim \text{Bernoulli}(\pi_{\text{comp}} = 0.75), \qquad W_i(0) = 0, \\
   W_i &= Z_i\, W_i(1) + (1 - Z_i)\, W_i(0), \\
   Y_i(0) &=  \mu(X_i) + \epsilon_i, \quad \epsilon_i \sim \mathcal{N}(0, 1), \\
   Y_i(1) &= Y_i(0) + W_i(1)\, \tau^{\text{CACE}}(X_i),
\end{align*}
with observed outcome $Y_i = Z_i\, Y_i(1) + (1 - Z_i)\, Y_i(0)$ under \gls{SUTVA} (Assumption~\ref{assump:identification_irregular}(a)). The covariate vector $X_i = (X_{i,1}, \ldots, X_{i,P}) \in \mathbb{R}^P$ consists of $P/2$ binary and $P/2$ continuous independent components,
\begin{equation*}
   X_{i,1}, \ldots, X_{i, P/2} \overset{\text{i.i.d.}}{\sim} \text{Bernoulli}(0.5), \qquad X_{i, P/2+1}, \ldots, X_{i, P} \overset{\text{i.i.d.}}{\sim} \mathcal{N}(0, 1),
\end{equation*}
with $P \in \{10, 50, 100\}$. The three values of $P$ probe increasing degrees of sparsity: only $X_{i,1}$ and $X_{i,2}$ drive heterogeneity in $\tau^{\text{CACE}}(x)$, so the share of relevant covariates in the treatment effect component is $2/P \in \{0.2, 0.04, 0.02\}$. The compliance rate $\pi_{\text{comp}} = 0.75$ governs the strength of $Z_i$ as an instrument for $W_i$. One-sided non-compliance is built in via $W_i(0) = 0$: units not assigned to treatment cannot receive it, so the population decomposes into compliers ($W_i(1) = 1$) and never-takers ($W_i(1) = 0$), with $\pi_C = \pi_{\text{comp}}$ and $\pi_{NT} = 1 - \pi_{\text{comp}}$. Defiers and always-takers are ruled out by design, and the monotonicity assumption~\ref{assump:identification_irregular}(e) holds with equality for the never-takers and strictly for the compliers.

The conditional CACE follows the setup in \citep{bargagli-stoffi_heterogeneous_2022} and is piecewise constant over three regions of the binary covariate space,
\begin{equation}\label{eq:tau_CACE_sim}
   \tau^{\text{CACE}}(X_i) =
   \begin{cases}
      \phantom{-}k, & X_i \in l_1 = \{X_{i,1} = 0,\, X_{i,2} = 0\}, \\
      -k,           & X_i \in l_2 = \{X_{i,1} = 1,\, X_{i,2} = 1\}, \\
      \phantom{-}0, & X_i \in l_0 = \{X_i \notin l_1 \cup l_2\},
   \end{cases}
\end{equation}
with effect size $k \in \{0, 0.2, 0.4, 0.6, 0.8, 1, 1.2, 1.4, 1.6, 1.8, 2\}$ to distinguish no, moderate, and large heterogeneity regimes. The two non-null subgroups $l_1$ and $l_2$ each have population mass $0.25$ under the Bernoulli distribution of $(X_{i,1}, X_{i,2})$, and carry opposite-signed effects, so the marginal CACE averages to zero. This design makes a naive marginal estimator uninformative and forces any algorithm targeting $\tau^{\text{CACE}}(x)$ to recover the heterogeneity structure. The subgroups $l_1, l_2, l_0$ are ground-truth objects defined by the \gls{DGP} and we distinguish them from the data-driven partition $\{\mathbb{X}_j\}_j$ produced by \gls{sbcf-iv} on $\mathcal{I}_{\text{disc}}$, whose leaves approximate $l_1$ and $l_2$ when discovery succeeds.
The baseline prognostic outcome function $\mu(X_i)$, adopted from \citet{caron_shrinkage_2022}, depends only on the first three continuous covariates and takes the explicit form
\begin{equation}\label{eq:mu_control}
   \mu(X_i) \;=\; 3 + 1.5\sin\!\bigl(\pi\, X_{i, P/2+1}\bigr) + 0.5\,\bigl(X_{i, P/2+2} - 0.5\bigr)^2 + 1.5\,\bigl(2 - |X_{i, P/2+3}|\bigr).
\end{equation}
The three terms introduce nonlinearity, curvature, and non-differentiability into the control component, providing a realistic stress test for the control-function BART prior on $\mu(e(x), x)$ in \eqref{eq:cond_exp_Y}. The relevant covariates for $\mu(X_i)$ in Equation \eqref{eq:mu_control} and for $\tau^{\text{CACE}}(X_i)$ in Equation \eqref{eq:tau_CACE_sim} are disjoint (continuous covariates for the former, binary for the latter), which allows us to assess the discovery ability of \gls{sbcf-iv} cleanly, without confounding the signal in the treatment effect component with the signal in the control function.

The full Monte Carlo design varies $(k, P)$ on the grid $\{0, 0.2, 0.4, 0.6, 0.8, 1, 1.2, 1.4, 1.6, 1.8, 2\} \times \{10, 50, 100\}$, yielding $33$ scenarios. Each scenario is replicated $M=500$ times, and performance is summarized via the classification metrics (Recall, Precision, $F$-score, FPR) for correctly identifying $l_1$ and $l_2$, together with MSE, bias, and coverage for the estimated conditional CACE.

\subsection{Performance measures for the simulation study}\label{append:sim_metrics}

Our evaluation follows the framework of \citet{bargagli-stoffi_heterogeneous_2022} but refines it by separating structural recovery from inferential significance. We benchmark structural discovery with tree-level metrics, while unit-level metrics provide the evaluation criteria relevant for applications where the true partition is unobserved. 
Thus, we evaluate \gls{sbcf-iv} and \gls{bcf-iv} along three dimensions: (i) tree-level subgroup detection, measuring whether the true heterogeneity regions $l_1, l_2$ appear as leaves of the discovered tree; (ii) unit-level individual classification, measuring whether each observation in $\mathcal{I}_{\text{inf}}$ is assigned to a correctly signed leaf; and (iii) unit-level estimation precision of the resulting CACE estimates. All metrics are computed on $\mathcal{I}_{\text{inf}}$ in each simulation run $m = 1, \ldots, M$ and averaged across runs. Throughout, $\mathcal{T}_m$ denotes the tree with corresponding subgroups discovered in run $m$, with leaves $\{\mathbb{X}_j(\mathcal{T}_m)\}_{j=1}^{J}$, and $p^{\text{adj}}_{\mathbb{X}_j(\mathcal{T}_m)}$ denotes the Holm-adjusted $p$-value of the estimated CACE within the discovered leaf $\mathbb{X}_j(\mathcal{T}_m)$. We set the significance level $\alpha = 0.05$.

\subsubsection{Subgroup detection: DR and FDR}\label{append:dr_fdr}

At the tree level, we assess whether the true heterogeneous subgroups $l_1, l_2$ from \eqref{eq:cace_truth} are recovered as leaves of $\mathcal{T}_m$. The \gls{DR} is the average share of true subgroups recovered, computed without a significance requirement to isolate the structural recovery ability of the algorithm from inferential uncertainty. We report \gls{DR} both overall and separately for $l_1$ and $l_2$. The \gls{FDR} is the share of replications in which at least one spurious leaf (a leaf not corresponding to a truly heterogeneous subgroup) is flagged as significant. \gls{FDR} is evaluated at the tree level rather than normalized by the number of non-effect subgroups, since the latter is a random quantity that varies across replications and therefore cannot serve as a stable denominator.\footnote{Normalizing by the tree would yield a less conservative metric and the tree-level indicator imposes a stricter penalty.} Formally,
\begin{align}
   \text{\gls{DR}} &= \frac{1}{M} \sum_{m=1}^{M} \frac{\sum_{j=1}^{J} \mathbf{1}\{\mathbb{X}_j(\mathcal{T}_m) \in \{l_1, l_2\}\}}{\#\{l_1, l_2\}}, \label{eq:DR}\\
   \text{\gls{FDR}} &= \frac{1}{M} \sum_{m=1}^{M} \mathbf{1}\!\left\{ \sum_{j=1}^{J} \mathbf{1}\!\bigl\{\mathbb{X}_j(\mathcal{T}_m) \notin \{l_1, l_2\},\; p^{\text{adj}}_{\mathbb{X}_j(\mathcal{T}_m)} \leq \alpha\bigr\} \geq 1 \right\}. \label{eq:FDR}
\end{align}
Because $l_1, l_2$ are only observable in simulation, \gls{DR} and \gls{FDR} serve as benchmarking tools. For empirical applications, the unit-level classification and estimation metrics below are the relevant ones.

\subsubsection{Individual classification: Recall, Precision, FPR, $F$-score}\label{append:class}

At the unit level, we ask whether each observation in $\mathcal{I}_{\text{inf}}$ is assigned to a leaf whose significance verdict matches its true heterogeneity status. Let $g_i \in \{l_0, l_1, l_2\}$ denote the true subgroup of unit $i$ under \eqref{eq:cace_truth}, and let $p^{\text{adj}}_{\mathbb{X}_j(\mathcal{T}_m)}$ denote the adjusted $p$-value of the leaf $\mathbb{X}_j(\mathcal{T}_m)$ containing $i$. Each unit contributes to one of four classification cells:
\begin{align*}
   \text{TP}(\mathcal{T}_m) &= \#\bigl\{i : g_i \in \{l_1, l_2\},\; p^{\text{adj}}_{\mathbb{X}_j(\mathcal{T}_m)} \leq \alpha\bigr\}, &
   \text{FN}(\mathcal{T}_m) &= \#\bigl\{i : g_i \in \{l_1, l_2\},\; p^{\text{adj}}_{\mathbb{X}_j(\mathcal{T}_m)} > \alpha\bigr\}, \\
   \text{FP}(\mathcal{T}_m) &= \#\bigl\{i : g_i = l_0,\; p^{\text{adj}}_{\mathbb{X}_j(\mathcal{T}_m)} \leq \alpha\bigr\}, &
   \text{TN}(\mathcal{T}_m) &= \#\bigl\{i : g_i = l_0,\; p^{\text{adj}}_{\mathbb{X}_j(\mathcal{T}_m)} > \alpha\bigr\}.
\end{align*}

From these, we compute the standard classification metrics:
\begin{equation}\label{eq:class_metrics}
\begin{split}
   \text{Recall} &= \frac{\text{TP}}{\text{TP} + \text{FN}}, \\
   \text{Precision} &= \frac{\text{TP}}{\text{TP} + \text{FP}}, \\
   \text{FPR} &= \frac{\text{FP}}{\text{FP} + \text{TN}}, \\
   F\text{-score} &= \frac{\text{TP}}{\text{TP} + \tfrac{1}{2}(\text{FP} + \text{FN})}.
\end{split}
\end{equation}

Recall measures the share of truly heterogeneous units correctly flagged, while Precision is the share of flagged units that are genuinely heterogeneous. We measure the share of null units spuriously flagged with FPR, while the $F$-score is their harmonic mean. All four are averaged across the $M$ simulation runs.

\subsubsection{Estimation precision: bias, MSE, and coverage}\label{append:precision}

For each simulation $m$, we evaluate the per-replication bias, MSE, and
95\% coverage on the truly heterogeneous regions $l_1 \cup l_2$ within
$\mathcal{I}_{\text{inf}}$, denoted
$\mathcal{I}^{\text{het}}_{\text{inf}} = \{ i \in \mathcal{I}_{\text{inf}} : X_i \in l_1 \cup l_2 \}$
with cardinality $N^{\text{het}}_{\text{inf}}$:
\begin{align}
\begin{split}
    \text{|Bias|}_m\bigl(\mathcal{I}^{\text{het}}_{\text{inf}}\bigr)
   &= \frac{1}{N^{\text{het}}_{\text{inf}}}
      \sum_{i \in \mathcal{I}^{\text{het}}_{\text{inf}}}
      | \tau^{\text{CACE}}(X_i) - \widehat\tau^{\text{CACE}}(X_i) |, \\
   \text{Bias}_m\bigl(\mathcal{I}^{\text{het}}_{\text{inf}}\bigr)
   &= \frac{1}{N^{\text{het}}_{\text{inf}}}
      \sum_{i \in \mathcal{I}^{\text{het}}_{\text{inf}}}
      \bigl( \tau^{\text{CACE}}(X_i) - \widehat\tau^{\text{CACE}}(X_i) \bigr), \\
   \text{MSE}_m\bigl(\mathcal{I}^{\text{het}}_{\text{inf}}\bigr)
   &= \frac{1}{N^{\text{het}}_{\text{inf}}}
      \sum_{i \in \mathcal{I}^{\text{het}}_{\text{inf}}}
      \bigl( \tau^{\text{CACE}}(X_i) - \widehat\tau^{\text{CACE}}(X_i) \bigr)^{2}, \\
   \text{Coverage}_m\bigl(\mathcal{I}^{\text{het}}_{\text{inf}}\bigr)
   &= \frac{1}{N^{\text{het}}_{\text{inf}}}
      \sum_{i \in \mathcal{I}^{\text{het}}_{\text{inf}}}
      \mathbf{1}\!\left\{
        \tau^{\text{CACE}}(X_i) \in
        \widehat{\text{CI}}_{95}\bigl(\widehat\tau^{\text{CACE}}(X_i)\bigr)
      \right\},
\end{split}
\end{align}
with Monte Carlo averages
$\text{Bias}\bigl(\mathcal{I}^{\text{het}}_{\text{inf}}\bigr)
 = \tfrac{1}{M} \sum_{m=1}^{M} \text{Bias}_m\bigl(\mathcal{I}^{\text{het}}_{\text{inf}}\bigr)$,
and analogously for $|\text{Bias}|_{\text{m}}$, $\text{MSE}_{m}$ and $\text{Coverage}_m$. $\widehat{\text{CI}}_{95}\bigl(\widehat\tau^{\text{CACE}}(X_i)\bigr)$ is the estimated 95\% confidence interval for unit $i$. Under correct coverage, $\text{Coverage}\bigl(\mathcal{I}^{\text{het}}_{\text{inf}}\bigr) \to 0.95$.

\subsection{Computational details}
\label{append:comp_details}

All simulations were implemented in R (version~4.0.0 or later) \citep{r_lang} and run on CPU
resources only. We note that no GPU acceleration is required. The reference machine was a
small university server with 25 cores and 256\,GB of RAM, though a workstation with at least four cores and 16\,GB of RAM is sufficient to reproduce a single simulation cell. We ran $M=500$ Monte Carlo simulation replications for every combination of effect size and covariate dimension, with effect sizes spanning
$k \in \{0,\,0.2,\,\ldots,\,2.0\}$ and number of covariates $P \in \{10,\,50,\,100\}$, yielding
$500 \times 11 \times 3 = 16{,}500$ runs. Parallelizing across 25 cores, one full setting required approximately six days of wall-clock time. As a single-machine benchmark, an unparallelized run averaged $376$ seconds (with a standard deviation of $37$ seconds) on an Intel i7-1365U laptop. We note that in few sparse covariate settings, \gls{bcf-iv} of \citet{bargagli-stoffi_heterogeneous_2022} occasionally returned
$\widehat{\pi}_C^{\,\text{BCF}}(x) = 0$, which renders the second-stage ratio
$\widehat{\text{ITT}}_Y^{\,\text{BCF}}(x)\,/\,\widehat{\pi}_C^{\,\text{BCF}}(x)$ for \gls{bcf-iv} undefined when forming $\widehat{\tau}^{\,\text{BCF}}(x)$. We replaced exact
zeros with a small positive constant to let the algorithm proceed. The
corresponding \gls{sbcf-iv} estimates were numerically well behaved across all replications and required no such adjustment.

\newpage

\subsection{Results for main simulations}
\label{append:full_precision_results}

\begin{table}[H]
\centering
\footnotesize
\caption{Tree-level subgroup detection.}
\label{tab:rule}
\resizebox{\ifdim\width>\linewidth\linewidth\else\width\fi}{!}{
\begin{tabular}[t]{ll>{\raggedleft\arraybackslash}p{0.25cm}rrrr>{\raggedleft\arraybackslash}p{0.25cm}rrrr}
\toprule
\multicolumn{1}{c}{} & \multicolumn{1}{c}{} & \multicolumn{1}{c}{} & \multicolumn{4}{c}{BCF-IV} & \multicolumn{1}{c}{} & \multicolumn{4}{c}{SBCF-IV} \\
\cmidrule(l{3pt}r{3pt}){4-7} \cmidrule(l{3pt}r{3pt}){9-12}
\multicolumn{1}{c}{} & \multicolumn{1}{c}{} & \multicolumn{1}{c}{} & \multicolumn{3}{c}{DR} & \multicolumn{1}{c}{} & \multicolumn{1}{c}{} & \multicolumn{3}{c}{DR} & \multicolumn{1}{c}{} \\
\cmidrule(l{3pt}r{3pt}){4-6} \cmidrule(l{3pt}r{3pt}){9-11}
$P$ & $k$ &  & Overall & $l_1$ & $l_2$ & $\text{FDR}$ &  & Overall & $l_1$ & $l_2$ & $\text{FDR}$\\
\midrule
 & 0 &  & 0.003 & 0.004 & 0.002 & 0.036 &  & 0.013 & 0.012 & 0.014 & 0.030\\

 & 0.2 &  & 0.006 & 0.006 & 0.006 & 0.036 &  & 0.051 & 0.052 & 0.050 & 0.034\\

 & 0.4 &  & 0.018 & 0.018 & 0.018 & 0.058 &  & 0.211 & 0.210 & 0.212 & 0.062\\

 & 0.6 &  & 0.061 & 0.070 & 0.052 & 0.094 &  & 0.391 & 0.396 & 0.386 & 0.058\\

 & 0.8 &  & 0.103 & 0.098 & 0.108 & 0.122 &  & 0.656 & 0.648 & 0.664 & 0.062\\

 & 1 &  & 0.178 & 0.178 & 0.178 & 0.236 &  & 0.873 & 0.876 & 0.870 & 0.040\\

 & 1.2 &  & 0.222 & 0.220 & 0.224 & 0.336 &  & 0.950 & 0.948 & 0.952 & 0.046\\

 & 1.4 &  & 0.263 & 0.286 & 0.240 & 0.468 &  & 0.984 & 0.986 & 0.982 & 0.028\\

 & 1.6 &  & 0.289 & 0.282 & 0.296 & 0.552 &  & 0.997 & 1.000 & 0.994 & 0.048\\

 & 1.8 &  & 0.322 & 0.336 & 0.308 & 0.624 &  & 1.000 & 1.000 & 1.000 & 0.030\\

\multirow{-11}{*}{\raggedright\arraybackslash 10} & 2 &  & 0.328 & 0.328 & 0.328 & 0.728 &  & 1.000 & 1.000 & 1.000 & 0.060\\
\cmidrule{1-12}
 & 0 &  & 0.000 & 0.000 & 0.000 & 0.046 &  & 0.004 & 0.004 & 0.004 & 0.038\\

 & 0.2 &  & 0.000 & 0.000 & 0.000 & 0.032 &  & 0.019 & 0.020 & 0.018 & 0.040\\

 & 0.4 &  & 0.000 & 0.000 & 0.000 & 0.024 &  & 0.148 & 0.146 & 0.150 & 0.042\\

 & 0.6 &  & 0.001 & 0.000 & 0.002 & 0.048 &  & 0.485 & 0.490 & 0.480 & 0.040\\

 & 0.8 &  & 0.013 & 0.012 & 0.014 & 0.124 &  & 0.838 & 0.836 & 0.840 & 0.054\\

 & 1 &  & 0.021 & 0.014 & 0.028 & 0.186 &  & 0.963 & 0.962 & 0.964 & 0.028\\

 & 1.2 &  & 0.027 & 0.022 & 0.032 & 0.310 &  & 0.984 & 0.986 & 0.982 & 0.046\\

 & 1.4 &  & 0.034 & 0.034 & 0.034 & 0.400 &  & 0.997 & 0.996 & 0.998 & 0.036\\

 & 1.6 &  & 0.037 & 0.026 & 0.048 & 0.516 &  & 0.999 & 0.998 & 1.000 & 0.050\\

 & 1.8 &  & 0.042 & 0.042 & 0.042 & 0.562 &  & 0.999 & 1.000 & 0.998 & 0.052\\

\multirow{-11}{*}{\raggedright\arraybackslash 50} & 2 &  & 0.048 & 0.050 & 0.046 & 0.634 &  & 1.000 & 1.000 & 1.000 & 0.048\\
\cmidrule{1-12}
 & 0 &  & 0.000 & 0.000 & 0.000 & 0.030 &  & 0.000 & 0.000 & 0.000 & 0.040\\

 & 0.2 &  & 0.000 & 0.000 & 0.000 & 0.028 &  & 0.004 & 0.002 & 0.006 & 0.028\\

 & 0.4 &  & 0.000 & 0.000 & 0.000 & 0.036 &  & 0.067 & 0.058 & 0.076 & 0.054\\

 & 0.6 &  & 0.001 & 0.000 & 0.002 & 0.058 &  & 0.391 & 0.390 & 0.392 & 0.086\\

 & 0.8 &  & 0.000 & 0.000 & 0.000 & 0.086 &  & 0.817 & 0.814 & 0.820 & 0.048\\

 & 1 &  & 0.001 & 0.000 & 0.002 & 0.118 &  & 0.954 & 0.952 & 0.956 & 0.046\\

 & 1.2 &  & 0.006 & 0.004 & 0.008 & 0.208 &  & 0.990 & 0.990 & 0.990 & 0.040\\

 & 1.4 &  & 0.009 & 0.006 & 0.012 & 0.256 &  & 0.995 & 0.996 & 0.994 & 0.030\\

 & 1.6 &  & 0.013 & 0.016 & 0.010 & 0.376 &  & 1.000 & 1.000 & 1.000 & 0.030\\

 & 1.8 &  & 0.014 & 0.008 & 0.020 & 0.482 &  & 0.998 & 0.998 & 0.998 & 0.044\\

\multirow{-11}{*}{\raggedright\arraybackslash 100} & 2 &  & 0.015 & 0.020 & 0.010 & 0.516 &  & 0.999 & 0.998 & 1.000 & 0.052\\
\bottomrule
\end{tabular}}
\par
\begin{notes}
    \gls{DR} and \gls{FDR} for \gls{bcf-iv} and \gls{sbcf-iv} across covariate dimensions $P \in \{10, 50, 100\}$ and effect sizes $k \in \{0, 0.2, \ldots, 2\}$. \gls{DR} is reported overall and separately for the two true heterogeneity subgroups $l_1$ and $l_2$. Results are averaged over $M = 500$ Monte Carlo replications with $N = 1{,}000$. This table reports the numerical values underlying Figure~\ref{fig:dr_fdr}.
\end{notes}
\end{table}


\begin{figure}
   \centering
   \caption{Unit-level classification performance.}
   \label{fig:full_classification}
   \resizebox{0.9\linewidth}{!}{%
   \input{fig_clf.metric_ind_uncorr_co.0.75_n_1000}
   }
   \begin{notes}
   Recall, Precision, False Positive Rate (FPR), and $F$-score as a function of effect size $k$, across covariate dimensions $P \in \{10, 50, 100\}$. All four metrics are defined in \eqref{eq:class_metrics} and computed over all units in $\mathcal{I}_{\text{inf}}$, with significance evaluated at $\alpha = 0.05$ using Holm-adjusted $p$-values. Results are averaged over $M = 500$ Monte Carlo replications with $N = 1{,}000$, and compare \gls{sbcf-iv} (orange) with \gls{bcf-iv} (gray). This figure complements Figure~\ref{fig:precision.Fscore}, which reports Precision and $F$-score only, by additionally reporting Recall and FPR across the same $(k, P)$ grid. We report full results in Table \ref{tab:full_classification}.
   \end{notes}
\end{figure}

\clearpage
\newpage

\begin{table}[H]
\centering
\caption{Unit-level classification metrics.}
\label{tab:full_classification}
\centering
\resizebox{\ifdim\width>\linewidth\linewidth\else\width\fi}{!}{
\begin{tabular}[t]{ll>{\raggedleft\arraybackslash}p{0.25cm}rrrr>{\raggedleft\arraybackslash}p{0.25cm}rrrr}
\toprule
\multicolumn{1}{c}{} & \multicolumn{1}{c}{} & \multicolumn{1}{c}{} & \multicolumn{4}{c}{BCF-IV} & \multicolumn{1}{c}{} & \multicolumn{4}{c}{SBCF-IV} \\
\cmidrule(l{3pt}r{3pt}){4-7} \cmidrule(l{3pt}r{3pt}){9-12}
$P$ & $k$ &  & \multicolumn{1}{c}{Recall} & \multicolumn{1}{c}{Precision} & \multicolumn{1}{c}{$F$-Score} & \multicolumn{1}{c}{FPR} &  & \multicolumn{1}{c}{Recall} & \multicolumn{1}{c}{Precision} & \multicolumn{1}{c}{$F$-Score} & \multicolumn{1}{c}{FPR}\\
\midrule
 & 0 &  & \makecell{0.000 (0.000)} & \makecell{0.000 (0.000)} & \makecell{0.000 (0.000)} & \makecell{0.013 (0.087)} &  & \makecell{0.000 (0.000)} & \makecell{0.000 (0.000)} & \makecell{0.000 (0.000)} & \makecell{0.007 (0.046)}\\

 & 0.2 &  & \makecell{0.000 (0.000)} & \makecell{0.000 (0.000)} & \makecell{0.000 (0.000)} & \makecell{0.012 (0.081)} &  & \makecell{0.001 (0.015)} & \makecell{0.002 (0.045)} & \makecell{0.001 (0.030)} & \makecell{0.010 (0.061)}\\

 & 0.4 &  & \makecell{0.001 (0.021)} & \makecell{0.004 (0.063)} & \makecell{0.003 (0.041)} & \makecell{0.018 (0.095)} &  & \makecell{0.005 (0.042)} & \makecell{0.016 (0.126)} & \makecell{0.011 (0.084)} & \makecell{0.023 (0.108)}\\

 & 0.6 &  & \makecell{0.003 (0.033)} & \makecell{0.010 (0.100)} & \makecell{0.006 (0.065)} & \makecell{0.042 (0.144)} &  & \makecell{0.018 (0.080)} & \makecell{0.052 (0.222)} & \makecell{0.036 (0.156)} & \makecell{0.020 (0.088)}\\

 & 0.8 &  & \makecell{0.021 (0.087)} & \makecell{0.052 (0.216)} & \makecell{0.037 (0.149)} & \makecell{0.057 (0.166)} &  & \makecell{0.093 (0.166)} & \makecell{0.244 (0.425)} & \makecell{0.176 (0.310)} & \makecell{0.024 (0.099)}\\

 & 1 &  & \makecell{0.046 (0.124)} & \makecell{0.116 (0.312)} & \makecell{0.083 (0.221)} & \makecell{0.109 (0.214)} &  & \makecell{0.200 (0.209)} & \makecell{0.494 (0.495)} & \makecell{0.385 (0.398)} & \makecell{0.020 (0.102)}\\

 & 1.2 &  & \makecell{0.095 (0.175)} & \makecell{0.213 (0.390)} & \makecell{0.156 (0.280)} & \makecell{0.165 (0.262)} &  & \makecell{0.294 (0.206)} & \makecell{0.700 (0.451)} & \makecell{0.568 (0.387)} & \makecell{0.022 (0.104)}\\

 & 1.4 &  & \makecell{0.149 (0.204)} & \makecell{0.303 (0.424)} & \makecell{0.232 (0.317)} & \makecell{0.247 (0.300)} &  & \makecell{0.401 (0.156)} & \makecell{0.900 (0.290)} & \makecell{0.783 (0.293)} & \makecell{0.015 (0.091)}\\

 & 1.6 &  & \makecell{0.191 (0.222)} & \makecell{0.348 (0.423)} & \makecell{0.283 (0.336)} & \makecell{0.306 (0.322)} &  & \makecell{0.447 (0.122)} & \makecell{0.944 (0.204)} & \makecell{0.870 (0.227)} & \makecell{0.023 (0.105)}\\

 & 1.8 &  & \makecell{0.228 (0.231)} & \makecell{0.390 (0.418)} & \makecell{0.326 (0.338)} & \makecell{0.351 (0.327)} &  & \makecell{0.475 (0.084)} & \makecell{0.985 (0.087)} & \makecell{0.931 (0.142)} & \makecell{0.016 (0.095)}\\

\multirow{-11}{*}{\raggedright\arraybackslash 10} & 2 &  & \makecell{0.256 (0.245)} & \makecell{0.390 (0.398)} & \makecell{0.341 (0.339)} & \makecell{0.432 (0.331)} &  & \makecell{0.501 (0.068)} & \makecell{0.980 (0.081)} & \makecell{0.967 (0.090)} & \makecell{0.031 (0.126)}\\
\cmidrule{1-12}
 & 0 &  & \makecell{0.000 (0.000)} & \makecell{0.000 (0.000)} & \makecell{0.000 (0.000)} & \makecell{0.016 (0.107)} &  & \makecell{0.000 (0.000)} & \makecell{0.000 (0.000)} & \makecell{0.000 (0.000)} & \makecell{0.009 (0.049)}\\

 & 0.2 &  & \makecell{0.000 (0.000)} & \makecell{0.000 (0.000)} & \makecell{0.000 (0.000)} & \makecell{0.015 (0.112)} &  & \makecell{0.000 (0.000)} & \makecell{0.000 (0.000)} & \makecell{0.000 (0.000)} & \makecell{0.009 (0.047)}\\

 & 0.4 &  & \makecell{0.000 (0.000)} & \makecell{0.000 (0.000)} & \makecell{0.000 (0.000)} & \makecell{0.007 (0.069)} &  & \makecell{0.007 (0.049)} & \makecell{0.019 (0.135)} & \makecell{0.013 (0.090)} & \makecell{0.013 (0.066)}\\

 & 0.6 &  & \makecell{0.001 (0.014)} & \makecell{0.002 (0.045)} & \makecell{0.001 (0.029)} & \makecell{0.024 (0.127)} &  & \makecell{0.029 (0.098)} & \makecell{0.083 (0.275)} & \makecell{0.058 (0.195)} & \makecell{0.013 (0.070)}\\

 & 0.8 &  & \makecell{0.003 (0.029)} & \makecell{0.008 (0.089)} & \makecell{0.005 (0.058)} & \makecell{0.056 (0.174)} &  & \makecell{0.124 (0.181)} & \makecell{0.325 (0.464)} & \makecell{0.239 (0.348)} & \makecell{0.022 (0.099)}\\

 & 1 &  & \makecell{0.006 (0.044)} & \makecell{0.018 (0.133)} & \makecell{0.012 (0.088)} & \makecell{0.091 (0.212)} &  & \makecell{0.202 (0.198)} & \makecell{0.529 (0.498)} & \makecell{0.398 (0.389)} & \makecell{0.013 (0.077)}\\

 & 1.2 &  & \makecell{0.014 (0.069)} & \makecell{0.035 (0.181)} & \makecell{0.025 (0.125)} & \makecell{0.172 (0.284)} &  & \makecell{0.319 (0.191)} & \makecell{0.760 (0.418)} & \makecell{0.619 (0.366)} & \makecell{0.020 (0.095)}\\

 & 1.4 &  & \makecell{0.018 (0.086)} & \makecell{0.031 (0.157)} & \makecell{0.025 (0.119)} & \makecell{0.223 (0.314)} &  & \makecell{0.381 (0.170)} & \makecell{0.864 (0.333)} & \makecell{0.743 (0.320)} & \makecell{0.019 (0.100)}\\

 & 1.6 &  & \makecell{0.025 (0.099)} & \makecell{0.053 (0.210)} & \makecell{0.039 (0.151)} & \makecell{0.323 (0.359)} &  & \makecell{0.449 (0.115)} & \makecell{0.954 (0.181)} & \makecell{0.873 (0.209)} & \makecell{0.025 (0.111)}\\

 & 1.8 &  & \makecell{0.033 (0.119)} & \makecell{0.059 (0.213)} & \makecell{0.046 (0.160)} & \makecell{0.366 (0.372)} &  & \makecell{0.478 (0.086)} & \makecell{0.975 (0.112)} & \makecell{0.928 (0.149)} & \makecell{0.027 (0.114)}\\

\multirow{-11}{*}{\raggedright\arraybackslash 50} & 2 &  & \makecell{0.039 (0.126)} & \makecell{0.063 (0.211)} & \makecell{0.052 (0.168)} & \makecell{0.424 (0.384)} &  & \makecell{0.493 (0.070)} & \makecell{0.982 (0.084)} & \makecell{0.964 (0.103)} & \makecell{0.025 (0.112)}\\
\cmidrule{1-12}
 & 0 &  & \makecell{0.000 (0.000)} & \makecell{0.000 (0.000)} & \makecell{0.000 (0.000)} & \makecell{0.012 (0.103)} &  & \makecell{0.000 (0.000)} & \makecell{0.000 (0.000)} & \makecell{0.000 (0.000)} & \makecell{0.012 (0.060)}\\

 & 0.2 &  & \makecell{0.000 (0.000)} & \makecell{0.000 (0.000)} & \makecell{0.000 (0.000)} & \makecell{0.010 (0.094)} &  & \makecell{0.000 (0.000)} & \makecell{0.000 (0.000)} & \makecell{0.000 (0.000)} & \makecell{0.007 (0.047)}\\

 & 0.4 &  & \makecell{0.000 (0.000)} & \makecell{0.000 (0.000)} & \makecell{0.000 (0.000)} & \makecell{0.016 (0.109)} &  & \makecell{0.004 (0.038)} & \makecell{0.011 (0.104)} & \makecell{0.008 (0.071)} & \makecell{0.013 (0.057)}\\

 & 0.6 &  & \makecell{0.000 (0.000)} & \makecell{0.000 (0.000)} & \makecell{0.000 (0.000)} & \makecell{0.027 (0.137)} &  & \makecell{0.034 (0.104)} & \makecell{0.096 (0.293)} & \makecell{0.066 (0.201)} & \makecell{0.026 (0.092)}\\

 & 0.8 &  & \makecell{0.000 (0.000)} & \makecell{0.000 (0.000)} & \makecell{0.000 (0.000)} & \makecell{0.050 (0.185)} &  & \makecell{0.123 (0.178)} & \makecell{0.328 (0.466)} & \makecell{0.238 (0.343)} & \makecell{0.021 (0.099)}\\

 & 1 &  & \makecell{0.001 (0.021)} & \makecell{0.001 (0.025)} & \makecell{0.001 (0.023)} & \makecell{0.062 (0.195)} &  & \makecell{0.226 (0.209)} & \makecell{0.559 (0.491)} & \makecell{0.436 (0.398)} & \makecell{0.022 (0.107)}\\

 & 1.2 &  & \makecell{0.002 (0.026)} & \makecell{0.006 (0.077)} & \makecell{0.004 (0.052)} & \makecell{0.115 (0.252)} &  & \makecell{0.326 (0.197)} & \makecell{0.757 (0.420)} & \makecell{0.632 (0.376)} & \makecell{0.020 (0.098)}\\

 & 1.4 &  & \makecell{0.005 (0.046)} & \makecell{0.012 (0.105)} & \makecell{0.009 (0.077)} & \makecell{0.155 (0.300)} &  & \makecell{0.389 (0.157)} & \makecell{0.890 (0.302)} & \makecell{0.764 (0.299)} & \makecell{0.014 (0.081)}\\

 & 1.6 &  & \makecell{0.009 (0.063)} & \makecell{0.016 (0.113)} & \makecell{0.013 (0.087)} & \makecell{0.232 (0.337)} &  & \makecell{0.444 (0.119)} & \makecell{0.960 (0.179)} & \makecell{0.871 (0.214)} & \makecell{0.016 (0.092)}\\

 & 1.8 &  & \makecell{0.012 (0.074)} & \makecell{0.017 (0.107)} & \makecell{0.014 (0.089)} & \makecell{0.326 (0.384)} &  & \makecell{0.477 (0.084)} & \makecell{0.979 (0.105)} & \makecell{0.930 (0.149)} & \makecell{0.021 (0.098)}\\

\multirow{-11}{*}{\raggedright\arraybackslash 100} & 2 &  & \makecell{0.013 (0.082)} & \makecell{0.019 (0.119)} & \makecell{0.016 (0.099)} & \makecell{0.360 (0.395)} &  & \makecell{0.491 (0.077)} & \makecell{0.979 (0.092)} & \makecell{0.951 (0.121)} & \makecell{0.027 (0.119)}\\
\bottomrule
\end{tabular}
}%
\par
\begin{notes}
Recall, Precision, $F$-score, and FPR for \gls{bcf-iv} and \gls{sbcf-iv} across covariate dimensions $P \in \{10, 50, 100\}$ and effect sizes $k \in \{0, 0.2, \ldots, 2\}$. All four metrics are defined in \eqref{eq:class_metrics} and computed over units in $\mathcal{I}_{\text{inf}}$, with significance evaluated at $\alpha = 0.05$ using Holm-adjusted $p$-values. Results are averaged over $M = 500$ Monte Carlo replications with $N = 1{,}000$; standard deviations are reported in parentheses. This table reports the numerical values underlying Figure~\ref{fig:precision.Fscore} and Figure~\ref{fig:full_classification}.
\end{notes}
\end{table}


\begin{table}[H]
\centering
\caption{Estimation precision and uncertainty quantification for $\widehat{\tau}^{\text{CACE}}(x)$.}
\label{tab:full_precision_results}
\centering
\resizebox{\ifdim\width>\linewidth\linewidth\else\width\fi}{!}{
\begin{tabular}[t]{ll>{\raggedleft\arraybackslash}p{0.25cm}rrrrr>{\raggedleft\arraybackslash}p{0.25cm}rrrrr}
\toprule
\multicolumn{1}{c}{} & \multicolumn{1}{c}{} & \multicolumn{1}{c}{} & \multicolumn{5}{c}{BCF-IV} & \multicolumn{1}{c}{} & \multicolumn{5}{c}{SBCF-IV} \\
\cmidrule(l{3pt}r{3pt}){4-8} \cmidrule(l{3pt}r{3pt}){10-14}
$P$ & $k$ &  & \multicolumn{1}{c}{MSE} & \multicolumn{1}{c}{Bias} & \multicolumn{1}{c}{Abs. Bias} & \multicolumn{1}{c}{Coverage} & \multicolumn{1}{c}{Width} &  & \multicolumn{1}{c}{MSE} & \multicolumn{1}{c}{Bias} & \multicolumn{1}{c}{Abs. Bias} & \multicolumn{1}{c}{Coverage} & \multicolumn{1}{c}{Width}\\
\midrule
 & 0 &  & \makecell{0.188 (0.163)} & \makecell{-0.004 (0.225)} & \makecell{0.286 (0.140)} & \makecell{0.948 (0.176)} & \makecell{1.434 (0.219)} &  & \makecell{0.191 (0.161)} & \makecell{-0.004 (0.222)} & \makecell{0.314 (0.131)} & \makecell{0.957 (0.131)} & \makecell{1.600 (0.176)}\\

 & 0.2 &  & \makecell{0.214 (0.148)} & \makecell{0.011 (0.233)} & \makecell{0.336 (0.114)} & \makecell{0.891 (0.178)} & \makecell{1.448 (0.200)} &  & \makecell{0.221 (0.160)} & \makecell{0.005 (0.229)} & \makecell{0.357 (0.129)} & \makecell{0.912 (0.143)} & \makecell{1.612 (0.183)}\\

 & 0.4 &  & \makecell{0.294 (0.178)} & \makecell{0.014 (0.232)} & \makecell{0.416 (0.122)} & \makecell{0.797 (0.206)} & \makecell{1.474 (0.208)} &  & \makecell{0.306 (0.193)} & \makecell{0.018 (0.234)} & \makecell{0.438 (0.133)} & \makecell{0.847 (0.176)} & \makecell{1.652 (0.185)}\\

 & 0.6 &  & \makecell{0.375 (0.222)} & \makecell{-0.012 (0.237)} & \makecell{0.494 (0.172)} & \makecell{0.708 (0.297)} & \makecell{1.537 (0.205)} &  & \makecell{0.376 (0.256)} & \makecell{-0.002 (0.271)} & \makecell{0.493 (0.198)} & \makecell{0.784 (0.234)} & \makecell{1.706 (0.193)}\\

 & 0.8 &  & \makecell{0.452 (0.326)} & \makecell{-0.004 (0.254)} & \makecell{0.547 (0.230)} & \makecell{0.697 (0.341)} & \makecell{1.591 (0.258)} &  & \makecell{0.370 (0.328)} & \makecell{-0.013 (0.288)} & \makecell{0.485 (0.262)} & \makecell{0.809 (0.283)} & \makecell{1.759 (0.206)}\\

 & 1 &  & \makecell{0.473 (0.333)} & \makecell{0.000 (0.286)} & \makecell{0.569 (0.240)} & \makecell{0.695 (0.337)} & \makecell{1.600 (0.210)} &  & \makecell{0.298 (0.327)} & \makecell{-0.002 (0.308)} & \makecell{0.426 (0.256)} & \makecell{0.899 (0.235)} & \makecell{1.794 (0.160)}\\

 & 1.2 &  & \makecell{0.581 (0.425)} & \makecell{-0.003 (0.325)} & \makecell{0.626 (0.275)} & \makecell{0.631 (0.333)} & \makecell{1.653 (0.205)} &  & \makecell{0.273 (0.317)} & \makecell{-0.008 (0.337)} & \makecell{0.404 (0.245)} & \makecell{0.911 (0.216)} & \makecell{1.823 (0.151)}\\

 & 1.4 &  & \makecell{0.607 (0.483)} & \makecell{0.021 (0.358)} & \makecell{0.641 (0.271)} & \makecell{0.643 (0.318)} & \makecell{1.667 (0.186)} &  & \makecell{0.232 (0.248)} & \makecell{0.008 (0.331)} & \makecell{0.383 (0.208)} & \makecell{0.947 (0.161)} & \makecell{1.824 (0.147)}\\

 & 1.6 &  & \makecell{0.718 (0.574)} & \makecell{-0.006 (0.380)} & \makecell{0.711 (0.314)} & \makecell{0.553 (0.334)} & \makecell{1.660 (0.198)} &  & \makecell{0.221 (0.230)} & \makecell{0.009 (0.328)} & \makecell{0.373 (0.200)} & \makecell{0.954 (0.148)} & \makecell{1.831 (0.148)}\\

 & 1.8 &  & \makecell{0.814 (0.584)} & \makecell{-0.016 (0.406)} & \makecell{0.760 (0.313)} & \makecell{0.536 (0.319)} & \makecell{1.681 (0.174)} &  & \makecell{0.223 (0.225)} & \makecell{-0.029 (0.341)} & \makecell{0.376 (0.203)} & \makecell{0.945 (0.161)} & \makecell{1.829 (0.142)}\\

\multirow{-11}{*}{\raggedright\arraybackslash 10} & 2 &  & \makecell{0.931 (0.729)} & \makecell{0.014 (0.446)} & \makecell{0.811 (0.354)} & \makecell{0.478 (0.335)} & \makecell{1.697 (0.174)} &  & \makecell{0.214 (0.211)} & \makecell{0.012 (0.337)} & \makecell{0.368 (0.199)} & \makecell{0.953 (0.149)} & \makecell{1.840 (0.141)}\\
\cmidrule{1-14}
 & 0 &  & \makecell{0.205 (0.244)} & \makecell{-0.003 (0.223)} & \makecell{0.273 (0.145)} & \makecell{0.953 (0.178)} & \makecell{1.399 (0.244)} &  & \makecell{0.214 (0.151)} & \makecell{-0.003 (0.222)} & \makecell{0.358 (0.140)} & \makecell{0.960 (0.111)} & \makecell{1.769 (0.143)}\\

 & 0.2 &  & \makecell{0.229 (0.180)} & \makecell{0.012 (0.226)} & \makecell{0.321 (0.110)} & \makecell{0.894 (0.192)} & \makecell{1.387 (0.206)} &  & \makecell{0.240 (0.155)} & \makecell{0.014 (0.228)} & \makecell{0.383 (0.129)} & \makecell{0.938 (0.107)} & \makecell{1.780 (0.150)}\\

 & 0.4 &  & \makecell{0.332 (0.164)} & \makecell{0.013 (0.235)} & \makecell{0.449 (0.079)} & \makecell{0.712 (0.210)} & \makecell{1.381 (0.204)} &  & \makecell{0.297 (0.199)} & \makecell{0.003 (0.261)} & \makecell{0.431 (0.144)} & \makecell{0.895 (0.144)} & \makecell{1.786 (0.142)}\\

 & 0.6 &  & \makecell{0.471 (0.220)} & \makecell{0.015 (0.232)} & \makecell{0.574 (0.140)} & \makecell{0.515 (0.314)} & \makecell{1.422 (0.258)} &  & \makecell{0.271 (0.232)} & \makecell{0.020 (0.279)} & \makecell{0.413 (0.193)} & \makecell{0.912 (0.173)} & \makecell{1.807 (0.150)}\\

 & 0.8 &  & \makecell{0.640 (0.338)} & \makecell{-0.004 (0.244)} & \makecell{0.671 (0.222)} & \makecell{0.457 (0.384)} & \makecell{1.445 (0.233)} &  & \makecell{0.269 (0.284)} & \makecell{-0.010 (0.324)} & \makecell{0.410 (0.217)} & \makecell{0.919 (0.193)} & \makecell{1.825 (0.147)}\\

 & 1 &  & \makecell{0.785 (0.472)} & \makecell{0.012 (0.241)} & \makecell{0.744 (0.288)} & \makecell{0.454 (0.404)} & \makecell{1.495 (0.232)} &  & \makecell{0.213 (0.232)} & \makecell{-0.003 (0.309)} & \makecell{0.362 (0.203)} & \makecell{0.949 (0.151)} & \makecell{1.830 (0.140)}\\

 & 1.2 &  & \makecell{0.885 (0.574)} & \makecell{-0.004 (0.257)} & \makecell{0.802 (0.333)} & \makecell{0.470 (0.389)} & \makecell{1.531 (0.228)} &  & \makecell{0.229 (0.253)} & \makecell{-0.030 (0.331)} & \makecell{0.375 (0.206)} & \makecell{0.942 (0.167)} & \makecell{1.819 (0.150)}\\

 & 1.4 &  & \makecell{1.130 (0.750)} & \makecell{0.003 (0.253)} & \makecell{0.923 (0.375)} & \makecell{0.373 (0.349)} & \makecell{1.546 (0.274)} &  & \makecell{0.223 (0.216)} & \makecell{0.001 (0.338)} & \makecell{0.380 (0.204)} & \makecell{0.953 (0.146)} & \makecell{1.831 (0.142)}\\

 & 1.6 &  & \makecell{1.401 (0.968)} & \makecell{-0.011 (0.279)} & \makecell{1.031 (0.427)} & \makecell{0.341 (0.335)} & \makecell{1.566 (0.245)} &  & \makecell{0.228 (0.239)} & \makecell{-0.021 (0.343)} & \makecell{0.377 (0.213)} & \makecell{0.944 (0.161)} & \makecell{1.845 (0.144)}\\

 & 1.8 &  & \makecell{1.695 (1.171)} & \makecell{0.001 (0.296)} & \makecell{1.157 (0.457)} & \makecell{0.260 (0.307)} & \makecell{1.587 (0.225)} &  & \makecell{0.229 (0.242)} & \makecell{-0.011 (0.348)} & \makecell{0.380 (0.206)} & \makecell{0.945 (0.161)} & \makecell{1.839 (0.143)}\\

\multirow{-11}{*}{\raggedright\arraybackslash 50} & 2 &  & \makecell{2.088 (1.446)} & \makecell{-0.008 (0.299)} & \makecell{1.291 (0.508)} & \makecell{0.202 (0.256)} & \makecell{1.593 (0.255)} &  & \makecell{0.204 (0.203)} & \makecell{0.002 (0.320)} & \makecell{0.359 (0.192)} & \makecell{0.957 (0.145)} & \makecell{1.836 (0.138)}\\
\cmidrule{1-14}
 & 0 &  & \makecell{0.196 (0.234)} & \makecell{-0.020 (0.226)} & \makecell{0.253 (0.143)} & \makecell{0.957 (0.180)} & \makecell{1.325 (0.215)} &  & \makecell{0.217 (0.162)} & \makecell{-0.021 (0.226)} & \makecell{0.360 (0.143)} & \makecell{0.951 (0.111)} & \makecell{1.785 (0.136)}\\

 & 0.2 &  & \makecell{0.244 (0.277)} & \makecell{-0.003 (0.226)} & \makecell{0.314 (0.100)} & \makecell{0.882 (0.198)} & \makecell{1.337 (0.216)} &  & \makecell{0.247 (0.151)} & \makecell{-0.002 (0.226)} & \makecell{0.391 (0.123)} & \makecell{0.930 (0.109)} & \makecell{1.780 (0.136)}\\

 & 0.4 &  & \makecell{0.357 (0.232)} & \makecell{-0.002 (0.237)} & \makecell{0.455 (0.070)} & \makecell{0.667 (0.200)} & \makecell{1.346 (0.211)} &  & \makecell{0.320 (0.252)} & \makecell{-0.008 (0.257)} & \makecell{0.450 (0.134)} & \makecell{0.887 (0.143)} & \makecell{1.807 (0.134)}\\

 & 0.6 &  & \makecell{0.535 (0.304)} & \makecell{-0.006 (0.246)} & \makecell{0.608 (0.114)} & \makecell{0.445 (0.279)} & \makecell{1.351 (0.211)} &  & \makecell{0.304 (0.220)} & \makecell{0.002 (0.291)} & \makecell{0.441 (0.188)} & \makecell{0.889 (0.180)} & \makecell{1.797 (0.148)}\\

 & 0.8 &  & \makecell{0.736 (0.430)} & \makecell{0.006 (0.236)} & \makecell{0.750 (0.175)} & \makecell{0.299 (0.338)} & \makecell{1.379 (0.244)} &  & \makecell{0.252 (0.260)} & \makecell{-0.001 (0.313)} & \makecell{0.392 (0.210)} & \makecell{0.919 (0.184)} & \makecell{1.834 (0.156)}\\

 & 1 &  & \makecell{0.954 (0.418)} & \makecell{0.019 (0.220)} & \makecell{0.872 (0.251)} & \makecell{0.273 (0.351)} & \makecell{1.417 (0.243)} &  & \makecell{0.230 (0.215)} & \makecell{0.005 (0.319)} & \makecell{0.387 (0.199)} & \makecell{0.950 (0.149)} & \makecell{1.826 (0.139)}\\

 & 1.2 &  & \makecell{1.206 (0.633)} & \makecell{-0.013 (0.245)} & \makecell{0.975 (0.326)} & \makecell{0.284 (0.343)} & \makecell{1.460 (0.339)} &  & \makecell{0.197 (0.196)} & \makecell{-0.007 (0.310)} & \makecell{0.354 (0.186)} & \makecell{0.954 (0.147)} & \makecell{1.820 (0.143)}\\

 & 1.4 &  & \makecell{1.544 (0.806)} & \makecell{0.021 (0.249)} & \makecell{1.122 (0.369)} & \makecell{0.237 (0.307)} & \makecell{1.461 (0.233)} &  & \makecell{0.229 (0.242)} & \makecell{0.030 (0.336)} & \makecell{0.377 (0.208)} & \makecell{0.945 (0.161)} & \makecell{1.829 (0.142)}\\

 & 1.6 &  & \makecell{1.803 (0.998)} & \makecell{-0.013 (0.253)} & \makecell{1.210 (0.424)} & \makecell{0.221 (0.277)} & \makecell{1.504 (0.269)} &  & \makecell{0.205 (0.214)} & \makecell{-0.021 (0.325)} & \makecell{0.360 (0.195)} & \makecell{0.960 (0.136)} & \makecell{1.832 (0.134)}\\

 & 1.8 &  & \makecell{2.095 (1.283)} & \makecell{-0.011 (0.262)} & \makecell{1.301 (0.489)} & \makecell{0.211 (0.275)} & \makecell{1.516 (0.255)} &  & \makecell{0.227 (0.245)} & \makecell{-0.001 (0.349)} & \makecell{0.373 (0.210)} & \makecell{0.942 (0.166)} & \makecell{1.825 (0.139)}\\

\multirow{-11}{*}{\raggedright\arraybackslash 100} & 2 &  & \makecell{2.551 (1.522)} & \makecell{0.013 (0.257)} & \makecell{1.454 (0.526)} & \makecell{0.157 (0.222)} & \makecell{1.535 (0.235)} &  & \makecell{0.230 (0.238)} & \makecell{0.022 (0.340)} & \makecell{0.376 (0.212)} & \makecell{0.942 (0.164)} & \makecell{1.835 (0.146)}\\
\bottomrule
\end{tabular}}%
\par
\begin{notes}
     MSE, Bias, Absolute Bias, as well as Coverage and Width of 95\% confidence intervals for \gls{bcf-iv} and \gls{sbcf-iv} across covariate dimensions $P \in \{10, 50, 100\}$ and effect sizes $k \in \{0, 0.2, \ldots, 2\}$. Results are averaged over $M = 500$ Monte Carlo replications with $N = 1{,}000$; standard deviations are reported in parentheses. This table reports the full numerical results underlying Table~\ref{tab:short_precision_uncertainty}, which restricts attention to $k \in \{0, 1, 2\}$ and pools $l_1$ and $l_2$.
\end{notes}
\end{table}



\clearpage
\newpage

\subsection{Results for further simulations}
\label{append:further_precision_results}

We complement the simulation study in Section \ref{ch:sim_study} and the \gls{DGP} outlined in Appendix \ref{append:sim_design} with two additions. First, we implement the instrumental forest of the \gls{grf} framework \citep{athey_generalized_2019} as an additional benchmark method. Second, we analyze an ablation that crosses the discovery step in Algorithm \ref{alg:sbcf-iv} with the CART cost weighting of \eqref{eq:cost}. 
We note that \gls{grf} enters only as an alternative method to estimate the heterogeneity signal in line~4 of Algorithm~\ref{alg:sbcf-iv}, not as a
competing partition method. The instrumental forest of \citet{athey_generalized_2019} targets a unit-level conditional \gls{iv} function rather than an interpretable partition of the covariate space. We include it to test whether the high-dimensional degradation documented for \gls{bcf-iv} in Section \ref{ch:sim_study} is specific to that method or more general to other non-sparse competitors.
Table \ref{tab:competitors} lists the six resulting methods that differ in how they estimate the heterogeneity signal $\widehat{\tau}(x)$ of line 4 in Algorithm \ref{alg:sbcf-iv} and whether they use the varying posterior splitting probabilities to guide the CART construction in line 5 of Algorithm \ref{alg:sbcf-iv}. Based on the methods in Table \ref{tab:competitors}, we empirically evaluate whether using CART cost weighting with the posterior splitting probabilities, as described in \eqref{eq:cost}, improves subgroup discovery and estimation of \gls{ccace}.
\begin{table}[H]
\centering
\footnotesize
\caption{The six competitors as instantiations of Algorithm~\ref{alg:sbcf-iv}.}
\label{tab:competitors}
\begin{tabular}{l l c}
\toprule
Method & Estimation of signal $\widehat{\tau}(x)$ & CART cost weighting via \eqref{eq:cost} \\
\midrule
\gls{bcf-iv} (without cost)              & BCF, BART \citep{hahn_bayesian_2020, hill_bayesian_2011}         & no \\
\gls{bcf-iv} (with cost)    & BCF, BART \citep{hahn_bayesian_2020, hill_bayesian_2011}         & yes \\
\addlinespace
\gls{sbcf-iv} (without cost)             & SBCF, SoftBART \citep{caron_shrinkage_2022, linero_bayesian_2018}      & no \\
\gls{sbcf-iv} (with cost)   & SBCF, SoftBART \citep{caron_shrinkage_2022, linero_bayesian_2018}     & yes \\
\addlinespace
\gls{grf-iv} (without cost)                     & GRF \citep{athey_generalized_2019}      & no \\
\gls{grf-iv} (with cost)          & GRF \citep{athey_generalized_2019}    & yes \\
\bottomrule
\end{tabular}
\begin{notes}
The six competitors rely on Algorithm \ref{alg:sbcf-iv} and only differ in the estimation of the discovery signal $\widehat\tau(x)$ and in whether the CART applies the split-frequency cost $c_{\text{psp}}$.
The cost vector $c_{\text{psp}} = \max(\widehat s_{\text{ITT}_Y})/\widehat s_{\text{ITT}_Y}$ of \eqref{eq:cost} is computed \emph{once} from the SBCF posterior split frequencies $\widehat s_{\text{ITT}_Y}$ and applied unchanged to the "with-cost" methods. Thus, \gls{bcf-iv} (with cost) and \gls{grf-iv} (with cost) pair a non-sparse discovery signal with the sparse-derived cost. \gls{bcf-iv} and \gls{sbcf-iv} form $\widehat\tau(x)$ by dividing the estimated $\widehat{\text{ITT}}_Y(x)$ by the corresponding complier-share estimate $\widehat\pi_C$. The instrumental forest of \citet{athey_generalized_2019} targets \eqref{eq:subgroup-cace} directly, so it requires no explicit ratio.  
\end{notes}
\end{table}
%
The results in Table~\ref{tab:abl-pehe} isolate the choice of the estimation method for the heterogeneity signal in Algorithm \ref{alg:sbcf-iv} as the dominant source of the improvement in precision. The differences between MSE values of \gls{sbcf-iv} with and without CART cost weighting via \eqref{eq:cost} are only marginal. MSE performance of both \gls{sbcf-iv} methods remains rather constant across covariate dimension $P$ and effect size $k$ and they outperform \gls{bcf-iv} and \gls{grf-iv} in almost all scenarios. The non-sparse methods (\gls{bcf-iv} and \gls{grf-iv}) only show lower MSE values compared to \gls{sbcf-iv} for rather homogeneous cases where the effect size $k \le 0.4$, leading to true conditional CACE values that are close to zero via \eqref{eq:tau_CACE_sim}. Both non-sparse methods benefit substantially from the inclusion of CART cost weighting, as they are able to use sparsity-information from SBCF's posterior splitting probabilities for subgroup discovery. This is only a partial remedy that never exceeds \gls{sbcf-iv}'s performance on MSE for moderate to large effect sizes $k$ and all covariate dimensions $P$. 
In Table \ref{tab:abl-bias}, all six methods show bias values close to zero, regardless of CART cost weighting. This indicates that the increase in MSE for the non-sparse methods with rising $P$ and $k$ is driven by estimation instability rather than by systematic over- or underestimation.
Table~\ref{tab:abl-cov} shows the same pattern for interval coverage at the nominal level of $95\%$. Only the two \gls{sbcf-iv} methods reach close to the nominal level across $P$, while \gls{bcf-iv}, \gls{grf-iv}, and their cost-weighted variants all undercover progressively with $k$ and $P$. This is not achieved by widening intervals, as reported in Table \ref{tab:abl-width}. \gls{sbcf-iv} and \gls{grf-iv} have similar mean interval widths, but \gls{grf-iv} undercovers heavily, as Table~\ref{tab:abl-cov} shows. 
For instance, in the high-dimensional case of ($P=100$, $k=2$), \gls{bcf-iv} has an average coverage of roughly 15\%, while \gls{sbcf-iv} reports an exact mean coverage of 95\% and its mean interval width increases by only roughly 20\% compared to \gls{bcf-iv}. Moreover, \gls{sbcf-iv} and \gls{grf-iv} carry near-identical widths, yet \gls{grf-iv} possesses mean coverage of only 63\% (without cost) and 73\% (with cost).

Figure~\ref{fig:abl-FDR-DR} displays the Detection Rate of \eqref{eq:DR} at the tree-level. The two \gls{sbcf-iv} variants recover essentially all true effect-modifying splits by $k \approx 1.2$ at every covariate dimension, whereas \gls{bcf-iv} and both \gls{grf-iv} variants detect fewer splits and fall further behind as $P$ grows. The unit-level metrics in Figure~\ref{fig:abl-precision-F1} show the same pattern. The two \gls{sbcf-iv} variants coincide and dominate on both $F$-score and Precision and remain flat across $P$, while every non-sparse method degrades. Precision separates the methods most sharply at $P = 100$, where \gls{sbcf-iv} reaches roughly $0.98$ and the remaining methods plateau well below. As on the estimation side, cost weighting visibly lifts the
non-sparse methods, as the cost-weighted \gls{bcf-iv} and \gls{grf-iv} curves lie above their counterparts without CART cost weighting. The two \gls{sbcf-iv} curves are indistinguishable, which is consistent with Corollary~\ref{cor:coherence}. Using the sparsity-inducing prior to estimate the heterogeneity signal $\widehat{\tau}(x)$ of line 4 in Algorithm \ref{alg:sbcf-iv} accounts for much of the
gains in subgroup discovery, precision, and uncertainty quantification. The cost weighting is a component that helps only where the ensemble is not already sparsity-regularized, as for the \gls{bcf-iv} and \gls{grf-iv} methods of Table \ref{tab:competitors}. The transfer of \gls{sbcf}'s sparsity information to the discovery tree, as for \gls{sbcf-iv}, adds little on top of the sparsity prior when the signal is already sparse, while partially recovering the non-sparse methods (\gls{bcf-iv}, \gls{grf-iv}) that lack this information.

\begin{figure}[H]\centering
    \caption{Estimation error and interval coverage for the methods of Table \ref{tab:competitors}.}\label{fig:abl-estimation}
    \resizebox{0.9\linewidth}{!}{
    \input{fig_subgroup_metrics_cost_no_cost_uncorr_co.0.75_n_1000}
    }
    \begin{notes}
        MSE (top row; lower is better) and empirical coverage of nominal $95\%$ intervals (bottom row; dashed red line at the nominal coverage rate $0.95$) as a function of the effect size $k$, for covariate dimensions $P\in\{10,50,100\}$ (columns). We compare the six methods of Table \ref{tab:competitors} where solid lines represent methods without CART cost weighting and dot-dashed lines represent methods with CART cost weighting. Values are means over $M=500$ Monte Carlo replications at $N=1{,}000$ and are also reported in Tables~\ref{tab:abl-pehe} and~\ref{tab:abl-cov}.
    \end{notes}
\end{figure}

\begin{figure}[H]\centering
    \caption{Tree-level subgroup discovery for the methods of Table~\ref{tab:competitors}.}\label{fig:abl-FDR-DR}
    \resizebox{0.9\linewidth}{!}{
    \input{fig_DR_cost_no_cost_uncorr_co.0.75_n_1000}
    }
    \begin{notes}
        Detection Rate (\gls{DR}) as a function of the effect size $k$, for covariate dimensions $P\in\{10,50,100\}$ (columns). \gls{DR} is the share of true effect-modifying splits recovered, defined in \eqref{eq:DR} and evaluated at the tree level with subgroup significance assessed at $\alpha=0.05$ using Holm-adjusted $p$-values. We compare the six methods of Table \ref{tab:competitors} where solid lines represent methods without CART cost weighting and dot-dashed lines represent methods with CART cost weighting. Values are means over $M=500$ Monte Carlo replications at $N=1{,}000$.
    \end{notes}
\end{figure}

\begin{figure}[H]\centering
    \caption{Unit-level discovery performance ($F$-score and precision) for the methods of Table~\ref{tab:competitors}.}\label{fig:abl-precision-F1}
    \resizebox{0.9\linewidth}{!}{
    \input{fig_precision_F1_ind_cost_no_cost_uncorr_co.0.75_n_1000}
    }
    \begin{notes}
    $F$-score (top row) and precision (bottom row) for the classification of units into the true effect subgroups, as a function of the effect size $k$, for covariate dimensions $P\in\{10,50,100\}$ (columns). Both metrics are defined in \eqref{eq:class_metrics} and computed over $\mathcal{I}_{\text{inf}}$, with subgroup significance assessed at $\alpha=0.05$ using Holm-adjusted $p$-values. We compare the six methods of Table \ref{tab:competitors} where solid lines represent methods without CART cost weighting and dot-dashed lines represent methods with CART cost weighting. Values are means over $M=500$ Monte Carlo replications at $N=1{,}000$.
    \end{notes}
\end{figure}

\begin{table}[H]
\centering
\caption{Mean squared error (MSE; lower is better) for the methods of Table \ref{tab:competitors}.}
\label{tab:abl-pehe}
\resizebox{\ifdim\width>\linewidth\linewidth\else\width\fi}{!}{
\begin{tabular}[t]{ll>{\raggedleft\arraybackslash}p{0.25cm}rr>{\raggedleft\arraybackslash}p{0.25cm}rr>{\raggedleft\arraybackslash}p{0.25cm}rr}
\toprule
\multicolumn{1}{c}{} & \multicolumn{1}{c}{} & \multicolumn{1}{c}{} & \multicolumn{2}{c}{\gls{bcf-iv}} & \multicolumn{1}{c}{} & \multicolumn{2}{c}{\gls{sbcf-iv}} & \multicolumn{1}{c}{} & \multicolumn{2}{c}{\gls{grf-iv}} \\
\cmidrule(l{3pt}r{3pt}){4-5} \cmidrule(l{3pt}r{3pt}){7-8} \cmidrule(l{3pt}r{3pt}){10-11}
$P$ & $k$ &  & \multicolumn{1}{c}{without cost} & \multicolumn{1}{c}{with cost} &  & \multicolumn{1}{c}{without cost} & \multicolumn{1}{c}{with cost} &  & \multicolumn{1}{c}{without cost} & \multicolumn{1}{c}{with cost}\\
\midrule
 & 0.0 &  & \makecell{0.183 (0.198)} & \makecell{0.164 (0.281)} &  & \makecell{0.209 (0.211)} & \makecell{0.210 (0.223)} &  & \makecell{0.213 (0.159)} & \makecell{0.191 (0.166)}\\
 & 0.4 &  & \makecell{0.313 (0.318)} & \makecell{0.294 (0.164)} &  & \makecell{0.303 (0.222)} & \makecell{0.313 (0.224)} &  & \makecell{0.346 (0.166)} & \makecell{0.316 (0.154)}\\
 & 0.8 &  & \makecell{0.454 (0.309)} & \makecell{0.437 (0.304)} &  & \makecell{0.333 (0.310)} & \makecell{0.371 (0.323)} &  & \makecell{0.531 (0.307)} & \makecell{0.447 (0.310)}\\
 & 1.2 &  & \makecell{0.546 (0.469)} & \makecell{0.423 (0.453)} &  & \makecell{0.229 (0.263)} & \makecell{0.241 (0.289)} &  & \makecell{0.506 (0.450)} & \makecell{0.364 (0.354)}\\
 & 1.6 &  & \makecell{0.736 (0.588)} & \makecell{0.509 (0.462)} &  & \makecell{0.210 (0.213)} & \makecell{0.210 (0.213)} &  & \makecell{0.400 (0.395)} & \makecell{0.304 (0.317)}\\
\multirow{-6}{*}{\raggedright\arraybackslash 10} & 2.0 &  & \makecell{1.010 (0.846)} & \makecell{0.605 (0.591)} &  & \makecell{0.210 (0.210)} & \makecell{0.210 (0.211)} &  & \makecell{0.298 (0.321)} & \makecell{0.271 (0.284)}\\
\cmidrule{1-11}
 & 0.0 &  & \makecell{0.204 (0.222)} & \makecell{0.158 (0.160)} &  & \makecell{0.216 (0.170)} & \makecell{0.221 (0.176)} &  & \makecell{0.233 (0.190)} & \makecell{0.179 (0.178)}\\
 & 0.4 &  & \makecell{0.330 (0.176)} & \makecell{0.268 (0.159)} &  & \makecell{0.279 (0.186)} & \makecell{0.281 (0.183)} &  & \makecell{0.374 (0.168)} & \makecell{0.316 (0.169)}\\
 & 0.8 &  & \makecell{0.624 (0.348)} & \makecell{0.374 (0.290)} &  & \makecell{0.240 (0.249)} & \makecell{0.246 (0.252)} &  & \makecell{0.751 (0.288)} & \makecell{0.488 (0.301)}\\
 & 1.2 &  & \makecell{0.967 (0.599)} & \makecell{0.483 (0.430)} &  & \makecell{0.219 (0.222)} & \makecell{0.224 (0.232)} &  & \makecell{0.893 (0.604)} & \makecell{0.502 (0.446)}\\
 & 1.6 &  & \makecell{1.448 (0.970)} & \makecell{0.724 (0.752)} &  & \makecell{0.211 (0.210)} & \makecell{0.212 (0.211)} &  & \makecell{0.775 (0.760)} & \makecell{0.438 (0.451)}\\
\multirow{-6}{*}{\raggedright\arraybackslash 50} & 2.0 &  & \makecell{1.981 (1.449)} & \makecell{0.965 (1.045)} &  & \makecell{0.202 (0.202)} & \makecell{0.202 (0.202)} &  & \makecell{0.619 (0.744)} & \makecell{0.379 (0.440)}\\
\cmidrule{1-11}
 & 0.0 &  & \makecell{0.197 (0.229)} & \makecell{0.161 (0.182)} &  & \makecell{0.224 (0.165)} & \makecell{0.225 (0.198)} &  & \makecell{0.216 (0.158)} & \makecell{0.166 (0.148)}\\
 & 0.4 &  & \makecell{0.352 (0.269)} & \makecell{0.273 (0.190)} &  & \makecell{0.294 (0.186)} & \makecell{0.299 (0.192)} &  & \makecell{0.383 (0.174)} & \makecell{0.291 (0.150)}\\
 & 0.8 &  & \makecell{0.774 (0.759)} & \makecell{0.474 (0.329)} &  & \makecell{0.263 (0.245)} & \makecell{0.274 (0.251)} &  & \makecell{0.800 (0.236)} & \makecell{0.558 (0.306)}\\
 & 1.2 &  & \makecell{1.227 (0.573)} & \makecell{0.614 (0.507)} &  & \makecell{0.219 (0.217)} & \makecell{0.219 (0.218)} &  & \makecell{1.174 (0.575)} & \makecell{0.617 (0.469)}\\
 & 1.6 &  & \makecell{1.901 (0.987)} & \makecell{0.986 (0.912)} &  & \makecell{0.217 (0.218)} & \makecell{0.218 (0.218)} &  & \makecell{1.215 (0.961)} & \makecell{0.601 (0.567)}\\
\multirow{-6}{*}{\raggedright\arraybackslash 100} & 2.0 &  & \makecell{2.722 (1.561)} & \makecell{1.149 (1.230)} &  & \makecell{0.224 (0.240)} & \makecell{0.224 (0.240)} &  & \makecell{0.945 (0.991)} & \makecell{0.523 (0.566)}\\
\bottomrule
\end{tabular}
}
\begin{notes}
MSE across covariate dimension $P$ and effect size $k$. Means over $M=500$ replications at $N=1{,}000$; standard deviations in parentheses.
\end{notes}
\end{table}

\begin{table}[H]
\centering
\caption{Mean signed bias for the methods of Table \ref{tab:competitors}.}
\label{tab:abl-bias}
\resizebox{\ifdim\width>\linewidth\linewidth\else\width\fi}{!}{
\begin{tabular}[t]{ll>{\raggedleft\arraybackslash}p{0.25cm}rr>{\raggedleft\arraybackslash}p{0.25cm}rr>{\raggedleft\arraybackslash}p{0.25cm}rr}
\toprule
\multicolumn{1}{c}{} & \multicolumn{1}{c}{} & \multicolumn{1}{c}{} & \multicolumn{2}{c}{\gls{bcf-iv}} & \multicolumn{1}{c}{} & \multicolumn{2}{c}{\gls{sbcf-iv}} & \multicolumn{1}{c}{} & \multicolumn{2}{c}{\gls{grf-iv}} \\
\cmidrule(l{3pt}r{3pt}){4-5} \cmidrule(l{3pt}r{3pt}){7-8} \cmidrule(l{3pt}r{3pt}){10-11}
$P$ & $k$ &  & \multicolumn{1}{c}{without cost} & \multicolumn{1}{c}{with cost} &  & \multicolumn{1}{c}{without cost} & \multicolumn{1}{c}{with cost} &  & \multicolumn{1}{c}{without cost} & \multicolumn{1}{c}{with cost}\\
\midrule
 & 0.0 &  & \makecell{0.009 (0.235)} & \makecell{0.011 (0.241)} &  & \makecell{0.005 (0.229)} & \makecell{0.009 (0.229)} &  & \makecell{0.012 (0.238)} & \makecell{0.011 (0.234)}\\
 & 0.4 &  & \makecell{0.026 (0.225)} & \makecell{0.021 (0.228)} &  & \makecell{0.010 (0.228)} & \makecell{0.008 (0.221)} &  & \makecell{0.016 (0.227)} & \makecell{0.018 (0.221)}\\
 & 0.8 &  & \makecell{0.000 (0.275)} & \makecell{0.008 (0.295)} &  & \makecell{0.019 (0.314)} & \makecell{0.017 (0.309)} &  & \makecell{-0.008 (0.252)} & \makecell{0.002 (0.261)}\\
 & 1.2 &  & \makecell{-0.017 (0.306)} & \makecell{-0.021 (0.349)} &  & \makecell{0.006 (0.307)} & \makecell{0.008 (0.304)} &  & \makecell{0.011 (0.297)} & \makecell{0.013 (0.316)}\\
 & 1.6 &  & \makecell{0.015 (0.387)} & \makecell{0.032 (0.436)} &  & \makecell{0.001 (0.334)} & \makecell{0.001 (0.334)} &  & \makecell{0.004 (0.365)} & \makecell{0.005 (0.356)}\\
\multirow{-6}{*}{\raggedright\arraybackslash 10} & 2.0 &  & \makecell{-0.032 (0.432)} & \makecell{0.015 (0.485)} &  & \makecell{-0.005 (0.334)} & \makecell{-0.004 (0.334)} &  & \makecell{-0.002 (0.366)} & \makecell{0.002 (0.359)}\\
\cmidrule{1-11}
 & 0.0 &  & \makecell{-0.006 (0.235)} & \makecell{-0.005 (0.235)} &  & \makecell{-0.004 (0.232)} & \makecell{-0.005 (0.233)} &  & \makecell{-0.004 (0.236)} & \makecell{-0.003 (0.235)}\\
 & 0.4 &  & \makecell{-0.005 (0.227)} & \makecell{-0.010 (0.224)} &  & \makecell{-0.005 (0.247)} & \makecell{-0.008 (0.246)} &  & \makecell{-0.004 (0.225)} & \makecell{-0.004 (0.227)}\\
 & 0.8 &  & \makecell{0.016 (0.244)} & \makecell{0.011 (0.271)} &  & \makecell{0.018 (0.320)} & \makecell{0.018 (0.318)} &  & \makecell{0.010 (0.244)} & \makecell{0.009 (0.255)}\\
 & 1.2 &  & \makecell{-0.016 (0.249)} & \makecell{-0.020 (0.336)} &  & \makecell{-0.011 (0.320)} & \makecell{-0.011 (0.321)} &  & \makecell{-0.016 (0.251)} & \makecell{-0.005 (0.299)}\\
 & 1.6 &  & \makecell{-0.012 (0.278)} & \makecell{-0.033 (0.406)} &  & \makecell{0.007 (0.332)} & \makecell{0.009 (0.333)} &  & \makecell{-0.005 (0.307)} & \makecell{0.004 (0.325)}\\
\multirow{-6}{*}{\raggedright\arraybackslash 50} & 2.0 &  & \makecell{0.004 (0.282)} & \makecell{-0.014 (0.502)} &  & \makecell{0.007 (0.325)} & \makecell{0.007 (0.325)} &  & \makecell{0.003 (0.350)} & \makecell{0.004 (0.346)}\\
\cmidrule{1-11}
 & 0.0 &  & \makecell{-0.020 (0.239)} & \makecell{-0.020 (0.241)} &  & \makecell{-0.018 (0.247)} & \makecell{-0.019 (0.245)} &  & \makecell{-0.019 (0.241)} & \makecell{-0.019 (0.243)}\\
 & 0.4 &  & \makecell{-0.003 (0.238)} & \makecell{-0.004 (0.236)} &  & \makecell{0.002 (0.247)} & \makecell{-0.002 (0.244)} &  & \makecell{-0.004 (0.234)} & \makecell{-0.007 (0.237)}\\
 & 0.8 &  & \makecell{0.006 (0.236)} & \makecell{0.006 (0.256)} &  & \makecell{-0.002 (0.327)} & \makecell{-0.002 (0.322)} &  & \makecell{0.001 (0.237)} & \makecell{0.003 (0.237)}\\
 & 1.2 &  & \makecell{0.002 (0.234)} & \makecell{-0.005 (0.307)} &  & \makecell{-0.009 (0.330)} & \makecell{-0.009 (0.329)} &  & \makecell{-0.003 (0.237)} & \makecell{0.005 (0.280)}\\
 & 1.6 &  & \makecell{0.009 (0.240)} & \makecell{0.020 (0.364)} &  & \makecell{-0.001 (0.329)} & \makecell{-0.002 (0.329)} &  & \makecell{-0.002 (0.275)} & \makecell{-0.015 (0.309)}\\
\multirow{-6}{*}{\raggedright\arraybackslash 100} & 2.0 &  & \makecell{0.023 (0.256)} & \makecell{0.049 (0.464)} &  & \makecell{0.038 (0.336)} & \makecell{0.038 (0.336)} &  & \makecell{0.028 (0.317)} & \makecell{0.064 (0.347)}\\
\bottomrule
\end{tabular}
}
\begin{notes}
Mean signed bias across covariate dimension $P$ and effect size $k$ for the methods of Table \ref{tab:competitors}. Means over $M=500$ replications at $N=1{,}000$; standard deviations in parentheses.
\end{notes}
\end{table}

\begin{table}[H]
\centering
\caption{Mean interval coverage for the methods of Table \ref{tab:competitors}.}
\label{tab:abl-cov}
\resizebox{\ifdim\width>\linewidth\linewidth\else\width\fi}{!}{
\begin{tabular}[t]{ll>{\raggedleft\arraybackslash}p{0.25cm}rr>{\raggedleft\arraybackslash}p{0.25cm}rr>{\raggedleft\arraybackslash}p{0.25cm}rr}
\toprule
\multicolumn{1}{c}{} & \multicolumn{1}{c}{} & \multicolumn{1}{c}{} & \multicolumn{2}{c}{\gls{bcf-iv}} & \multicolumn{1}{c}{} & \multicolumn{2}{c}{\gls{sbcf-iv}} & \multicolumn{1}{c}{} & \multicolumn{2}{c}{\gls{grf-iv}} \\
\cmidrule(l{3pt}r{3pt}){4-5} \cmidrule(l{3pt}r{3pt}){7-8} \cmidrule(l{3pt}r{3pt}){10-11}
$P$ & $k$ &  & \multicolumn{1}{c}{without cost} & \multicolumn{1}{c}{with cost} &  & \multicolumn{1}{c}{without cost} & \multicolumn{1}{c}{with cost} &  & \multicolumn{1}{c}{without cost} & \multicolumn{1}{c}{with cost}\\
\midrule
 & 0.0 &  & \makecell{0.949 (0.162)} & \makecell{0.941 (0.190)} &  & \makecell{0.955 (0.131)} & \makecell{0.948 (0.141)} &  & \makecell{0.946 (0.136)} & \makecell{0.945 (0.151)}\\
 & 0.4 &  & \makecell{0.790 (0.211)} & \makecell{0.767 (0.204)} &  & \makecell{0.817 (0.186)} & \makecell{0.815 (0.181)} &  & \makecell{0.848 (0.142)} & \makecell{0.825 (0.170)}\\
 & 0.8 &  & \makecell{0.690 (0.338)} & \makecell{0.673 (0.348)} &  & \makecell{0.842 (0.266)} & \makecell{0.813 (0.275)} &  & \makecell{0.746 (0.233)} & \makecell{0.770 (0.258)}\\
 & 1.2 &  & \makecell{0.653 (0.340)} & \makecell{0.739 (0.313)} &  & \makecell{0.944 (0.178)} & \makecell{0.937 (0.190)} &  & \makecell{0.785 (0.272)} & \makecell{0.844 (0.255)}\\
 & 1.6 &  & \makecell{0.564 (0.331)} & \makecell{0.704 (0.296)} &  & \makecell{0.961 (0.139)} & \makecell{0.961 (0.139)} &  & \makecell{0.834 (0.257)} & \makecell{0.874 (0.242)}\\
\multirow{-6}{*}{\raggedright\arraybackslash 10} & 2.0 &  & \makecell{0.478 (0.329)} & \makecell{0.657 (0.304)} &  & \makecell{0.951 (0.149)} & \makecell{0.950 (0.150)} &  & \makecell{0.885 (0.230)} & \makecell{0.898 (0.220)}\\
\cmidrule{1-11}
 & 0.0 &  & \makecell{0.944 (0.197)} & \makecell{0.955 (0.161)} &  & \makecell{0.954 (0.112)} & \makecell{0.955 (0.113)} &  & \makecell{0.957 (0.111)} & \makecell{0.954 (0.138)}\\
 & 0.4 &  & \makecell{0.715 (0.217)} & \makecell{0.787 (0.211)} &  & \makecell{0.910 (0.142)} & \makecell{0.910 (0.137)} &  & \makecell{0.832 (0.148)} & \makecell{0.816 (0.180)}\\
 & 0.8 &  & \makecell{0.462 (0.393)} & \makecell{0.724 (0.343)} &  & \makecell{0.935 (0.176)} & \makecell{0.931 (0.181)} &  & \makecell{0.592 (0.214)} & \makecell{0.698 (0.305)}\\
 & 1.2 &  & \makecell{0.416 (0.387)} & \makecell{0.682 (0.338)} &  & \makecell{0.955 (0.143)} & \makecell{0.951 (0.150)} &  & \makecell{0.596 (0.322)} & \makecell{0.720 (0.331)}\\
 & 1.6 &  & \makecell{0.317 (0.328)} & \makecell{0.587 (0.364)} &  & \makecell{0.954 (0.145)} & \makecell{0.953 (0.146)} &  & \makecell{0.674 (0.329)} & \makecell{0.770 (0.316)}\\
\multirow{-6}{*}{\raggedright\arraybackslash 50} & 2.0 &  & \makecell{0.216 (0.274)} & \makecell{0.513 (0.324)} &  & \makecell{0.965 (0.131)} & \makecell{0.965 (0.131)} &  & \makecell{0.723 (0.356)} & \makecell{0.809 (0.328)}\\
\cmidrule{1-11}
 & 0.0 &  & \makecell{0.938 (0.216)} & \makecell{0.943 (0.184)} &  & \makecell{0.950 (0.124)} & \makecell{0.948 (0.123)} &  & \makecell{0.949 (0.134)} & \makecell{0.935 (0.189)}\\
 & 0.4 &  & \makecell{0.658 (0.205)} & \makecell{0.767 (0.215)} &  & \makecell{0.908 (0.125)} & \makecell{0.900 (0.133)} &  & \makecell{0.838 (0.142)} & \makecell{0.793 (0.193)}\\
 & 0.8 &  & \makecell{0.249 (0.321)} & \makecell{0.577 (0.393)} &  & \makecell{0.923 (0.174)} & \makecell{0.913 (0.188)} &  & \makecell{0.556 (0.180)} & \makecell{0.573 (0.342)}\\
 & 1.2 &  & \makecell{0.264 (0.335)} & \makecell{0.567 (0.384)} &  & \makecell{0.947 (0.157)} & \makecell{0.947 (0.157)} &  & \makecell{0.449 (0.303)} & \makecell{0.601 (0.359)}\\
 & 1.6 &  & \makecell{0.196 (0.280)} & \makecell{0.486 (0.379)} &  & \makecell{0.940 (0.163)} & \makecell{0.939 (0.164)} &  & \makecell{0.512 (0.350)} & \makecell{0.639 (0.364)}\\
\multirow{-6}{*}{\raggedright\arraybackslash 100} & 2.0 &  & \makecell{0.150 (0.225)} & \makecell{0.477 (0.350)} &  & \makecell{0.950 (0.155)} & \makecell{0.950 (0.155)} &  & \makecell{0.632 (0.352)} & \makecell{0.728 (0.369)}\\
\bottomrule
\end{tabular}
}
\begin{notes}
Empirical coverage of nominal $95\%$ intervals for the methods of Table \ref{tab:competitors}, across covariate dimension $P$ and effect size $k$. Means over $M=500$ replications at $N=1{,}000$; standard deviations in parentheses.
\end{notes}
\end{table}

\begin{table}[H]
\centering
\caption{Mean interval width for the methods of Table \ref{tab:competitors}.}
\label{tab:abl-width}
\resizebox{\ifdim\width>\linewidth\linewidth\else\width\fi}{!}{
\begin{tabular}[t]{ll>{\raggedleft\arraybackslash}p{0.25cm}rr>{\raggedleft\arraybackslash}p{0.25cm}rr>{\raggedleft\arraybackslash}p{0.25cm}rr}
\toprule
\multicolumn{1}{c}{} & \multicolumn{1}{c}{} & \multicolumn{1}{c}{} & \multicolumn{2}{c}{\gls{bcf-iv}} & \multicolumn{1}{c}{} & \multicolumn{2}{c}{\gls{sbcf-iv}} & \multicolumn{1}{c}{} & \multicolumn{2}{c}{\gls{grf-iv}} \\
\cmidrule(l{3pt}r{3pt}){4-5} \cmidrule(l{3pt}r{3pt}){7-8} \cmidrule(l{3pt}r{3pt}){10-11}
$P$ & $k$ &  & \multicolumn{1}{c}{without cost} & \multicolumn{1}{c}{with cost} &  & \multicolumn{1}{c}{without cost} & \multicolumn{1}{c}{with cost} &  & \multicolumn{1}{c}{without cost} & \multicolumn{1}{c}{with cost}\\
\midrule
 & 0.0 &  & \makecell{1.442 (0.209)} & \makecell{1.343 (0.237)} &  & \makecell{1.575 (0.196)} & \makecell{1.588 (0.182)} &  & \makecell{1.721 (0.123)} & \makecell{1.570 (0.212)}\\
 & 0.4 &  & \makecell{1.496 (0.213)} & \makecell{1.430 (0.225)} &  & \makecell{1.633 (0.200)} & \makecell{1.637 (0.186)} &  & \makecell{1.741 (0.126)} & \makecell{1.626 (0.189)}\\
 & 0.8 &  & \makecell{1.580 (0.212)} & \makecell{1.541 (0.241)} &  & \makecell{1.781 (0.181)} & \makecell{1.776 (0.177)} &  & \makecell{1.788 (0.116)} & \makecell{1.752 (0.168)}\\
 & 1.2 &  & \makecell{1.640 (0.216)} & \makecell{1.635 (0.206)} &  & \makecell{1.828 (0.141)} & \makecell{1.827 (0.144)} &  & \makecell{1.808 (0.144)} & \makecell{1.790 (0.174)}\\
 & 1.6 &  & \makecell{1.662 (0.184)} & \makecell{1.663 (0.184)} &  & \makecell{1.831 (0.138)} & \makecell{1.831 (0.138)} &  & \makecell{1.805 (0.162)} & \makecell{1.781 (0.185)}\\
\multirow{-6}{*}{\raggedright\arraybackslash 10} & 2.0 &  & \makecell{1.683 (0.183)} & \makecell{1.672 (0.191)} &  & \makecell{1.822 (0.139)} & \makecell{1.822 (0.139)} &  & \makecell{1.800 (0.160)} & \makecell{1.796 (0.162)}\\
\cmidrule{1-11}
 & 0.0 &  & \makecell{1.375 (0.222)} & \makecell{1.383 (0.260)} &  & \makecell{1.792 (0.137)} & \makecell{1.773 (0.147)} &  & \makecell{1.759 (0.128)} & \makecell{1.529 (0.260)}\\
 & 0.4 &  & \makecell{1.369 (0.227)} & \makecell{1.411 (0.240)} &  & \makecell{1.798 (0.145)} & \makecell{1.781 (0.157)} &  & \makecell{1.759 (0.118)} & \makecell{1.548 (0.260)}\\
 & 0.8 &  & \makecell{1.427 (0.236)} & \makecell{1.481 (0.228)} &  & \makecell{1.808 (0.149)} & \makecell{1.807 (0.151)} &  & \makecell{1.778 (0.115)} & \makecell{1.632 (0.233)}\\
 & 1.2 &  & \makecell{1.507 (0.272)} & \makecell{1.543 (0.226)} &  & \makecell{1.841 (0.140)} & \makecell{1.841 (0.140)} &  & \makecell{1.815 (0.140)} & \makecell{1.675 (0.244)}\\
 & 1.6 &  & \makecell{1.521 (0.283)} & \makecell{1.558 (0.233)} &  & \makecell{1.830 (0.142)} & \makecell{1.830 (0.142)} &  & \makecell{1.814 (0.155)} & \makecell{1.724 (0.226)}\\
\multirow{-6}{*}{\raggedright\arraybackslash 50} & 2.0 &  & \makecell{1.573 (0.256)} & \makecell{1.581 (0.221)} &  & \makecell{1.830 (0.144)} & \makecell{1.830 (0.144)} &  & \makecell{1.795 (0.184)} & \makecell{1.749 (0.208)}\\
\cmidrule{1-11}
 & 0.0 &  & \makecell{1.308 (0.216)} & \makecell{1.344 (0.280)} &  & \makecell{1.801 (0.131)} & \makecell{1.786 (0.150)} &  & \makecell{1.750 (0.118)} & \makecell{1.444 (0.271)}\\
 & 0.4 &  & \makecell{1.309 (0.259)} & \makecell{1.349 (0.258)} &  & \makecell{1.815 (0.119)} & \makecell{1.803 (0.137)} &  & \makecell{1.765 (0.116)} & \makecell{1.463 (0.274)}\\
 & 0.8 &  & \makecell{1.335 (0.296)} & \makecell{1.402 (0.280)} &  & \makecell{1.826 (0.142)} & \makecell{1.822 (0.147)} &  & \makecell{1.788 (0.116)} & \makecell{1.479 (0.285)}\\
 & 1.2 &  & \makecell{1.425 (0.280)} & \makecell{1.455 (0.252)} &  & \makecell{1.839 (0.141)} & \makecell{1.839 (0.139)} &  & \makecell{1.818 (0.119)} & \makecell{1.587 (0.260)}\\
 & 1.6 &  & \makecell{1.446 (0.261)} & \makecell{1.485 (0.277)} &  & \makecell{1.848 (0.146)} & \makecell{1.848 (0.146)} &  & \makecell{1.843 (0.147)} & \makecell{1.665 (0.245)}\\
\multirow{-6}{*}{\raggedright\arraybackslash 100} & 2.0 &  & \makecell{1.512 (0.293)} & \makecell{1.554 (0.240)} &  & \makecell{1.831 (0.146)} & \makecell{1.831 (0.146)} &  & \makecell{1.847 (0.151)} & \makecell{1.725 (0.222)}\\
\bottomrule
\end{tabular}
}
\begin{notes}
Mean $95\%$ interval width across covariate dimension $P$ and effect size $k$ for the methods of Table \ref{tab:competitors}. Means over $M=500$ replications at $N=1{,}000$; standard deviations in parentheses. At a given $(P,k)$, \gls{sbcf-iv} and \gls{grf-iv} have comparable widths despite large coverage differences (cf.\ Table~\ref{tab:abl-cov}), so the coverage gap is not a width artifact.
\end{notes}
\end{table}

\section{Supplementary materials for empirical applications}

\subsection{Oregon Health Insurance Experiment (OHIE)}
\label{sec:OHIE}

We evaluate \gls{sbcf-iv} on the OHIE data, replicating the empirical setup of \citet{johnson_detecting_2022}. The OHIE assembled administrative records (hospital discharges, credit reports, and mortality), survey data (healthcare utilization, financial strain, and self-reported health), and pre-randomization demographic information. \citet{johnson_detecting_2022} match individuals on the pre-randomization demographic variables, including sex, age, language preference at lottery sign-up (English or other), Metropolitan Statistical Area (MSA) residency, education (less than high school, high school diploma or GED, vocational or two-year degree, four-year degree or higher), and self-identified race. Because Hispanic and Black self-identification as well as education contained missing values, missingness indicators are included in the matching set. The outcome $Y_i$ is the number of days in the past 30 days on which poor physical or mental health did not impair usual activities. The endogenous treatment indicates Medicaid enrollment and the instrument indicates lottery selection \citep{finkelstein_oregon_2012, johnson_detecting_2022}.

Figure \ref{fig:tree_subgroups} shows that \gls{sbcf-iv} partitions on age at the root and on English (lottery sign-up language preference) within both age branches. The education covariate appears as a tertiary split within the younger non-English-preferring subtree. We receive seven leaves with conditional CACEs spanning $-6.017$ to $3.630$ days, of which only one subgroup has a statistically significant $p$-value at the 10\% level: English-preferring compliers aged between 38 and 59, with $\widehat{\tau}^{\mathrm{CACE}}(x) = 2.263$ days (adjusted $p = 0.0945$). This leaf carries roughly half of the inference sample and a complier share of $0.324$. Its positive sign and the middle-age, English-preferring composition closely mirror the Medicaid treatment effects reported by \citet{johnson_detecting_2022}.

The remaining six leaves carry inference-sample shares between 4\% and 14\% and produce estimates that should be read as exploratory rather than substantive findings. Two leaves carry negative point estimates ($\widehat{\tau}^{\mathrm{CACE}}(x) = -6.017$ for non-English-preferring compliers aged $\geq 60$ (4\% share) and $\widehat{\tau}^{\mathrm{CACE}}(x) = -1.403$ for non-English-preferring compliers aged 26--37 with above-median education (9\%)), but are not statistically significant. We do not interpret them as evidence of adverse Medicaid effects, as they are most plausibly small-sample noise in thin demographic strata. The non-significant positive leaves at $\widehat{\tau}^{\mathrm{CACE}}(x) = 3.630$ (English-preferring under 38, 11\%) and $\widehat{\tau}^{\mathrm{CACE}}(x) = 1.563$ (non-English-preferring aged 26--37 with low education, 14\%) suggest that positive effects may extend beyond the central English-preferring middle-aged subgroup, but are not statistically resolvable at this sample size and tree depth. 

Relative to the two complier subgroups identified by \citet{johnson_detecting_2022} ((i) non-Asian, English-preferring males over age 36, and (ii) compliers under age 36 who prefer English with at most a high-school education or GED), \gls{sbcf-iv} recovers a partition close to a coarsened version of their subgroup (i): English preference and middle age are the leading dimensions, but our discovery tree does not split on sex or race and therefore cannot reproduce their male-only refinement. The qualitative location of the dominant positive Medicaid effect (older, English-preferring compliers) is preserved across both methods, while the demographic resolution differs.

\subsection{401(k) eligibility and retirement savings}
\label{sec:401k}

We apply \gls{sbcf-iv} to the 401(k) data of \citet{chernozhukov_doubledebiased_2018} and \citet{bach_2024}, an instrumental variable analysis benchmark in which the endogenous decision to participate in an employer-sponsored retirement plan is instrumented by eligibility. \citet{belloni_program_2017} provide a high-dimensional treatment of this design. Conditional on a set of job-choice covariates, eligibility is plausibly unconfounded and satisfies the exclusion restriction, while participation remains confounded by unobserved preferences for saving. In Subsection \ref{sec:401k}, we use \gls{sbcf-iv} to discover the subpopulations of eligible households for which participation has the largest impact on net financial assets.

The data are drawn from the 1991 Survey of Income and Program Participation, which has become a canonical testbed for instrumental variable methods in labor and public economics. The outcome $Y_i$ is a household's net financial assets (\gls{IRA} and 401(k) balances, checking accounts, savings bonds, and related holdings, net of non-mortgage debt). The endogenous treatment $W_i$ is an indicator for participation in a 401(k) plan, while the instrument $Z_i$ is an indicator for eligibility to enroll in such a plan through one's employer. 
Following \citet{poterba_401k_1992} and \citet{poterba_401k_1995}, the identifying argument is that, conditional on a small set of job-choice covariates $X_i$ (notably income, together with age, family size, education, marital status, two-earner and defined-benefit pension status, \gls{IRA} participation, and home ownership), whether an employer offered a 401(k) plan around the time of the survey can be treated as exogenous to the household's saving behavior, while the decision to actually participate conditional on eligibility remains endogenous. This places us in the irregular assignment mechanism of Section~\ref{ch:PO_irreg}: $Z_i$ is plausibly unconfounded given $X_i$ and satisfies the exclusion restriction (eligibility affects assets only through participation), whereas $W_i$ is confounded by unobserved preferences for saving. The \gls{sbcf-iv} algorithm is used to discover for which subpopulations of eligible households participation has the largest impact on net financial assets.

We show in Figure \ref{fig:tree_subgroups_401k} that \gls{sbcf-iv} partitions the data primarily on the variable \textit{inc} (household income), which appears at the root and at every internal node except one. The variable \textit{pira} (\gls{IRA} holdings) enters only as a secondary split within a thin upper-middle-income window. None of the other portfolio or household-structure binaries (\textit{db}, \textit{hown}, \textit{marr}, \textit{twoearn}) survive into the depth-four discovery tree. The dominance of income aligns \gls{sbcf-iv}'s discovered structure with the lifecycle and earnings-gradient emphasis of the classical 401(k) saving literature \citep{poterba_401k_1992, poterba_401k_1995, engen_illusory_1996, engen_effects_2000}: heterogeneity in $\tau^{\mathrm{CACE}}(x)$ is concentrated along the income margin, with effects rising broadly from the lower-middle-income mass into the upper-middle-income range before becoming statistically indistinguishable from zero in the thin high-income tail.

Six of the seven leaves carry inference-sample shares below 7\%, and four have shares at or below 1\%. These small leaves correspond to a few dozen households each and should be read as small-sample artifacts rather than substantive heterogeneity: none of them display statistically significant Holm-adjusted $p$-values at the $10\%$ level. Only two leaves remain significant at the 10\% level after adjustment: the bulk subgroup with household income lower than $68{,}810$ (89\% of observations in the inference sample and $\widehat{\tau}^{\mathrm{CACE}}(x) = \$17{,}818$ for this subgroup), and the small upper-middle-income subgroup with household income between $92{,}690$ and $110{,}400$ (2.6\%, $\widehat{\tau}^{\mathrm{CACE}}(x) = \$53{,}422$). 
The qualitative ordering (larger effects in the upper-middle-income subgroup than in the lower-income mass) is consistent with the finding in \citet{engen_effects_2000} that 401(k) effects on net financial assets are larger for higher-earnings groups, who hold the bulk of 401(k) assets and for whom contributions are most likely to substitute from taxable accounts rather than represent new saving \citep{chetty_active_2014}.

Two structural parallels with OHIE in Section \ref{sec:OHIE} persist. Complier shares vary only modestly relative to the order-of-magnitude changes in the conditional CACE estimates. Heterogeneity is identified almost entirely from variation in the conditional \gls{itt}, matching the simulation evidence in Section~\ref{ch:sim_study} that \gls{sbcf-iv} reliably recovers ITT-driven partitions. Moreover, both statistically significant leaves carry estimates above the cross-fitted \gls{cace} on the same trimmed sample ($\approx \$9{,}000$--$\$13{,}000$ in \citet{chernozhukov_doubledebiased_2018}): the bulk leaf at \$17{,}818 lies a modest \$5{,}000--\$9{,}000 above that identification benchmark, while the upper-middle-income leaf at \$53{,}422 lies further above and should be read as partly inflated by sample selection. The residual upward bias is consistent with longstanding concerns that 401(k) effects on net financial assets are inflated by selection on saver type \citep{engen_illusory_1996, engen_effects_2000} and by mechanical accumulation in tax-favored accounts that does not represent net new saving \citep{chetty_active_2014}. We therefore read the leaf estimates of the statistically significant subgroups as mildly upward-biased point estimates of the structural participation effect. Further, we treat the high-income tail leaves as exploratory and emphasize the shape of the discovered partition (income-dominated, with effects rising through middle income before destabilizing at the top) over the effect estimate magnitudes of any individual subgroup.


\begin{sidewaysfigure}
   \centering
   \caption{Discovered partition from applying \gls{sbcf-iv} with Algorithm \ref{alg:sbcf-iv} to the Oregon Health Insurance Experiment (OHIE).}
   \label{fig:tree_subgroups}
    \begin{singlespace}
	
	\begin{small}
		
		\resizebox{\textwidth}{!}{
			
			\begin{tikzpicture}[
				level distance=55mm,
				level 1/.style={sibling distance=230mm}, 
				level 2/.style={sibling distance=100mm}, 
				level 3/.style={sibling distance=85mm}, 
				level 4/.style={sibling distance=57mm},
				every node/.style={align=center},
				decisionbox/.style={
					rectangle, draw=blue!60, thick, fill=blue!10,
					align=center,
					inner xsep=-2.5pt, inner ysep=4pt,
					font=\Large
				},
				leaf/.style={
					rectangle, draw=black!60, thick, fill=gray!10,
					align=center, 
					inner xsep=-2.5pt, inner ysep=4pt,
					font=\Large
				},
				highlightleaf/.style={
					rectangle, draw=red!80!black, thick, fill=gold!75,
					align=center,
					inner xsep=-2.5pt, inner ysep=4pt,
					font=\Large
				},
				edge from parent path={
					(\tikzparentnode.south) -- ++(0,-10pt) -| (\tikzchildnode.north)
				}
				]
				
				\node {
					\begin{tikzpicture}[baseline]
						\node[decisionbox] (box) {
							\begin{tabular}{>{\raggedright\arraybackslash}m{3cm} >{\raggedleft\arraybackslash}m{1.5cm}}
								$\widehat{\tau}^{\text{CACE}}(x)$ & 1.711 \\
								$\mathcal{I}_{\text{inf}}$ & $100 \%$ \\
								$\widehat{\pi}_C(x)$ & 0.294 \\
							\end{tabular}
						};
						\node[below=5pt of box] (rule) {\Large Age $\le$ 37.5};
					\end{tikzpicture}
				}
				child {node {
						\begin{tikzpicture}[baseline]
							\node[decisionbox] (box) {
								\begin{tabular}{>{\raggedright\arraybackslash}m{3cm} >{\raggedleft\arraybackslash}m{1.5cm}}
									$\widehat{\tau}^{\text{CACE}}(x)$ & 2.174 \\
									$\mathcal{I}_{\text{inf}}$                  & $63 \%$ \\
									$\widehat{\pi}_C(x)$         & 0.307 \\
								\end{tabular}
							};
							\node[below=5pt of box] {\Large English $=$ 0};
						\end{tikzpicture}
					}
					child { child { child {node[leaf] {
									\begin{tabular}{>{\raggedright\arraybackslash}m{3.1cm} >{\raggedleft\arraybackslash}m{1.5cm}}
										$\widehat{\tau}^{\text{CACE}}(x)$ 	& $0.441$ \\
										$\mathcal{I}_{\text{inf}}$ 					& $5 \%$ \\
										$\widehat{\pi}_C(x)$ 			& $0.202$ \\
									\end{tabular}	 
								}
					}}}
					child {
						node {
							\begin{tikzpicture}[baseline]
								\node[decisionbox] (box) {
									\begin{tabular}{>{\raggedright\arraybackslash}m{3.1cm} >{\raggedleft\arraybackslash}m{1.5cm}}
										$\widehat{\tau}^{\text{CACE}}(x)$ 	& 2.136 \\
										$\mathcal{I}_{\text{inf}}$ 					& $59 \%$ \\
										$\widehat{\pi}_C(x)$ 			& 0.315 \\
									\end{tabular}
								};
								\node[below=5pt of box] {\Large Age $<$ 26};
							\end{tikzpicture}
						}
						child{child {node[leaf] {
									\begin{tabular}{>{\raggedright\arraybackslash}m{3.1cm} >{\raggedleft\arraybackslash}m{1.5cm}}
										$\widehat{\tau}^{\text{CACE}}(x)$	& 3.630 \\
										$\mathcal{I}_{\text{inf}}$					& $11 \%$ \\
										$\widehat{\pi}_C(x)$ 			& 0.247\\
									\end{tabular}
							}} 
						}
						child {node {
								\begin{tikzpicture}[baseline]
									\node[decisionbox] (box) {
										\begin{tabular}{>{\raggedright\arraybackslash}m{3.1cm} >{\raggedleft\arraybackslash}m{1.5cm}}
											$\widehat{\tau}^{\text{CACE}}(x)$	& 0.520 \\
											$\mathcal{I}_{\text{inf}}$					& $23 \%$ \\
											$\widehat{\pi}_C(x)$ 			& 0.298 \\
										\end{tabular}
									};
									\node[below=5pt of box] {\Large Education $<$ 0.67};
								\end{tikzpicture}
							}
							child {node[leaf] {
									\begin{tabular}{>{\raggedright\arraybackslash}m{3.1cm} >{\raggedleft\arraybackslash}m{1.5cm}}
										$\widehat{\tau}^{\text{CACE}}(x)$	& $1.563$\\
										$\mathcal{I}_{\text{inf}}$					& $14 \%$ \\
										$\widehat{\pi}_C(x)$ 			& 0.309 \\
									\end{tabular}
								}
							}
							child {node[leaf] {
									\begin{tabular}{>{\raggedright\arraybackslash}m{3.1cm} >{\raggedleft\arraybackslash}m{1.5cm}}
										$\widehat{\tau}^{\text{CACE}}(x)$ 	& $-1.403$\\
										$\mathcal{I}_{\text{inf}}$					& $9 \%$ \\
										$\widehat{\pi}_C(x)$ 			& 0.281 \\
									\end{tabular}
								}
							}
						}
					}
				edge from parent node[near start, right, font=\Large, draw=black, thick, fill=white, inner sep=3pt, rounded corners=2pt]{TRUE}
				}
				child {
					node {
						\begin{tikzpicture}[baseline]
							\node[decisionbox] (box) {
								\begin{tabular}{>{\raggedright\arraybackslash}m{3.1cm} >{\raggedleft\arraybackslash}m{1.5cm}}
									$\widehat{\tau}^{\text{CACE}}(x)$ 	& 0.977 \\
									$\mathcal{I}_{\text{inf}}$					& $37 \%$ \\
									$\widehat{\pi}_C(x)$ 			& 0.273 \\
								\end{tabular}
							};
							\node[below=5pt of box] {\Large English $=$ 0};
						\end{tikzpicture}
					}
					child{ child{ child {node[leaf] {
									\begin{tabular}{>{\raggedright\arraybackslash}m{3.1cm} >{\raggedleft\arraybackslash}m{1.5cm}}
										$\widehat{\tau}^{\text{CACE}}(x)$	& -6.017 \\
										$\mathcal{I}_{\text{inf}}$					& $4 \%$ \\
										$\widehat{\pi}_C(x)$ 			& 0.192 \\
									\end{tabular}
								}
					}}}
					child{node {
							\begin{tikzpicture}[baseline]
								\node[decisionbox] (box) {
									\begin{tabular}{>{\raggedright\arraybackslash}m{3.1cm} >{\raggedleft\arraybackslash}m{1.5cm}}
										$\widehat{\tau}^{\text{CACE}}(x)$	& 1.415 \\
										$\mathcal{I}_{\text{inf}}$					& $33 \%$ \\
										$\widehat{\pi}_C(x)$ 			& 0.282 \\
									\end{tabular}
								};
								\node[below=5pt of box] {\Large Age $\geq$ 60};
							\end{tikzpicture}
						}
						child{child {node[leaf] {
									\begin{tabular}{>{\raggedright\arraybackslash}m{3cm} >{\raggedleft\arraybackslash}m{1.5cm}}
										$\widehat{\tau}^{\text{CACE}}(x)$	& $0.749$\\
										$\mathcal{I}_{\text{inf}}$					& $7 \%$ \\
										$\widehat{\pi}_C(x)$ 			& 0.249 \\
									\end{tabular}
								}
						}}
						child{child {node[leaf] { 
									\begin{tabular}{>{\raggedright\arraybackslash}m{3cm} >{\raggedleft\arraybackslash}m{1.5cm}}
										$\widehat{\tau}^{\text{CACE}}(x)$	& $2.263^{*}$\\
										$\mathcal{I}_{\text{inf}}$					& $52 \%$ \\
										$\widehat{\pi}_C(x)$ 			& 0.324 \\
									\end{tabular}%
								}
						}}
					}	
				edge from parent node[near start, left, font=\Large, draw=black, thick, fill=white, inner sep=3pt, rounded corners=2pt]{FALSE}
				};	
		\end{tikzpicture}}

	\end{small}
	
\end{singlespace}
    \begin{notes}
    Each internal and terminal node reports (top) the estimated conditional CACE $\widehat\tau^{\text{CACE}}(x)$, (middle) the share of the inference sample $\mathcal{I}_{\text{inf}}$ falling into the node, and (bottom) the estimated complier share $\widehat\pi_C(x)$. 
    Splits are made on \textit{Age}, \textit{English}, and \textit{Education}, with the splitting rule indicated at each branch. At terminal nodes, $\widehat\tau^{\text{CACE}}(x)$ coincides with the subgroup-level 2SLS estimator $\widehat\tau^{\,\text{2SLS}}_{\mathbb{X}_j}$ in Definition \ref{defn:cCACE_estimator}; an asterisk marks leaves whose subgroup CACE is statistically significant at the 10\% level.
    \end{notes}
\end{sidewaysfigure}

\clearpage
\newpage

\begin{sidewaysfigure}
   \centering
   \caption{Discovered partition from applying \gls{sbcf-iv} with Algorithm \ref{alg:sbcf-iv} to the 401(k) dataset.} 
    \label{fig:tree_subgroups_401k}
    \begin{singlespace}
	
	\begin{small}
		
		\resizebox{\textwidth}{!}{
			
			\begin{tikzpicture}[
				level distance=55mm,
				level 1/.style={sibling distance=325mm},
				level 2/.style={sibling distance=200mm},
				level 3/.style={sibling distance=100mm},
				level 4/.style={sibling distance=50mm},
				every node/.style={align=center},
				decisionbox/.style={
					rectangle, draw=blue!60, thick, fill=blue!10,
					align=center,
					inner xsep=-2.5pt, inner ysep=4pt,
					font=\Large
				},
				leaf/.style={
					rectangle, draw=black!60, thick, fill=gray!10,
					align=center, 
					inner xsep=-2.5pt, inner ysep=4pt,
					font=\Large
				},
				highlightleaf/.style={
					rectangle, draw=red!80!black, thick, fill=gold!75,
					align=center,
					inner xsep=-2.5pt, inner ysep=4pt,
					font=\Large
				},
				edge from parent path={
					(\tikzparentnode.south) -- ++(0,-10pt) -| (\tikzchildnode.north)
				}
				]
				
				\node {
					\begin{tikzpicture}[baseline]
						\node[decisionbox] (box) {
							\begin{tabular}{>{\raggedright\arraybackslash}m{2cm} >{\raggedleft\arraybackslash}m{1.7cm}}
								$\widehat{\tau}^{CACE}(x)$ & 27,826 \\
								$\mathcal{I}_{\text{inf}}$ & $100 \%$ \\
								$\widehat{\pi}_C(x)$ & 0.712 \\
							\end{tabular}
						};
						\node[below=5pt of box] (rule) {\Large inc $<$ 147{,}700};
					\end{tikzpicture}
				}
				child {node {
						\begin{tikzpicture}[baseline]
							\node[decisionbox] (box) {
								\begin{tabular}{>{\raggedright\arraybackslash}m{2cm} >{\raggedleft\arraybackslash}m{1.7cm}}
									$\widehat{\tau}^{CACE}(x)$ & 26,470 \\
									$\mathcal{I}_{\text{inf}}$                  & $99.7 \%$ \\
									$\widehat{\pi}_C(x)$         & 0.711 \\
								\end{tabular}
							};
							\node[below=5pt of box] {\Large inc $\geq$ 110{,}400};
						\end{tikzpicture}
					}
					child {node {
							\begin{tikzpicture}[baseline]
								\node[decisionbox] (box) {
									\begin{tabular}{>{\raggedright\arraybackslash}m{2cm} >{\raggedleft\arraybackslash}m{1.7cm}}
										$\widehat{\tau}^{CACE}(x)$ 	& 43,358 \\
										$\mathcal{I}_{\text{inf}}$ 					& $1.2 \%$ \\
										$\widehat{\pi}_C(x)$ 			& 0.839 \\
									\end{tabular}
								};
								\node[below=5pt of box] {\Large inc $\geq$ 135{,}000};
							\end{tikzpicture}
						}
						child { child {node[leaf] {
									\begin{tabular}{>{\raggedright\arraybackslash}m{2cm} >{\raggedleft\arraybackslash}m{1.7cm}}
										$\widehat{\tau}^{CACE}(x)$ 	& 144,376 \\
										$\mathcal{I}_{\text{inf}}$ 					& $0.3 \%$ \\
										$\widehat{\pi}_C(x)$ 			& 1.000 \\
									\end{tabular}	 
								}
						}}
						child {node {
								\begin{tikzpicture}[baseline]
									\node[decisionbox] (box) {
										\begin{tabular}{>{\raggedright\arraybackslash}m{2cm} >{\raggedleft\arraybackslash}m{1.7cm}}
											$\widehat{\tau}^{CACE}(x)$ 	& 7,643 \\
											$\mathcal{I}_{\text{inf}}$ 					& $0.9 \%$ \\
											$\widehat{\pi}_C(x)$ 			& 0.762 \\
										\end{tabular}
									};
									\node[below=5pt of box] {\Large pira $=$ 1};
								\end{tikzpicture}
							}
							child {node[leaf] {
									\begin{tabular}{>{\raggedright\arraybackslash}m{2cm} >{\raggedleft\arraybackslash}m{1.7cm}}
										$\widehat{\tau}^{CACE}(x)$	& $-6{,}783$ \\
										$\mathcal{I}_{\text{inf}}$					& $0.7 \%$ \\
										$\widehat{\pi}_C(x)$ 			& 0.824 \\
									\end{tabular}
								}
							}
							child {node[leaf] {
									\begin{tabular}{>{\raggedright\arraybackslash}m{2cm} >{\raggedleft\arraybackslash}m{1.7cm}}
										$\widehat{\tau}^{CACE}(x)$ 	& $-29{,}366$ \\
										$\mathcal{I}_{\text{inf}}$					& $0.2 \%$ \\
										$\widehat{\pi}_C(x)$ 			& 0.500 \\
									\end{tabular}
								}
							}
						}
					}
					child {node {
							\begin{tikzpicture}[baseline]
								\node[decisionbox] (box) {
									\begin{tabular}{>{\raggedright\arraybackslash}m{2cm} >{\raggedleft\arraybackslash}m{1.7cm}}
										$\widehat{\tau}^{CACE}(x)$ 	& 25,327 \\
										$\mathcal{I}_{\text{inf}}$ 					& $98.5 \%$ \\
										$\widehat{\pi}_C(x)$ 			& 0.709 \\
									\end{tabular}
								};
								\node[below=5pt of box] {\Large inc $\geq$ 92{,}690};
							\end{tikzpicture}
						}
						child { child {node[leaf] {
									\begin{tabular}{>{\raggedright\arraybackslash}m{2cm} >{\raggedleft\arraybackslash}m{1.7cm}}
										$\widehat{\tau}^{CACE}(x)$	& $53{,}422^{*}$ \\
										$\mathcal{I}_{\text{inf}}$					& $2.6 \%$ \\
										$\widehat{\pi}_C(x)$ 			& 0.913 \\
									\end{tabular}
								}
						}}
						child {node {
								\begin{tikzpicture}[baseline]
									\node[decisionbox] (box) {
										\begin{tabular}{>{\raggedright\arraybackslash}m{2cm} >{\raggedleft\arraybackslash}m{1.7cm}}
											$\widehat{\tau}^{CACE}(x)$	& 22,670 \\
											$\mathcal{I}_{\text{inf}}$					& $95.9 \%$ \\
											$\widehat{\pi}_C(x)$ 			& 0.700 \\
										\end{tabular}
									};
									\node[below=5pt of box] {\Large inc $<$ 68{,}810};
								\end{tikzpicture}
							}
							child {node[leaf] {
									\begin{tabular}{>{\raggedright\arraybackslash}m{2cm} >{\raggedleft\arraybackslash}m{1.7cm}}
										$\widehat{\tau}^{CACE}(x)$	& $17{,}818^{*}$ \\
										$\mathcal{I}_{\text{inf}}$					& $89.1 \%$ \\
										$\widehat{\pi}_C(x)$ 			& 0.689 \\
									\end{tabular}
								}
							}
							child {node[leaf] {
									\begin{tabular}{>{\raggedright\arraybackslash}m{2cm} >{\raggedleft\arraybackslash}m{1.7cm}}
										$\widehat{\tau}^{CACE}(x)$ 	& 25,271 \\
										$\mathcal{I}_{\text{inf}}$					& $6.8 \%$ \\
										$\widehat{\pi}_C(x)$ 			& 0.789 \\
									\end{tabular}
								}
							}
						}
					}
				edge from parent node[near start, right, font=\Large, draw=black, thick, fill=white, inner sep=3pt, rounded corners=2pt]{TRUE}
				}
				child {
					child{ child{ child {node[leaf] {
									\begin{tabular}{>{\raggedright\arraybackslash}m{2cm} >{\raggedleft\arraybackslash}m{1.7cm}}
										$\widehat{\tau}^{CACE}(x)$	& 250,409 \\
										$\mathcal{I}_{\text{inf}}$					& $0.3 \%$ \\
										$\widehat{\pi}_C(x)$ 			& 0.857 \\
									\end{tabular}
								}
					}}}
				edge from parent node[near start, left, font=\Large, draw=black, thick, fill=white, inner sep=3pt, rounded corners=2pt]{FALSE}
				};	
		\end{tikzpicture}}

	\end{small}
	
\end{singlespace}
    \begin{notes}
        Each internal and terminal node reports (top) the estimated conditional CACE $\widehat\tau^{\text{CACE}}(x)$, (middle) the share of the inference sample $\mathcal{I}_{\text{inf}}$ falling into the node, and (bottom) the estimated complier share $\widehat\pi_C(x)$. At terminal nodes, $\widehat\tau^{\text{CACE}}(x)$ coincides with the subgroup-level 2SLS estimator $\widehat\tau^{\,\text{2SLS}}_{\mathbb{X}_j}$ in Definition \ref{defn:cCACE_estimator}. An asterisk marks leaves whose subgroup CACE is statistically significant at the 10\% level.
    \end{notes}
\end{sidewaysfigure}

\clearpage
\newpage

\end{document}